%% file: main.tex
\documentclass[a4paper,UKenglish,cleveref, autoref, thm-restate, subfigure]{lipics-v2021}

\pdfoutput=1 
\hideLIPIcs  

\title{Set Automata and Limits of Decidability of Two-Variable Logic on Data Words} 

\author{Shibashis Guha}{Tata Institute of Fundamental Research}{shibashis@tifr.res.in}{}{Partially supported by CEFIPRA project no.~7302-I and by the Department of Atomic Energy, Government of India, under project no.~RTI4014}

\author{Amaldev Manuel}{Indian Institute of Technology Goa}{amal@iitgoa.ac.in}{}{}

\author{S P {Rishal}}{Sarva Labs}{sprishal@gmail.com}{}{}

\authorrunning{S. Guha, A. Manuel, and S P. Rishal} 

\Copyright{Jane Open Access and Joan R. Public} 

\ccsdesc[300]{Theory of computation~Logic and verification}

\keywords{Data words, Two-variable FO, Class automata, Finite semigroups, Decidability} 

\relatedversion{} 

\nolinenumbers 

\EventEditors{John Q. Open and Joan R. Access}
\EventNoEds{2}
\EventLongTitle{42nd Conference on Very Important Topics (CVIT 2016)}
\EventShortTitle{CVIT 2016}
\EventAcronym{CVIT}
\EventYear{2016}
\EventDate{December 24--27, 2016}
\EventLocation{Little Whinging, United Kingdom}
\EventLogo{}
\SeriesVolume{42}
\ArticleNo{23}

\usepackage{centernot}
\usepackage{cite}
\usepackage{graphicx}
\usepackage{textcomp}
\usepackage{xcolor}
 \usepackage{tikz}
 \usepackage{tikzit}
 \input{styles.tikzstyles}

 \usepackage{stmaryrd}
 \usepackage{mathtools}
 \usepackage{caption}
 \usepackage[disable]{todonotes}
 \usepackage{bbm}
 \usepackage{setspace}
\usepackage[bibliography=common]{apxproof} 
\usepackage[commandnameprefix=always]{changes}
\definechangesauthor[name={Shibashis}, color=teal]{SG}

\def\BibTeX{{\rm B\kern-.05em{\sc i\kern-.025em b}\kern-.08em
    T\kern-.1667em\lower.7ex\hbox{E}\kern-.125emX}}

\input{macro}

\newcommand{\auto}{\mathcal{A}}
\newcommand{\fotwo}{\mathrm{FO}^2}

\newcommand{\bftwo}{\mathbbm{2}}

\newcommand{\sem}[1]{\llbracket#1\rrbracket}

\newcommand{\defeq}{\triangleq}

\makeatletter
\newcommand{\farsim}{\mathbin{\mathpalette\make@circled\sim}}
\newcommand{\make@circled}[2]{%
  \ooalign{$\m@th#1\smallbigcirc{#1}$\cr\hidewidth$\m@th#1#2$\hidewidth\cr}%
}
\newcommand{\smallbigcirc}[1]{%
  \vcenter{\hbox{\scalebox{0.77778}{$\m@th#1\bigcirc$}}}%
}
\makeatother

\newcommand{\EMSO}{\mathrm{EMSO}}

\newcommand{\mach}{\mathcal{M}}

\newcommand{\instr}{\texttt{I}}

\newtheoremrep{lemma}[definition]{Lemma}
\newtheoremrep{proposition}[definition]{Proposition}
\newtheoremrep{fact}[definition]{Fact}

\SetSymbolFont{stmry}{bold}{U}{stmry}{m}{n}

\begin{document}
\maketitle

\begin{abstract}

We extend the two-variable logic on data words\cite{BMSSD11} with guarded regular binary predicates of the form $\tildeL(x,y)$ that is true if positions $x$ and $y$ are in the same class and the factor strictly between $x$ and $y$ is in the regular language $L$. 
We characterise the class of monoids for which the extension of the two-variable logic with guarded predicates recognised by the monoid is decidable, namely the class of idempotent monoids whose two-sided ideals are linearly ordered. 
For this, we introduce an automata formalism, set automata, that is equivalent to the class automata of Boja\'nczyk and Lasota and thus has an undecidable emptiness problem. 
We identify a subclass of set automata called ordered quasi-normal set automata that has a decidable emptiness problem by reduction to the emptiness problem of ordered multicounter automata. 
We show that the two-variable logic extended with guarded regular predicates recognised by a semigroup $S$ is expressively equivalent to a quasi-normal set automaton with the semigroup of transformations $S$. In particular, if $S$ is a linear band monoid then the resulting automaton is ordered, and the decidability result follows. 

\end{abstract}
\input{appendix}
\input{sec1-intro}

\input{sec2-automata}

\input{sec3-osbdecidability}

\input{sec4-logic}
\input{sec5-decidability}

\input{sec6-supplementary}
\input{sec7-conclusion}

\bibliography{references}

\end{document}

%% file: styles.tikzstyles
\tikzstyle{red node}=[fill=red, tikzit category=nodes, shape=circle, draw=black]
\tikzstyle{blue node}=[fill=blue, shape=circle, draw=black, tikzit category=nodes]
\tikzstyle{black node}=[fill=black, shape=|, draw=black, tikzit category=nodes]
\tikzstyle{green node}=[tikzit fill=green, fill=green, shape=circle, draw=black, tikzit category=nodes]
\tikzstyle{yellow square}=[draw=black, fill=yellow, shape=rectangle]
\tikzstyle{blue node 2}=[fill={rgb,255: red,128; green,0; blue,128}, draw=black, shape=circle, tikzit fill=blue]

\tikzstyle{edge}=[-, dashed]
\tikzstyle{blue pointer}=[->, draw=blue]
\tikzstyle{black pointer}=[->, draw=black]
\tikzstyle{blue dashed}=[-, draw=blue]
\tikzstyle{arrow}=[<->, draw=black]
\tikzstyle{length}=[|-|, draw=black]
\tikzstyle{leftlength}=[|-, draw=black]

%% file: macro.tex
\newcommand{\tilL}{\widetilde{L}}

\newcommand{\id}{\mathrm{id}}
\newcommand{\str}{\mathit{str}}

\newcommand{\barP}{\overline{P}}
\newcommand{\barQ}{\overline{Q}}

\newcommand{\barU}{\overline{U}}
\newcommand{\barV}{\overline{V}}

\newcommand{\barX}{\overline{X}}
\newcommand{\barY}{\overline{Y}}
\newcommand{\barZ}{\overline{Z}}

\newcommand{\bars}{\overline{s}}

\newcommand{\baru}{\overline{u}}
\newcommand{\barv}{\overline{v}}

\newcommand{\barz}{\overline{z}}

\newcommand{\bfA}{\mathbf{A}}

\newcommand{\bfD}{\mathbf{D}}

\newcommand{\bfR}{\mathbf{R}}

\newcommand{\bfT}{\mathbf{T}}

\newcommand{\bfX}{\mathbf{X}}

\newcommand{\bfv}{\mathbf{v}}

\newcommand{\calA}{\mathcal{A}}
\newcommand{\calB}{\mathcal{B}}
\newcommand{\calC}{\mathcal{C}}
\newcommand{\calD}{\mathcal{D}}

\newcommand{\calF}{\mathcal{F}}

\newcommand{\calH}{\mathcal{H}}

\newcommand{\calJ}{\mathcal{J}}
\newcommand{\calK}{\mathcal{K}}
\newcommand{\calL}{\mathcal{L}}
\newcommand{\calM}{\mathcal{M}}
\newcommand{\calN}{\mathcal{N}}

\newcommand{\calP}{\mathcal{P}}

\newcommand{\calR}{\mathcal{R}}
\newcommand{\calS}{\mathcal{S}}

\newcommand{\calU}{\mathcal{U}}

\newcommand{\calX}{\mathcal{X}}
\newcommand{\calY}{\mathcal{Y}}
\newcommand{\calZ}{\mathcal{Z}}

\newcommand{\bfone}{\mathbf{1}}

\newcommand{\FO}{\mathrm{FO}}

\newcommand{\emsotwo}{\mathrm{EMSO}^2}

\newcommand{\tildeL}{\widetilde{L}}

\newcommand\sg[1]{}
\newcommand\sgchanged[1]{}
\newcommand\am[1]{}
\newcommand\amchanged[1]{}

\newif\ifdraft
\drafttrue
\ifdraft
\renewcommand\sg[1]{\todo[inline,size=\scriptsize,backgroundcolor=Yellow]{#1 - \textbf{Stefan}}}

\renewcommand\sgchanged[1]{{\color{red}{#1}}}
\renewcommand\am[1]{\todo[inline,size=\scriptsize,backgroundcolor=Yellow]{#1 - \textbf{Amal}}}
\renewcommand\amchanged[1]{{\color{blue}{#1}}}
\else
\fi

%% file: appendix.tex

\begin{toappendix}
\section{Appendix}

\paragraph*{Orders Preliminaries}

A \emph{preorder} $\lesssim$ on a set $X$ is a binary relation on $X$ that is reflexive and transitive. 
A preorder $\lesssim$ is \emph{total} if $x\lesssim y$ or $y\lesssim x$ for each $x,y\in X$.
We say the subset $Y \subseteq X$ is \emph{$\lesssim$-total} if any two elements in $Y$ is comparable with respect to $\lesssim$, in other words $Y$ is totally preordered by $\lesssim$.
A \emph{partial order} is an antisymmetric preorder.
A \emph{linear order} is a total partial order. 

\paragraph*{Semigroup Preliminaries}\label{app:semigroup} 

We recall some fundamental definitions from the local theory of semigroups. 
A detailed account can be found in standard textbooks, such as \cite{pinbook, lallementbook}.

For a semigroup $S$, its monoidal extension $S^1$ is defined to be $S$ if it is already a monoid and otherwise to be the monoid obtained by adjoining an identity element $1$ to it.
We denote by $\langle X\rangle$ the semigroup generated by elements in $X$. 
For semigroups $S$ and $T$, let $S\leq T$ denote that $S$ is a subsemigroup of $T$. 
An element $e$ of a semigroup is said to be an \emph{idempotent} if $e^2 = e$.  
In a finite semigroup, for every element $x$ there exists a positive integer $k$ such that the element $x^k$ is an idempotent. 
Moreover there is some $n>0$ such that the $s^n$ is an idempotent for each element $s$ of the semigroup. 
The least such $n$ is called the exponent of the semigroup and is denoted by $\omega$. 
A semigroup is \emph{aperiodic} if $s^{\omega+1}=s^\omega$ for each element $s$ of the semigroup.
Let $S$ and $T$ be semigroups. The semigroup $T$ is a \emph{quotient} of $S$ if there exists a surjective morphism from $S$ onto $T$. The semigroup $T$ is said to \emph{divide} $S$ if $T$ is a quotient of a subsemigroup of $S$.

\begin{definition}[Green's relations]
Let $s,t$ be elements of a semigroup $S$. 
The Green's preorders $\leq_\calR, \leq_\calL$, and $\leq_\calJ$ are given as follows: $s \leq_\calR t ~\defeq~s=ty$ for some $y\in S^1$, $s \leq_\calL t  ~\defeq~s=xt$ for some $x\in S^1$, and $s \leq_\calJ t  ~\defeq~s=xty$ for some $x,y\in S^1$.

Let $ \calK\in \{\calR, \calL, \calJ\}$. 
The Green's preorders have equivalence relations associated with them that are given as $s \calK t~\defeq~s\leq_\calK t \text{ and } t\leq_\calK s$
Further, the relation $\calH$ is given by $s \calH t ~\defeq~s\calL t \text{ and } s\calR t$. 
\end{definition}  

The equivalence class of an element $s$ under the relation $\calK\in \{\calR, \calL, \calJ, \calH\}$ is called the $\calK$-class of $s$ and is denoted by $\calK(s)$. 
Let $\calK'\in\{\calR, \calL, \calJ\}$.
Two $\calK'$-classes are \emph{comparable} when every element in one class is comparable to every element in the other class in the $\leq_{\calK'}$ preorder.
In finite semigroups the relations $\calD = \calR \circ  \calL = \calL \circ  \calR$ and $\calJ$ coincide, and we use them interchangeably.

\begin{definition} [Ideals] \label{def:ideals}

A (two-sided) \emph{ideal} of a semigroup $S$ is a subset $I$ of $S$ such that $S^1IS^1\subseteq I$, that is, for each $s\in I$ and $x,y\in S^1$ we have that $xsy\in I$. 
\end{definition}
It is trivial to see that for an ideal $I$ of a semigroup $S$, we have $S^1IS^1=I$ since $1I1 = I \subseteq S^1$.

\begin{fact}[Prefix/Suffix Lemma]\label{Fact:fallToRLClass} Let $x,y$ be elements of a \emph{finite} semigroup. If $x\leq_\calL y$ and $x \calJ y$ then $x\calL y$. If $x\leq_\calR y$ and $x\calJ y$ then $x\calR y$. 
\end{fact}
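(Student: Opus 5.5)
By symmetry it suffices to prove the $\calL$ statement; the $\calR$ statement follows from the same argument applied to the reversed semigroup. Since $x\leq_\calL y$ is given, we only need $y\leq_\calL x$. The plan is to start from the two equations supplied by the hypotheses and use finiteness, through idempotent powers, to obtain a left multiplier that sends $x$ back to $y$.

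\textbf{Step 1: set up the equations.} From $x\leq_\calL y$ write $x=ay$ for some $a\in S^1$. From $x\calJ y$ we have $y\leq_\calJ x$, so write $y=uxv$ with $u,v\in S^1$. Substituting the first into the second gives $y=uayv$.

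\textbf{Step 2: iterate the loop.} Iterating $y=uayv$ gives $y=(ua)^n y v^n$ for every $n\geq 1$. Take $n=\omega$, the exponent, so that $e=(ua)^\omega$ is idempotent; if $ua=1$ in $S^1$, use $e=1$. Then $y=e\,y\,v^\omega$, and multiplying on the left by $e$ gives $ey=e^2yv^\omega=eyv^\omega=y$.

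\textbf{Step 3: express $y$ as a left multiple of $x$.} Write $e=(ua)^{\omega-1}u\,a$, which is valid since $\omega\geq1$. Then
\[
y=ey=\bigl((ua)^{\omega-1}u\bigr)(ay)=\bigl((ua)^{\omega-1}u\bigr)x ,
\]
so $y\leq_\calL x$ and hence $x\calL y$. For the $\calR$ case, write $x=yb$ and $y=uxv$, so $y=uybv$; then $y=y(bv)^\omega$ and $y=x\,v(bv)^{\omega-1}$ in the same way.

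The only delicate point is Step 2: we must arrange for the idempotent to absorb $y$ on the left even though a right factor $v^\omega$ is present. Multiplying the identity $y=eyv^\omega$ on the left by $e$ handles this, and finiteness of $S$ is used exactly to guarantee that the idempotent power $(ua)^\omega$ exists. The degenerate cases in which some multiplier equals $1\in S^1$ are immediate.
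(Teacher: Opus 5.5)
Your proof is correct. Iterating $y=(ua)^n y v^n$, taking the idempotent $e=(ua)^\omega$ so that $ey=y$, and reading off $y=\bigl((ua)^{\omega-1}u\bigr)x$ (and dually for $\calR$) is the standard textbook argument; the paper gives no proof of this fact and only recalls it from the semigroup texts it cites.
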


\begin{fact}[Stability of $\leq_\calL$ and $\leq_\calR$] \label{fact:stable}The following is true in any semigroup.
\begin{enumerate}
\item $\leq_\calR$ is stable on the left, i.e., if $x\leq_\calR y$ then $zx\leq_\calR zy$. Hence, if $x \calR y$ then $zx\calR zy$. 
\item $\leq_\calL$ is stable on the right, i.e., if $x\leq_\calL y$ then $xz\leq_\calL yz$. Hence, if $x \calL y$ then $xz\calL yz$.
\end{enumerate}
\end{fact}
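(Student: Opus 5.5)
The two items of the statement are independent, and each follows directly from the definitions of $\leq_\calR$ and $\leq_\calL$ together with associativity. No finiteness hypothesis is needed, which fits the claim that the fact holds in any semigroup. I would treat the first item in full and obtain the second from it by left–right duality.

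For item 1, assume $x\leq_\calR y$. By definition there is some $u\in S^1$ with $x=yu$. Take any $z\in S^1$ and multiply on the left by $z$; associativity in $S^1$ gives $zx=z(yu)=(zy)u$. Since $u\in S^1$, this is exactly the witness required by the definition of $\leq_\calR$, so $zx\leq_\calR zy$. One small point deserves care. If $z=1$ is the adjoined identity, the statement is trivial. Otherwise $z\in S$, so $zx$ and $zy$ lie in $S$, and the preorder is being applied to elements of $S$, as the definition requires. For the ``hence'' part, assume $x\calR y$, which means $x\leq_\calR y$ and $y\leq_\calR x$. Applying the first part to both inequalities gives $zx\leq_\calR zy$ and $zy\leq_\calR zx$, and therefore $zx\calR zy$.

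Item 2 is the mirror image. From $x=vy$ with $v\in S^1$ we get $xz=(vy)z=v(yz)$, and hence $xz\leq_\calL yz$. Symmetrising exactly as before gives $xz\calL yz$ whenever $x\calL y$.

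I do not expect a real obstacle here. The only point needing attention is that the witnesses may be taken in $S^1$ rather than in $S$, and that multiplication in $S^1$ is associative and extends the multiplication of $S$. This is precisely what allows $u$ (or $v$) to be reused as the witness for the new comparison.
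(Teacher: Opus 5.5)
Your proof is correct. Writing $x=yu$ (resp.\ $x=vy$) with $u,v\in S^1$ and reassociating gives $zx=(zy)u$ (resp.\ $xz=v(yz)$), and symmetrising yields the $\calR$- and $\calL$-versions. The paper states this as a standard fact without proof, and your direct check from the definitions is the usual argument.
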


\begin{fact}\label{Fact:uniqueIdemH}
If $x$ and $y$ are idempotents and $x\calH y$, then $x=y$. 
\end{fact}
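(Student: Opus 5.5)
Since $x \calH y$, by definition we have $x \calL y$ and $x \calR y$, and the plan is to use one relation on each side of the product. We unfold these relations into explicit factorisations and then use idempotency to absorb the multipliers, showing that $x = xy$ and $xy = y$.

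First, $x\calR y$ gives $x \leq_\calR y$, so $x = yu$ for some $u \in S^1$. Multiplying on the left by the idempotent $y$ gives $yx = y(yu) = y^2u = yu = x$. Symmetrically, $y \leq_\calR x$ gives $y = xv$ for some $v\in S^1$, and hence $xy = y$. So an idempotent acts as a left identity on every element of its $\calR$-class.

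Second, $x \calL y$ gives $x \leq_\calL y$, so $x = wy$ for some $w \in S^1$. Multiplying on the right by $y$ gives $xy = wy^2 = wy = x$. So an idempotent acts as a right identity on every element of its $\calL$-class.

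Combining the two, $x = xy = y$. The only points needing care are that the witnesses $u,v,w$ may be the adjoined identity of $S^1$, which is harmless because each equation above remains valid in $S^1$, and that we must pick the correct side: the $\calR$ relation is used with left multiplication by $y$, and the $\calL$ relation with right multiplication by $y$. I do not expect any real obstacle. Finiteness is not needed, so Fact~\ref{Fact:fallToRLClass} is not invoked.
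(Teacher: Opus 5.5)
Your proof is correct: $y \leq_\calR x$ with $x$ idempotent gives $xy = y$, and $x \leq_\calL y$ with $y$ idempotent gives $xy = x$, so $x = xy = y$ in any semigroup, finite or not (the identity $yx = x$ you also derive is not needed). The paper states this Fact without proof and defers to standard textbooks; your argument is the standard textbook one.
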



\begin{fact}[Location Theorem]\label{Fact:location}
Let $s$ and $t$ be elements of a finite semigroup $S$ such that $s\calD t$. 
The following are equivalent:
\begin{enumerate}
\item $\mathit{st}\ \calD\ s$,
\item $\mathit{st}\in \calR(s)\cap \calL(t)$,
\item $\calL(s)\cap \calR(t)$ contains an idempotent.
\end{enumerate}
\end{fact}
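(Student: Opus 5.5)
We prove the Location Theorem for $s\calD t$ in a finite semigroup by establishing the cycle of implications $(1)\Rightarrow(2)\Rightarrow(3)\Rightarrow(1)$. We use throughout that $\calD=\calJ$ in finite semigroups, together with the Prefix/Suffix Lemma (Fact~\ref{Fact:fallToRLClass}) and the stability of $\leq_\calR$ and $\leq_\calL$ (Fact~\ref{fact:stable}).

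\textbf{$(1)\Rightarrow(2)$.} Trivially $st\leq_\calR s$ and $st\leq_\calL t$. If $st\,\calD\, s$, then $st\,\calJ\, s$, and the Prefix/Suffix Lemma upgrades $st\leq_\calR s$ to $st\,\calR\, s$. Since $s\calD t$ we also have $st\,\calJ\, t$, and the lemma upgrades $st\leq_\calL t$ to $st\,\calL\, t$. Hence $st\in\calR(s)\cap\calL(t)$.

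\textbf{$(2)\Rightarrow(3)$.} This is the main step, a Clifford--Miller style argument. From $st\,\calR\, s$ we choose $u\in S^1$ with $stu=s$. Define the right translation $\rho\colon x\mapsto xt$ and its candidate inverse $\rho'\colon x\mapsto xu$.

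We first show that $\rho$ maps $\calL(s)$ into $\calL(st)$. By right stability of $\leq_\calL$, $x\,\calL\, s$ gives $xt\,\calL\, st$. Since $st\,\calL\, t$, the map $\rho$ in fact lands in $\calL(t)$. For $x\,\calL\, s$, write $x=ys$ with $y\in S^1$; then $xtu=ystu=ys=x$. So $\rho'\circ\rho$ is the identity on $\calL(s)$.

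Next consider $x=tu$. We have $tu\leq_\calL u$, and also $s\cdot tu=s$, so $s\leq_\calL tu$. We claim $tu\in\calL(s)\cap\calR(t)$. First, $tu\leq_\calR t$. Applying $\rho$ gives $tut$, and we need $t\leq_\calR tu$. For this, note that $t\,\calL\, st$, so $t=zst$ for some $z\in S^1$. Then $tut=zstut=zst=t$, using $stu=s$. Hence $t=tu\cdot t$, which gives $t\leq_\calR tu$ and therefore $tu\,\calR\, t$.

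It remains to show $tu\,\calL\, s$. We have $s=s(tu)\leq_\calL tu$ and $tu\,\calR\, t\,\calD\, s$, so $tu\,\calJ\, s$. The Prefix/Suffix Lemma then gives $s\,\calL\, tu$.

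Finally, $tu$ is idempotent: $(tu)(tu)=(tut)u=tu$. So $\calL(s)\cap\calR(t)$ contains the idempotent $tu$.

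\textbf{$(3)\Rightarrow(1)$.} Let $e\in\calL(s)\cap\calR(t)$ be an idempotent. Since $e\,\calL\, s$, we have $s=se$, because $s=ye$ for some $y\in S^1$ and $e$ is idempotent. Since $e\,\calR\, t$, we have $et=t$. Right stability of $\calL$ applied to $s\,\calL\, e$ gives $st\,\calL\, et=t$. Hence $st\,\calD\, t\,\calD\, s$, which is (1).

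The only delicate point is $(2)\Rightarrow(3)$, namely producing the idempotent $tu$ and verifying $tut=t$ from the $\calL$-relation $t\,\calL\, st$.
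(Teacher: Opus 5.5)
Your proof is correct. The paper gives no proof of this statement: it lists it in the appendix as a standard fact about finite semigroups, citing the textbooks of Pin and Lallement, so there is no route of the paper's to compare against. Your cycle $(1)\Rightarrow(2)\Rightarrow(3)\Rightarrow(1)$ is essentially the textbook argument. You use the Prefix/Suffix Lemma for $(1)\Rightarrow(2)$. For $(2)\Rightarrow(3)$ you use the Miller--Clifford construction of the idempotent $tu$: from $stu=s$ and $t=zst$ you get $tut=t$.

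A few places can be tightened:
\begin{itemize}
\item The right translations $x\mapsto xt$ and $x\mapsto xu$, the remark $tu\leq_\calL u$, and the identity $s=se$ in the last step are never used, so you can delete them.
\item In $(2)\Rightarrow(3)$ you can avoid the Prefix/Suffix Lemma. Since $tu=zstu=zs\leq_\calL s$ and $s=s(tu)\leq_\calL tu$, you get $tu\,\calL\,s$ directly.
\item With that change, $(2)\Leftrightarrow(3)$ holds in every semigroup. Finiteness, through $\calD=\calJ$ and the Prefix/Suffix Lemma, is then needed only for $(1)\Rightarrow(2)$.
\end{itemize}
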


\end{toappendix}

%% file: sec1-intro.tex
\section{Introduction}
A \emph{data word} 
is a finite sequence of pairs $(a_i,d_i)\in \Sigma\times \calD$ where $\Sigma$ is a finite alphabet and $\calD$ is an infinite domain of \emph{data values} that can be tested only for equality. A set of data words $L$ is called a \emph{data language}. 
In most cases (including this work) $L$ is assumed to be closed under permutations of the set $\calD$, that is, every occurrence of some data value $d_i$ can be replaced with another data value $d_j \neq d_i$ as long as such a mapping is a permutation.
Thus the specific data values considered are not important to the study and we only capture properties involving equality of data values.
For our purposes, $L$ can be represented as a collection of relational first-order structures of the form $\left([n], (a)_{a\in \Sigma}, <, +1 \sim\right), \moo1$ where $[n]$ and $<$ denote the set $\{1,\ldots,n\}$ and the natural order on it, the unary predicates $(a)_{a\in \Sigma}$ denotes the labelling by $\Sigma$, and $\sim$ is the equivalence relation on positions given by data values, i.e., $i \sim j$ if $d_i = d_j$.  
The relation $\moo1$ denotes the \emph{class successor} relation, i.e., $i \moo1 = j$, if $i\sim j$ and the positions strictly between $i$ and $j$ are not equivalent to them. 
The \emph{class} of $i$ is the equivalence class of $i$ under the relation $\sim$, i.e., the set of all positions labelled with the data value $d_i$. 


The primary direction in the study of data words has been identifying suitable notions of automata and logics for them (See \cite{NSV04,BS07} and the survey \cite{Segoufin06}). The models and logics in the literature can broadly be classified into two families: temporal logics (correspondingly, the register automata family) and first-order logics (data automata family). The prominent member of the first family is the freeze-LTL (i.e., LTL with registers) of \cite{DL09}, while the most important logic of the latter is the two-variable first-order logic \cite{BMSSD11}. In this work we present a decidable, strict extension of the two-variable logic that incorporates some aspects of temporal logics. First we recall the two-variable logic in some detail.

For a vocabulary $\tau$ of unary and binary predicates let $\fotwo[\tau]$ denote the first-order formulas using the predicates from $\tau$ and the variables $x$ and $y$. 
The logic $\emsotwo[\tau]$ consists of formulas of the form $\exists X_1 \ldots \exists X_n \, \varphi$ with $\varphi \in \fotwo[\barX, \tau]$, where $\barX$ stands for the monadic second-order predicates $X_1,\ldots, X_n$ that quantify over sets of positions. 
Certainly, it is natural to consider the first-order logic ($\FO$) with the vocabulary $(\Sigma, <,\sim)$ on data words.
However, as one would expect, the satisfiability problem of the logic $\FO[\Sigma, <, \sim]$ is undecidable \cite{BMSSD11}. 
The undecidability persists even if the number of variables is restricted to three. 
This prompted the study of the two-variable fragment of the logic. 
For convenience in working with automata, we consider the extension of the first-oder logics with existential monadic second-order predicates. 
It is shown in \cite{BMSSD11} that the satisfiability problem of $\emsotwo[\Sigma, <,+1, \sim, \moo1]$ is decidable over data words in non-elementary time. 
The proof is by translating the formulas into an automaton formalism called data automata and showing the decidability of the emptiness problem of data automata. 
As far as the study of logics over data words is concerned this result is a milestone and despite nearly two decades of active research, $\FO^2$ remains as one of the yardstick logics with which any new formalism is measured.

Next we look at some examples. The \emph{string projection} of a data word $w$ is the word $\mathrm{str}(w) = a_1\cdots a_n$. Let $1 \leq i_1<i_2 < \cdots < i_k\leq n$ be a class of $w$. The string projection of this class is the word $u=a_{i_1} \cdots a_{i_k}$. We say $u$ is a class of $w$ to mean 
a class whose string projection is $u$.     

\begin{example}\label{ex:IDZ}
Let $\Sigma=\{\iota, \delta, z\}$. We can think of $\iota$ and $\delta$ as increments and decrements of a counter and $z$ as a zero-test.
\begin{enumerate}
\item Let $L_1$ be the set of all data words where each class is of the form $\iota\delta$ or $z$. This language is defined by the formula $\varphi_1$ where
\begin{align}
\varphi_1 &:= \forall x \forall y\, ( x <y \wedge x \sim y  \rightarrow \iota(x) \wedge \delta(y)) \wedge ( (x \sim y  \rightarrow x = y) \rightarrow z(x))
\end{align}
\item Let $L_2$ consists of all data words satisfying the property: there is no position labelled by $z$ between each $\iota$ and $\delta$ of the same class. 
\end{enumerate}
\end{example}

The language $L_2$ is not definable in $\FO^2$ as noted in \cite{Segoufin06}.
In fact it is not recognised by the data automata of \cite{BMSSD11}, that is equivalent to the logic $\EMSO^2[\Sigma, <,x +1, \sim, \moo1]$ 
However, $L_2$ is definable in freeze-LTL with only forward-looking modalities (Next and Until), a logic with a decidable satisfiability problem (as a trade-off this logic cannot define $L_1$, see \cite{Segoufin06}).    

To define $L_2$ in $\FO^2$, we extend the two-variable logic with the below predicates.

\begin{definition}[Regular Predicate]
A \emph{regular predicate} over the alphabet $\Sigma$ is a binary relation of the form $L(x,y)$ where $L\subseteq \Sigma^*$ is a regular language.
The word $w=a_1\cdots a_n\in \Sigma^*$ at the pair of positions $(i,j)$, for $i,j \in[n]$, satisfies the predicate $L(x,y)$ if $i< j$ and the factor $a_{i+1}\cdots a_{j-1}$ given by the interval $(i,j)$ is in $L$.
A \emph{guarded} regular predicate  
$\widetilde{L}(x,y)$ is equivalent to the formula $x\sim y \wedge L(x,y)$.  
\end{definition}

For a regular expression $r$, let $\sem{r}$ denote the language defined by it. 
Using regular predicates, the language $L_2$ can be defined in $\FO^2$ by the below formula.
\begin{equation}
\label{eq:onecounter-reg}
\forall x \forall y\, \left( ( \iota(x) \wedge \delta(y) \vee \delta(x) \wedge \iota(y))  \wedge x \sim y \rightarrow \neg \widetilde{\sem{\Sigma^*z\Sigma^*}}(x,y)\right).
\end{equation}

The below example illustrates that adding regular predicates to the vocabulary leads to undecidability even when relatively simple regular predicates are used. 




\begin{example}\label{ex:2IDZ}
Let $\Sigma'=\{\iota_1, \delta_1, z_1, \iota_2, \delta_2, z_2\}$. We can think of the alphabet as encoding the increments, decrements, and zero-tests of a two-counter machine. 
\begin{enumerate}
\item Let $L_3$ be the set of all data words satisfying the following property: each class is of the form $\iota_i\delta_i$ or $z_i$, $i\in \{1,2\}$. Clearly $L_3$ is in $\FO^2[\Sigma,<,+1,\sim]$.
\item Let $L_4$ consist of all data words satisfying the following property: for each $i\in \{1,2\}$, there is no position labelled by $z_i$ between each $\iota_i$ and $\delta_i$ of the same class. 
\end{enumerate}
\end{example}

From Formula \ref{eq:onecounter-reg}, it is not difficult to see that $L_4$ is definable in $\FO^2$ using the guarded regular predicates defined by the languages $\Sigma'^*z_1\Sigma'^*$ and $\Sigma'^*z_2\Sigma'^*$. This is sufficient to encode the run of a two-counter machine as a data word.

The algebraic theory of regular languages where the semigroup recognising a language is considered, allows for properties of regular languages to be understood through the structure of the semigroup recognising it. 
Motivated by this, in the logic, we consider families of regular predicates defined by a semigroup recognising it.

A \emph{semigroup} $S$ is a set with an associative binary operation. It is a \emph{monoid} if the operation has an identity, denoted by $1$. All semigroups we consider in this paper are finite or free semigroups of the form $\Sigma^*$. 
Let $M$ be a finite monoid. A morphism $h:\Sigma^* \rightarrow M$ is a map satisfying $h(ab)=h(a)h(b)$ and $h(\varepsilon)=1$. 
A language $L\subseteq \Sigma^*$ is \emph{recognised} by a morphism $g:\Sigma^* \to M$, where $M$ is a finite monoid, if there is a finite subset $P \subseteq M$ such that $L = h^{-1}(P)$. 
We also consider semigroups recognising languages. Let $S$ be a finite semigroup and $S^1$ be the monoid obtained by adjoining an identity $1$ to $S$. The monoid morphism $h:\Sigma^*\to S^1$ is \emph{unit-reflecting} if $h^{-1}(1)=\varepsilon$. We say a language $L\subseteq \Sigma^*$ is recognised by $S$ if it is recognised by a unit-reflecting morphism into $S^1$.
A monoid $M$ \emph{recognises} a family $\calF$ of languages over $\Sigma$ if there is a morphism $h: \Sigma^* \to M$ that recognises each language in the family. Definitions of other semigroup-theoretic notions required for our purposes can be found in \Cref{app:semigroup}.

Let $h:\Sigma^* \to M$ be a morphism. The logic $\FO^2[\Sigma, <, +1, \sim, \moo1, \calF_h]$ is the extension of the logic with the family of regular predicates $\calF=\{L_1, \ldots, L_k\}$ such that $h$ recognises the family $\calF$. For a monoid $M$, by $\FO^2[\Sigma, <, +1, \sim, \moo1, \calF_M]$ we denote the set of formulas $\varphi \in \FO^2[\Sigma, <, +1, \sim, \moo1, \calF_h]$ for some morphism $h:\Sigma^*\to M$.
The logics $\FO^2[\Sigma, <, +1, \sim, \moo1, \widetilde{\calF}_h]$ and 
$\FO^2[\Sigma, <, +1, \sim, \moo1, \widetilde{\calF}_M]$ are defined analogously where 
the predicates $\widetilde{\calF}_h=\{\tildeL_1, \ldots, \tildeL_k\}$ used are guarded. The above definitions extend naturally to semigroups and unit-reflecting morphisms.

\begin{example}\label{ex:monoidrecog}
    The language family $\{\Sigma^* z\Sigma^*\}$ from \Cref{eq:onecounter-reg} is recognised by the two element monoid with a zero $U_1=\{1,0\}$ under the usual product operation, with the morphism $z\mapsto0$ and $\iota,\delta\mapsto1$ and accepting set $P = \{0\}$.  
    However, the family $\calF=\{\Sigma'^*z_1\Sigma'^*, \Sigma'^*z_2\Sigma'^*\}$ is not recognisable by the monoid $U_1$ as both languages cannot be recognised by the same morphism. 
    However  $\calF$ is recognised by the product monoid $U_1^2= U_1\times U_1 = \{(1,1), (1,0), (0,1), (0,0)\}$.
    under the morphism $z_1\mapsto (0,1)$, $z_2\mapsto (1,0)$ with the accepting sets $\{(0,1), (0,0)\}$ and $\{(1,0), (0,0)\}$ respectively. 
\end{example}

A key difference in the structure of the two monoids above is that in $U_1$, the two-sided ideals (See \Cref{def:ideals}) are linearly ordered while it is not the case in $U_1^2$ (See \cref{fig:u1-u1sqr-n2}). 
Motivated by this, we consider the class of idempotent monoids whose two-sided ideals are linearly ordered called \emph{linear bands}, that is, for all two-sided ideals $I, J$, we have that $I\subseteq J$ or $J\subseteq I$. The idempotency restriction follows from another such language that leads to undecidability (See \Cref{prop:undec}).  


The below example shows how guarded regular predicates parametrised by a family of monoids can be used to capture a previously known decidable logic introduced in \cite{BMSSD11}. 

\begin{example}[Nilpotent Predicates]
Consider the monoid $\{0,1,\ldots, n-1, \infty\}, n\geq 1$ with the below operation where $x+y$ is the usual addition over natural numbers.
\[ x\cdot y = \begin{cases} x+y & \text{if $x,y \in \{0,1,\ldots, n-1\}$ and $x+y < n$}, \\
\infty & \text{otherwise.}
\end{cases}
\]

The unique morphism given by the map $g:\Sigma \mapsto 1$ (and $g:\Sigma \mapsto \infty$ if $n=1$)  recognises the family of languages $\{\Sigma^i \mid 1\leq i \leq n-1\} \cup \{\Sigma^{\geq n}\}$. Thus the logic $\FO^2[\Sigma,<,\sim, \calF_g] \equiv \FO^2[\Sigma,<, +1, +2, +3, \ldots, +n,  \sim]$. The decidability of the latter logic is shown in \cite{BMSSD11}. 
\end{example}

\paragraph*{Our Contributions}




Our main result is as follows.

\begin{restatable}{thm}{repeatedthm}
\label{thm:emso-linband-dec}
Satisfiability of $\emsotwo[\Sigma, <, +1, \sim, \moo1, \widetilde{\calF}_M]$ formulas over data words is decidable if and only if $M$ is a linear band. 
\end{restatable}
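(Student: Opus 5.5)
The proof splits into the two directions. In both, the abstract's roadmap is the guide: a translation from the logic to set automata, and decidability of \emph{ordered quasi-normal} set automata via ordered multicounter automata.

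\textbf{Decidability (if).} Let $M$ be a linear band and $\varphi\in\emsotwo[\Sigma,<,+1,\sim,\moo1,\widetilde{\calF}_M]$.

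1. Fix a morphism $h:\Sigma^*\to M$ recognising the guarded predicates in $\varphi$, and Scott-normalise $\varphi$ in the style of \cite{BMSSD11}. Each guarded atom $\tildeL(x,y)$ depends only on $x\sim y$, $x<y$, and the value $h(a_{x+1}\cdots a_{y-1})\in M$.

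2. Build a quasi-normal set automaton whose transformation semigroup is (a copy of) $M$. Each data value carries an abstraction that records the $h$-image of the factor since its last occurrences, or the relevant sets of such images. Reading a letter $a$ applies $h(a)$ uniformly to all stored abstractions, and the formula's constraints become tests on the classes. This is the logic-to-automata equivalence the paper announces, with a data-automaton-like skeleton for the $<,+1,\sim,\moo1$ part.

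3. Show that if $M$ is a band whose $\calJ$-classes, i.e.\ ideals, form a chain, the resulting automaton is \emph{ordered}. In a band $s\cdot m\leq_\calJ s$. Linearity of the ideals means the $\calJ$-value of every stored abstraction can only descend along a single chain. Idempotency ($s^2=s$) means repeated application collapses, so only finitely many ``levels'' matter and the counter-like behaviour is monotone.

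4. Reduce emptiness of ordered quasi-normal set automata to emptiness of ordered multicounter automata, which is decidable. One counter is kept per abstraction, and counters are ordered by the $\calJ$-chain. Transferring a class from a higher to a lower ideal corresponds to the ordered decrement/transfer discipline.

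\textbf{Undecidability (only if).} Suppose $M$ is not a linear band. There are two cases.

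Case (a): $M$ is not idempotent. Pick $s$ with $s^2\neq s$. This lets $M$ count parity-like or nilpotent behaviour inside a class interval, and the earlier \Cref{prop:undec} provides an undecidable language for exactly this situation. I will reduce from it. If that proposition is phrased for a specific monoid, I show that the monoid generated by $s$ and $1$ divides $M$ and that the predicates pull back along the division.

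Case (b): $M$ is a band whose ideals are not linearly ordered. Take incomparable principal ideals $MeM$ and $MfM$ for idempotents $e,f$. Map $z_1\mapsto e$, $z_2\mapsto f$ and all other letters to $1$. Then the accepting sets $h^{-1}(MeM)$ and $h^{-1}(MfM)$ recognise exactly the words containing $z_1$, respectively $z_2$, because ideals absorb products and incomparability separates them. This mimics $U_1^2$ (\Cref{ex:monoidrecog}), so $L_4$ of \Cref{ex:2IDZ} is definable and encodes two-counter machines, whose halting is undecidable.

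\textbf{Main obstacle.} I expect the hardest step to be step 3–4: proving that linearity of ideals plus idempotency really gives the ordered property, and that the set-automaton semantics fits the ordered multicounter model. Here I would first work out Green's structure carefully: in a band $\calH$ is trivial, by \Cref{Fact:location} products within a $\calJ$-class behave as rectangular bands, and \Cref{Fact:fallToRLClass} controls movement within a class. The goal is an invariant showing that abstraction values only move down the $\calJ$-chain.
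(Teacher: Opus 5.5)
Your decomposition follows the paper: formula $\to$ quasi-normal set automaton on which $M$ acts by right multiplication $\to$ ordered automaton $\to$ ordered multicounter automaton, and conversely, witnesses fed into \Cref{prop:undec}. However, two steps fail as you state them.

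\textbf{Steps 3--4.} Keeping one counter per element of $M$, ordered along the $\calJ$-chain, does not give an ordered automaton, even for linear bands. Take $M=\{1,a,b\}$ with $xy=y$ for $x,y\in\{a,b\}$, which is a linear band. A letter mapped to $a$ empties $X_b$ into $X_a$, and a letter mapped to $b$ empties $X_a$ into $X_b$. The definition of an ordered update then forces both $X_b<X_a$ and $X_a<X_b$, although no $\calJ$-value ever drops. The fact that values only move down the $\calJ$-chain holds in every monoid and does not tell you which counters must be emptied. Three ideas from the paper are missing.
First, per data value only one element evolves: the image of the factor since its latest occurrence. All other information is frozen in stable sets, so your ``sets of such images'' must not be updated globally. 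Hence the live elements are images of suffixes and form an $\leq_\calL$-chain.
Second, counters are indexed by $\sqsubseteq_\calL$-height, which is constant on $\calR$-classes, and the height-to-element map is kept in the finite control. Moves inside an $\calR$-class then become relabellings rather than transfers.
Third, idempotency and linearity are used to get $\bfD\bfA$, not because ``repeated application collapses''. $\bfD\bfA$ shows that if right multiplication by $m$ makes a live $m_1$ drop in $\calJ$, then $m<_\calJ m_1$. So every higher live element also drops below $m_1$, which means the emptied counters form a prefix.

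\textbf{Case (a).} Your case (b) is correct, and it is more direct than going through $U_1^2$. Case (a) fails whenever $s$ lies in a subgroup. The family in \Cref{prop:undec}(2), ``exactly one $a$'', has syntactic monoid $\{1,a,0\}$ with $a^2=0$. If $s$ has index at least $2$, the family is recognised by mapping $a\mapsto s$ and all other letters to $1$, with accepting set $\{s\}$. If instead $s^{p+1}=s$, then every divisor of $\langle 1,s\rangle$ satisfies $x^{p+1}=x$, so no division or pull-back can produce it; the simplest example is the two-element group $\{1,g\}$.

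This cannot be repaired. For a group, $h(a_{x+1}\cdots a_{y-1})=p(x)^{-1}p(y-1)$, where $p(i)$ is the image of the prefix of length $i$. The values of $p$ can be guessed by monadic predicates and checked with $+1$. This turns $\tildeL(x,y)$ into $x\sim y\wedge x<y$ conjoined with a Boolean combination of unary predicates, so the logic collapses to the one of \cite{BMSSD11} and is decidable. The ``only if'' direction as stated is therefore false for nontrivial groups, and your argument works only for aperiodic $M$. The paper's \Cref{prop:N-U2} makes the same unjustified claim, namely that $N_2$ divides the monoid generated by any non-idempotent, so this hole is shared with the paper.
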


Our decidable fragment is a strict extension of the $\fotwo$ of \cite{BMSSD11} as the language $L_2$ of \cref{ex:IDZ} is definable in the decidable fragment of \Cref{thm:emso-linband-dec} but not in the $\fotwo$ of \cite{BMSSD11}. 
The guardedness of the regular predicates does not impose a restriction on the expressibility in the following sense: for every formula in 
$\emsotwo[\Sigma, <, +1, \sim, \moo1, {\calF}_M]$ with regular predicates recognised by a monoid $M$ there is an equivalent formula in $\emsotwo[\Sigma, <, +1, \sim, \moo1, \widetilde{\calF}_{M'}]$ with guarded regular predicates recognised by another monoid $M'$.  
We state our results for the guarded regular predicates. 


To show decidability of the logic, we employ a new automaton formalism called set automata.
A \emph{set automaton} $\calA$ is a finite state automaton with a fixed number of sets $\calX =\{X_1,\ldots, X_k\}$ that can store data values. During the run, the automaton can add or remove the current data value from the sets as well as assign a union of sets to each set (i.e. of the form $X_i = \bigcup_{j \in J} X_j$ for $J \subseteq [k]$). The latter kind of updates can be represented as an element of a semigroup $S$ of relations on the sets. The acceptance of a run is determined by the state as well as the sets in which the data values are present at the end of the run. 
The set automaton model is equivalent to the class automata of \cite{BL10} and is thus undecidable. Also, it is noteworthy that data automata are equivalent to set automata whose semigroup of relations is trivial, that is, the only set update is used is the identity relation.

We identify a subclass of set automata with decidable emptiness, called ordered-quasi normal set automata.  
The decidability of emptiness of ordered quasi-normal set automata is shown by a reduction to the emptiness of ordered multicounter automata\cite{Rein08}, which is known to be decidable.
The complexity of the decision procedures for the logic $\emsotwo[\Sigma, <, +1, \sim, \moo1, \widetilde{\calF}_M]$ and set automata are non-elementary that is inherited from $\fotwo[\Sigma,<,+1,\sim]$ and data automata.

We obtain decidability of the logic by converting $\emsotwo[\Sigma, <, +1, \sim, \moo1, \widetilde{\calF}_M]$ formulas where $M$ is a linear band to ordered quasi-normal set automata. 
To show undecidability of the logic when $M$ is not a linear band, we use a reduction from the halting problem of two-counter machines which is known to be undecidable \cite{Minsky67}.


\paragraph*{Related Works}

\subparagraph*{Automata on Data Words.}

{There exists numerous extensions of finite state automata with registers \cite{KF94}, pebbles \cite{NSV04},  hash tables \cite{BS07}, counters \cite{MR11}, sets \cite{BCG23, GL22}, etc.~to handle data values (See the survey \cite{Segoufin06}). 
The bibliography is extensive and we mention only those relevant to our work.}  
Register automata\cite{KF94} and its variants \cite{NSV04,DL09,BS20} equip finite state automata with a fixed number of registers for storing data values and comparing them using equality.  
Set augmented finite automata \cite{BCG23}, register set automata \cite{GL22}, and history register automata  \cite{GT16} are generalisations of register automata with unbounded registers (or sets in the present terminology). 
The latter two are equivalent to set automata except for the acceptance conditions. \todo{say that they are undec, and that our decidability result is new. check papers}
Class automata \cite{BL10} are closely related to data automata and are expressively equivalent to set automata. A subclass of class automata captures the alternating one register automata of \cite{DL09}. 
The decidable subclass presented in our work was previously unknown.

\subparagraph*{Logics on Data Words.}
As for $\fotwo$ on data words, the results in \cite{BMSSD11} are already mentioned. The related, but weaker logic $\mathrm{EMSO}^2(\Sigma, \sim, +1)$ was studied in \cite{KST12}. It was shown to be equivalent to a semantic restriction of data automata, called weak data automata. 
A weakening of MSO logic with data equality tests called rigidly guarded MSO was introduced in \cite{CLP11}. It was shown that it is as expressive as orbit finite data monoids \cite{Boj11} and its satisfiability is decidable. A $\mu$-calculus that has modalities corresponding to successor, predecessor, class successor and class predecessor relations was introduced in \cite{CM15}.


\paragraph*{Organisation of the Paper}
The paper is organised as follows. \Cref{sec:automata} introduces set automata and its various restrictions namely normal, quasi-normal, and ordered automata. We also prove the necessary closure properties required to translate guarded $\emsotwo$ formulas to quasi-normal set automata. 
Subsequently, in \Cref{sec:logic} we establish the automata-logic connection. 
\Cref{sec:decidability} details the decidability and undecidability results concerning the guarded fragment. 
\Cref{sec:discussion} discusses alternative accepting conditions for set automata, translations between class automata and set automata. 
In \Cref{sec:conclusion} we conclude. 
Standard facts from the theory of Green's relations on finite semigroups, and definitions of basic relations like preorder, partial order, and other related notions are given in Appendix~\ref{app:semigroup}. 
Finally, to improve readability, we have moved some proofs to the Appendix.

%% file: sec2-automata.tex
\section{Set Automata}\label{sec:automata}
\nosectionappendix

\begin{toappendix}
   \subsection*{Proofs for \Cref{sec:automata}}
\end{toappendix}

In this section, we introduce set automata, and the subclasses normal, quasi-normal, and ordered set automata. 
As our main result, we show that the emptiness problem of ordered normal set automata is decidable. 

\subsection{Relations and Transformations}
First, we recall some preliminaries used in the rest of the paper. 

For $k>0$, let $[k]$ denote the set $\{1,\ldots,k\}$.  The set of Boolean values is denoted by $\bftwo=\{0,1\}$. Let $Y$ be a finite ordered set. A column vector $\baru=(u_y)_{y\in Y}$ indexed by $Y$ is an element of $\bftwo^Y$.  
For a subset $Y' \subseteq Y$, the restriction of $\baru$ to $Y'$ is given by $\baru\restriction_{Y'} = (u_y)_{y\in Y'}$.

The set of column vectors over $\bftwo$ indexed by $Y$ is denoted by $\bftwo^Y$. The Boolean operations can be extended to elements of $\bftwo^Y$ pointwise. 
For $u \in \bftwo^Y$, we denote by $u^c = (1-u_y)_{y \in Y}$ the complement of $u$. For $y\in Y$, let $e_y\in\bftwo^Y$ denote the vector whose $y$-component is $1$ and all other components are $0$. The vectors $\{e_y \mid y \in Y\}$ are called unit vectors.

Let $S$ be a set. For a subset $X\subseteq S$, let $\bfone_X:S \to \bftwo$ denote the characteristic function given by $\bfone_X(x) =1$ if $x \in X$ and  $\bfone_X(x) =0$ otherwise. 
Extending this notation, for $\barX = (X_y)_{y\in Y} \in \calP(S)^Y$ a vector of subsets of $S$ indexed by $Y$, let $\bfone_{\barX}:S \to \bftwo^Y$ denote the function $x \mapsto (\bfone_{X_y}(x))_{y \in Y}$. 
The vector $\bfone_{\barX}(x)$ is called the characteristic vector of $x$ in $\barX$. 


A \emph{binary relation} $\rho$ on a set $Y$ is a subset of the cartesian product $Y\times Y$. 
We simply write relation instead of binary relation.
The \emph{converse} of a relation $\rho$ is the relation $\rho^{-1} = \{(y,x)\mid (x,y)\in \rho\}$ where the order is switched. For a subset $Y' \subseteq Y$, the restriction of $\rho$ to $Y'$ is $\rho \restriction_{Y'}=\rho \cap (Y'\times Y')$.
The \emph{image} of an element $x \in Y$ under the relation $\rho$ is the set $\rho(x)\subseteq Y$ given by $\rho(x) = \{ y \in Y \mid (x,y) \in \rho\}$.
Similarly, the image of a subset $S$ of $Y$ under the relation is given by $\rho(S) = \bigcup_{x\in S} \rho(x)$.
A \emph{transformation} on a set is a function from the set to itself. 
A bijective transformation is called a \emph{permutation}. Given relations $\rho$ and $\sigma$ on the set $Y$, their \emph{relation composition \emph{or} product} $\rho\sigma$ is given by
$
    \rho\sigma = \{ (x,z) \in Y \times Y \mid \exists y \in Y, (x,y)\in \rho, (y,z) \in \sigma\}$.
\begin{example}
Let $\rho =\{ (0,0),(0,1),(1,1)\}$ and $\sigma = \{(0,0),(1,0)\}$ be relations on the set $Y=\{0,1\}$.
Then $\rho\sigma = \{(0,0),(1,0)\}$ and $\sigma\rho = \{(0,0),(0,1),(1,0),(1,1)\}$.
\end{example}
The omission of $\circ$ to denote the product operation is intentional. When $\rho$ and $\sigma$ are functions then their \emph{function composition} $\rho \circ \sigma$ is the function $\rho \circ \sigma = \sigma\rho$ that is different from their product $\rho\sigma$.
The product of relations is an associative operation and the relations on a set $Y$ form a monoid with the identity function $\id_Y=\{(y,y) \mid y \in Y\}$ as the identity element. 
Let $\bfR_Y$ and $\bfT_Y$ denote the monoid of relations and transformations on the set $Y$, respectively. 

A relation $\rho$ over $Y$ is naturally viewed as a Boolean matrix $M(\rho)=(m_{ij})_{(y_i,y_j)\in Y \times Y}$ where $m_{ij}=1$ if $(y_i,y_j) \in \rho$ and $0$ otherwise. 
Conversely, for every Boolean matrix $M$ there is a corresponding relation $\rho(M)$ such that $M(\rho(M))=M$. It is easy to verify that $M(\rho\sigma)=M(\rho) \cdot M(\sigma)$.
We denote by $M^T$ the transpose of a matrix $M$. 


\subsection{Set Automaton}



\begin{definition}[Set automaton]
A \emph{set automaton} $\calA$ is a tuple $(Q, Y, \Sigma, \Delta, I, F, C)$ where $Q$ is the finite set of states, $Y$ is a finite set of names for sets, and $\Sigma$ is the input alphabet. 
The transition relation is given by $\Delta \subseteq Q \times \Sigma \times \bftwo^Y \times \bfR_Y \times \bftwo^Y \times \bftwo^Y \times Q$. The initial and final sets of states are respectively $I \subseteq Q$ and $F \subseteq Q$.  
Finally, $C \subseteq \bftwo^Y$ is the family of accepting vectors of membership of data values.
\end{definition}

A \emph{configuration} of the automaton $\calA$ is a pair $\gamma=(q, \barX)$ where $q\in Q$ is a state and $\barX = (X_y)_{y\in Y} \in \calP(\calD)^Y$ is a vector of subsets of data values. A data value $d$ is said to be \emph{present} in $\barX$ if it is in $X_y$ for some $y\in Y$. A configuration is \emph{initial} if $q \in I$ and  $X_y = \emptyset$ for each $y \in Y$. 

When the automaton is in a configuration $\gamma$, on the input pair $(a,d) \in \Sigma\times\calD$, the transition $ (p, \ell, \barz, \rho, \baru, \barv, p') \in \Delta$
is applicable if $p=q$, $\ell=a$, and $\barz= \bfone_{\barX}(d)$, that is, the characteristic vector of membership of the data value $d$ is $\barz$. The \emph{set} or \emph{global update} is given by the relation $\rho\in \bfR_Y$. 
The global update $\rho$ removes the contents of each set $x$ and copies it to each set $y\in \rho(x)$.   
This results in the vector of subsets of data values stored in the sets of $\calA$, $\overline{X'} =  (X_y')_{y\in Y}$ where $X_y'=\bigcup_{x\in \rho^{-1}(y)}X_x$. 
Further, for each data value $d'\in\calD$, 
we obtain the characteristic vector of membership $\bfone_{\overline{X'}}(d')=M(\rho)^T \cdot \bfone_{\barX}(d')$ after the global update. 
The \emph{local updates} are given by the vectors $\baru,\barv \in \bftwo^Y$. 
The local updates are applied on the contents of the sets given by $\overline{X'}$, with the current data value $d$ added to the sets given by $\baru$ and removed from the sets given by $\barv$. 
Let $\barU, \barV  \in \calP(\{d\})^Y$ be such that $\baru=\bfone_{\barU}(d)$ and $\barv  =\bfone_{\barV}(d)$.
This results in the configuration $(p',\bfX'')$ where $\overline{X''} =(X_y'')_{y\in Y} $ is given by $X_y'' =  (X_y' \cup U_y ) \setminus V_y$ for each $y \in Y$.

A configuration $(q,\barX)$ is \emph{accepting} if $q\in F$ and the characteristic vector of each data value present in $\barX$ is in $C$.
A \emph{successful} run of $\calA$ on a data word $w$ is a sequence of applicable transitions taking the automaton from an initial to an accepting configuration. The \emph{language} of $\calA$, denoted as $L(\calA)$, is the set of all data words $w$ on which $\calA$ has a successful run.

Let $\calU(\calA)$ denote the set of global updates used in the transitions of the set automaton $\calA$. 
Let  $M_\calA=\langle \calU(\calA)\rangle$, called the \emph{update monoid of $\calA$}, be the submonoid of relations on $Y$ generated by $\calU(\calA)$. Clearly $M_\calA$ is a submonoid of $\bfR_Y$, i.e.,  $M_\calA \leq \bfR_Y$.

\begin{example}\label{ex:12IDZ} 

Consider the language $L_{12} = L_1 \cap L_2$ over the alphabet $\Sigma=\{\iota, \delta, z\}$ where $L_1$ and $L_2$ are the languages described in \cref{ex:IDZ}.
This language is accepted by a set automaton $\calA$ with three sets $Y_1, Y_2$ and $Y_3$.
At any given position, the set $Y_1$ contains all the data values $d$ such that $(\iota,d)$ was read but a corresponding $(\delta, d)$ has not yet been observed. The set $Y_2$ is used to track whether a $z$ was observed between an $\iota$ and $\delta$ of the same class. In an accepting run, the set $Y_2$ always remains empty. 
The set $Y_3$ is used as a sink to track the data values of each class satisfying either of the above properties that have occurred in the run. The set automaton rejects whenever the current data value is present in $Y_3$ as this signifies that a class is not of the form $\iota$ followed by $\delta$ or a single $z$.
Whenever the position is labelled by an $\iota$ the automaton checks if the data value is not present in any of the sets and it is then placed in the set $Y_1$ by local updates. Whenever a $\delta$ is read, the data value is ensured to be present only in $Y_1$ and then is moved to $Y_3$ by local updates. In both cases the global update is the identity. Whenever a $z$ is encountered, we have to ensure that $Y_1$ is empty to maintain property (2). This is accomplished by copying the data values in $Y_1$ to $Y_2$ by the global update $\rho=\{(Y_1,Y_2)\}\cup \id_Y$ and adding the current data value to $Y_3$ by local updates. 
Finally, the run is accepted if the sets $Y_1$ and $Y_2$ are empty.

\end{example}

\subsection{Normal Set Automaton}

A set automaton is \emph{normal} if it maintains the following invariant throughout the run on each data word: each data value is present in at most one set.
This is captured by the following syntactic definition.
\begin{definition}[Normal Set Automaton]
A \emph{set automaton} is normal if each of its transitions $(p, \ell, \barz, \rho, \baru, \barv, q)$ satisfy the following properties:
\begin{enumerate}
    \item the set update $\rho$ is a transformation, and,
    \item $\baru$ is a unit vector and if $\baru \neq M^T(\rho)\cdot \barz$ then $\barv = M^T(\rho)\cdot \barz$, otherwise it is $\bar{0}$.
\end{enumerate}   
\end{definition}
In a normal set automaton $\calA$ with sets $Y$, since each global update is a transformation, the update monoid is a transformation monoid, i.e., $M_\calA\leq \bfT_Y$.
Further, for the same reason, data values in each set move to exactly one set. 
This, in combination with restricted local updates where the current data value can be added to at most one set ensures that each data value is present in at most one set throughout the run of a normal set automaton.


For every relation on a set $Y$, there exists a corresponding transformation on the set $\calP(Y)$. 
A consequence of this is that every set automaton can be normalised.

\begin{example}\label{ex:reltotrans}
Consider the relation $\rho$ from \Cref{ex:12IDZ} on the sets $Y=\{Y_1,Y_2,Y_3\}$. 
This relation can be expressed as a transformation $\rho'$ on the sets $\calP(Y)$ where each element represents a subset of elements in $Y$.
For instance, the set $Y_{(100)}$ tracks the data values present in the set $Y_1$ and not in $Y_2$ or $Y_3$, and the set $Y_{(110)}$ tracks the data values present in both $Y_1$ and $Y_2$ but not in $Y_3$.
The relation $\rho$ on $Y$ is captured by the transformation $\rho'$ on $\calP(Y)$ as follows:
$\rho(Y_1)=\{Y_1, Y_2\}$ is given as $\rho'(Y_{(100)})=Y_{(110)}$, $\rho(Y_2)=\{Y_2\}$ as $Y_{(010)}\to \{Y_{(010)}\}$, and $\rho'(Y_3)=Y_3$ as $Y_{(001)}\to \{Y_{(001)}\}$ respectively. 
\end{example}

\begin{propositionrep}\label{prop:sa-nsa}
For each set automaton $\calA$ with the family of sets $Y$ there is an equivalent to a normal set automaton $\calN(\calA)$ with the family of sets $\bftwo^Y\setminus \{\bar{0}\}$.
\end{propositionrep}
\begin{proof}
Consider a set automaton $\calA = (Q, Y, \Sigma, \Delta, I, F, C)$ with the set names $Y=\{1, \ldots, k\}$. 
Let $\calN(\calA)$ be the required set automaton with the set names $\bftwo^Y\setminus \{\bar{0}\}$ where $\bar{0}$ denotes the zero vector.  Automaton $\calN(\calA)$ has the same state space $Q$ as $\calA$ as well as the same sets of initial and final states. 
  The sets of the automaton $\calN(\calA)$  are denoted as $Y'_{\baru}$ for each nonempty vector  $\baru \in \bftwo^Y$.
For each 
\begin{align}
    \delta=(p, \ell, \barz, \rho, \baru, \barv, p') \in \Delta
    \label{dis:trans}
\end{align}
the set of transitions $\Delta'$
of $\calN(\calA)$ contains the tuple
\begin{align}\delta'=(p, \ell, e_{\barz}, \rho', e_{\baru'}, e_{{\barv}'}, p')
    \label{dis:normaltrans}
\end{align}
where 
\begin{align}
\rho' &=\{(\bars,\bars') \mid \bars,\bars'\in \bftwo^Y, \bars' = M^T(\rho) \cdot  \bars\} \\
\baru'&=({M^T(\rho) \cdot \barz} \vee \baru) \wedge \barv^c\\ 
\barv'&=\begin{cases} M^T(\rho) \cdot \barz &\text{if }\baru' \neq M^T(\rho) \cdot \barz\\
\bar{0}&\text{otherwise}
\end{cases}\end{align}
(Recall that $\bars \wedge \bars'$ denote pointwise Boolean conjunction on the vectors $\bars, \bars' \in \bftwo^Y$).
The set of accepting configurations of $\calN(\calA)$ is $C' = \{e_{\baru} \mid \baru \in C\}$. 

We prove the following claim ($\dagger$): 
On a data word $w$, the automaton $\calA$ reaches a configuration $(q,\overline{P})$ if and only if $\calN(\calA)$ reaches the configuration $(q,\overline{Q})$ where $\overline{P}$ and $\overline{Q}$ are mutually determined by each other by the following condition: For each $d\in \calD$, we have that $\bfone_{\overline{P}}(d) = \barz \Longleftrightarrow \bfone_{\barQ}(d) = e_{\barz}$. Clearly, the claim implies that $L(\calA)=L(\calN(\calA))$ by definition of $C'$. 

Next we prove the claim by induction on the length of the data word $w$. When $|w|=0$, the configuration of $\calA$ and $\calN(\calA)$ are initial and the claim holds trivially. Assume the claim holds for all $w$ of length $k$. Assume that $(q,\barP)$ and $(q,\barQ)$ are two corresponding configurations of $\calA$ and $\calN(\calA)$ respectively after reading $w$. 
By induction hypothesis, on the pair $(a,d)\in \Sigma \times \calD$, the transition (\ref{dis:trans}) is applicable in the configuration $(q,\barP)$ if and only if the transition (\ref{dis:normaltrans}) is applicable in the configuration $(q,\barQ)$. Let $(p',\overline{P'})$ and $(p',\overline{Q'})$ be the resulting configurations of $\calA$ and $\calN(\calA)$ respectively. 

We proceed by cases. 
For all data values $d' \in \calD$ such that $d' \neq d$,
\begin{align*}
  \barz =  \bfone_{\overline{P'}}(d')  &\Longleftrightarrow \barz= M^T(\rho) \cdot \bfone_{\barP}(d') \\
    &\Longleftrightarrow e_{\barz}= M^T(\rho') \cdot \bfone_{\barQ}(d') && (\text{By I.H. \& Def.~of $\rho'$)}\\
    &\Longleftrightarrow  e_{\barz} = \bfone_{\overline{Q'}}(d').
\end{align*}
Now for $d'=d$,
\begin{align*}
  \barz =  \bfone_{\overline{P'}}(d)  &\Longleftrightarrow \barz=  (M^T(\rho) \cdot \bfone_{\barP}(d) \vee \baru) \wedge \barv^c \\
    &\Longleftrightarrow e_{\barz} =  (M^T(\rho') \cdot \bfone_{\barQ}(d) \vee e_{\baru'})\wedge e_{\barv'}^c && (\text{By I.H.~ \& Def.~of $\rho'$})\\
    &\Longleftrightarrow  e_{\barz} = \bfone_{\overline{Q'}}(d).
\end{align*}
where $\baru'$ and $\barv'$ are in the definition of $\delta'$. Hence the Claim ($\dagger$) is proved.
\end{proof}

We now generalise normal set automata such that the invariant that each data value is present in at most one set is only maintained in the sets modified by global updates. 

A set $y$ of a set automaton is said to be \emph{stable} if $\rho(y)=\{y\}=\rho^{-1}(y)$ for each global update $\rho$ used in the transitions of the automaton. 

\begin{definition}[Quasi-normal set automaton]
Let $\calA = (Q, Y, \Sigma, \Delta, I, F, C)$ be a set automaton with the set of stable sets $S\subseteq Y$. 
The set automaton $\calA$ is \emph{quasi-normal} if the restriction $\calA\!\restriction_{(Y\setminus S)}=
(Q, Y\setminus S, \Sigma, \Delta', I, F, C')$ of $\calA$ to the non-stable sets is a normal set automaton, given by the following: $(p,\ell,\barz',\rho',\baru',\barv',q) \in \Delta'$  if 
$(p,\ell,\barz,\rho,\baru,\barv,q) \in \Delta$ where 
$\barz'=\barz\!\restriction_{(Y\setminus S)}$, $\rho'=\rho\!\restriction_{(Y\setminus S)}$, $\baru'=\baru\!\restriction_{(Y\setminus S)}$, and $\barv'=\barv\!\restriction_{(Y\setminus S)}$, and $C'=\{\baru\!\restriction_{(Y\setminus S)} \mid \baru \in C\}$.
\end{definition}

%% file: sec3-osbdecidability.tex
\subsection{Ordered Normal Set Automaton} \label{sec:dec}

In this section, we introduce ordered normal set automata and show that they have a decidable emptiness problem by reduction to the emptiness problem of ordered multicounter automata. 
Further, we extend this result to ordered quasi-normal set automata. 

Let $\calA$ be a normal set automaton with the family of sets $Y$.  In a set automaton, new data values are added to a set by local updates. On reading a data word of length $n$, a set of the automaton $\calA$ can have $n$ data values added by $n$ local updates. Thus, in general the contents of a set is unbounded. 
However, this depends on the global updates used in the transitions of $\calA$ allowing for such unbounded accumulation. 

Let $\{\rho_i \mid i \in I\}$ be the set of global updates used by transitions of $\calA$. Let $\rho = \cup_{i\in I}\rho_i$ be the union of the update relations and $\rho^+$ denote the transitive closure of $\rho$.
Consider the graph $(Y, \rho^+)$ and a set $y\in Y$. If there is no vertex $z$ with $(z,z)\in \rho^+$ that can reach $y$, then the induced subgraph consisting of the vertices $Y'\subseteq Y$ that can reach $y$ is acyclic. For each vertex $y' \in Y'$, we can show by induction that if the number of vertices from which $y'$ is reachable is $k$, then $y'$ contains at most $k+1$ data values at each point during the run over a data word. Thus, we have the below definition. 

\begin{definition}[Bounded set]\label{def:boundedset}
    Let $\calA$ be a normal set automaton $\calA$ with the family of sets $Y$. Let $y\in Y$ be a set. In the graph $(Y, \rho^+)$, if no vertex $z$ such that $(z,z) \in \rho^+$ can reach $y$, then the set $y$ is called a \emph{bounded set}. 
    A set $y\in Y$ is called a \emph{bounded set} if in the graph $(Y, \rho^+)$, there is no vertex $z$ such that $(z,z) \in \rho^+$ that can reach $y$. 
\end{definition}



We now introduce ordered normal set automata.
The idea is to impose an order on the non-bounded sets of the automaton and restrict the global updates on the non-bounded sets such that they empty the contents of a prefix of sets and move them to sets higher in the order whose contents not emptied. Such global updates are simulated by an ordered multicounter automaton using its restricted hierarchical zero-tests on a prefix of counters corresponding to the sets emptied by the global update. The contents of the bounded sets are tracked by the ordered multicounter automaton without using zero-tests by maintaining the number of data values in each set in its finite state space. 

Toward this, we recall the notion of an ordered partition.
Let $X$ be a set and $\leq$ be a  linear order on it. For subsets $Y,Y' \subseteq X$, we write $Y \leq Y'$ to mean that $y \leq y'$ for each $y\in Y$ and $y'\in Y'$.
An \emph{ordered partition} of $X$ is a tuple $(X_0, X_1)$ of disjoint subsets that cover $X$ (i.e., $X_0\uplus X_1 = X$) such that $X_0 \leq X_1$.

\begin{definition}[Ordered normal set automaton]
    Let $\calA$ be a normal set automaton with family of sets $Y_\calA$ and bounded sets $Z_\calA\subseteq Y_\calA$. The automaton $\calA$ is  \emph{ordered} if there is a linear order $\leq$ on $Y_\calA$ such that $Y_\calA\setminus Z_\calA \leq Z_\calA$, and for each global update transformation $\rho$ used in the transitions there is an associated ordered partition $(Y_0,Y_1)$ of $Y_\calA\setminus Z_\calA$ such that the tranformation $\rho$ maps the elements of $Y_0$ to $Y_1$, and is the identity transformation on $Y_1$. That is, the contents of each set in $Y_0$ is emptied and moved to a set in $Y_1$ and the contents of each set in $Y_1$ are not emptied. 
    A quasi-normal set automaton $\calB$ with family of sets $Y_\calB$ and stable sets $S_\calB$ is \emph{ordered} if the restriction $\calB\!\restriction_{(Y_\calB\setminus S_\calB)}$ is ordered. 
\end{definition}
Note that there is no restriction on the global update transformations on the bounded sets. 

We now present the main result of this section. 
To this end, we recall the below definition. 


\begin{definition}[Ordered multicounter automaton \cite{Rein08}]
\label{defn:OMA}
An ordered multicounter automaton $\calM$ is a tuple $(Q, \Sigma, O, \Delta, J, F)$ where $Q$ is the finite set of states, $\Sigma$ is the finite alphabet, $O$ is the ordered finite set of non-negative counters, $J$ is the set of initial states, $F\subseteq Q$ is the set of final states. Let $O=\{c_1, \ldots, c_k\}$. The set of transitions is given by
 \[ \Delta \subseteq Q\times (\Sigma \cup \{\varepsilon\})\times \{I_i, D_i, Z_{\leq i} \mid  i\in [k] \}\times Q.\] 
 From a given state $p$ on a label $a$, the transition $(p, a, I_j, q)$ increments the counter $c_j$ by one and moves to the state $q$. Similarly, $D_j$ decrements the counter $c_j$ by one and $Z_{\leq j}$ tests the counters $\{c_i \mid  1\leq i\leq j\}$ for zero value. 
Whenever the run tries to decrement the value of a counter below zero, the computation halts erroneously and rejects.
\end{definition}

A \emph{configuration} of an ordered multicounter automaton is a pair $(q, \bfv)$ where $q\in Q$ and $\bfv = (v_1, \ldots, v_k)$ where $v_i$ is the value of the counter $c_i$. 
An \emph{accepting run} of the automaton over a word is a sequence of transitions from an initial state with all the counters set to zero to a configuration where all counters are zero and the state is final. The language of an ordered multicounter automaton $\calM$ is the set of all finite words $w$ on which $\calM$ has an accepting run. 

The following is a well known result about ordered multicounter automata.

\begin{proposition}[Reinhardt, Theorem 6.1 of \cite{Rein08}]\label{prop:omca-dec}
Emptiness of ordered multicounter automata is decidable. 
\end{proposition}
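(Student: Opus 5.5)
This proposition is quoted from \cite{Rein08}, so in the paper it is used as a black box. Here is how I would reprove it.

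\textbf{Reduction to reachability.} Emptiness of an ordered multicounter automaton $\calM$ asks whether some final state with all counters zero is reachable from an initial state with all counters zero. Input letters play no role, so we erase them. A transition $Z_{\leq i}$ tests the prefix $c_1,\ldots,c_i$, so $c_1$ is the most frequently tested counter and $c_k$ the least. Write $k$ for the number of counters and argue by induction on $k$. For $k=0$ this is finite-automaton reachability. For $k$ counters without $Z_{\leq k}$ tests on $c_k$, the counter $c_k$ is untested and behaves as a vector addition system (VASS) coordinate. The base of the hierarchy with no zero tests at all is VASS reachability, decidable by Mayr and Kosaraju, as refined by Lambert.

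\textbf{Structure of a run.} Fix an accepting run. Every occurrence of $Z_{\leq k}$ forces all counters to zero, so the run splits into zero-to-zero segments containing no $Z_{\leq k}$. Hence it suffices, for each pair of states $(p,q)$, to decide whether there is such a segment from $(p,\bar 0)$ to $(q,\bar 0)$. These finitely many facts can then be combined by finite-automaton reachability. Inside a segment only $Z_{\leq j}$ with $j<k$ occur, and counter $c_k$ is a free VASS counter.

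I would not peel off $c_k$ directly. Instead I would follow Reinhardt and generalise the Kosaraju--Lambert decomposition. A candidate run is described by a \emph{marked graph-transition sequence}: a sequence of strongly connected pieces of the control graph, each annotated with the zero-test level it is allowed to cross. Each piece carries a linear system expressing its Parikh image and its entry and exit markings. Zero tests appear as constraints forcing coordinates $1,\ldots,j$ of certain intermediate markings to be $0$. Lower-priority coordinates (those with index $>j$) remain unconstrained at that point, and at the next level of the hierarchy they may again be treated as $\omega$, exactly as in the Karp--Miller abstraction.

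\textbf{Termination and the main obstacle.} I would define the same ordinal rank as in Lambert's proof and show the following. If a sequence is not perfect, meaning its characteristic linear system is bounded in some coordinate or some piece cannot be pumped, it can be refined into finitely many sequences of strictly smaller rank. If it is perfect, it yields an actual run. The main obstacle is this pumping lemma in the presence of zero tests. To pump a piece, we must repeat cycles that raise and then lower counters, and such repetitions must not disturb counters that a later $Z_{\leq j}$ requires to be exactly zero. This is where the ordering matters. Because every test is a prefix test, the counters that may be pumped freely inside a region delimited by $Z_{\leq j}$ tests are exactly $c_{j+1},\ldots,c_k$. Thus the pumping lemma can be applied level by level, by induction on $j$ from $k$ downward. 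When treating level $j$, the counters $c_1,\ldots,c_j$ are handled by the inductive hypothesis as exact zero-to-zero subruns. Without the linear order, pumping one counter could be blocked by tests on an incomparable one, and indeed the unordered case is undecidable, since Minsky machines can be encoded. So this step is where I would invest most care: I would first adapt Lambert's covering and pumping argument for a single level, then iterate it along the hierarchy.
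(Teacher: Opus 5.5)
The paper gives no proof of this proposition. It imports Reinhardt's theorem \cite{Rein08} as a black box, exactly as your opening sentence says, and that citation is all the paper needs. Everything after that sentence is an attempted reproof, and there the step you defer is essentially the whole theorem. Start with the induction on $k$: you announce it but never carry it out, and the splitting trick cannot carry it. Only cuts at $Z_{\leq k}$ are all-zero configurations. Once that level is removed, cuts at the highest remaining level $Z_{\leq j}$ with $j<k$ leave $c_{j+1},\dots,c_k$ arbitrary. So the segments cannot be glued by finite-automaton reachability. Nor can they be summarised and handed to a VASS algorithm, because they act on those counters through reachability relations that are in general not semilinear, already for plain VASS. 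This is precisely why a genuinely nested argument is needed.

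The flat decomposition you describe does not supply it. In the Kosaraju--Lambert characteristic system a strongly connected piece is visible only through its Parikh image and its boundary markings. A zero test fired inside a piece (on a cycle, hence possibly unboundedly often in one run) therefore cannot be expressed as a constraint on ``certain intermediate markings''. If tests may only sit between pieces, each sequence represents only runs with boundedly many tests, so no finite family of such sequences represents all runs.

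The missing idea is a \emph{nested} object. At each level, every maximal sub-run between consecutive tests of that level or higher must itself be a lower-level decomposition, used as a single edge with a \emph{variable} effect on the counters that are free there. The characteristic systems of all levels must be coupled. You also need a rank on these nested structures that decreases under refinement at any level, and Lambert's rank for flat sequences does not give this by itself. This nesting is, in essence, what Reinhardt's operators on relations generalising $+$ and ${}^*$ express.

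Finally, the pumping step is only asserted, and your level-by-level induction runs the wrong way. Going from $k$ downward, the hypothesis available at level $j$ concerns the outer levels, not the inner sub-runs that govern $c_1,\dots,c_j$. In fact the levels constrain each other in both directions. Pumping at an outer level needs the inner segments to realise large prescribed effects on the free counters. Conversely, the inner segments are only fireable if the outer level supplies large enough values of those same counters. Your observation that prefix tests leave the higher counters untested inside a region, so runs there are monotone in them, is the right reason the ordered case is tractable. Turning it into a proof, however, requires a simultaneous perfectness-and-pumping argument over the whole nested structure, which the proposal does not provide.
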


We simulate the run of an ordered normal set automaton using an ordered multicounter automaton. 
\begin{proposition}\label{prop:osa-to-omca}
For each ordered normal set automaton $\calA$, an ordered multicounter automaton $\calM$ recognising the string projection of $L(\calA)$ can be constructed.
\end{proposition}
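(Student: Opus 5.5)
The plan is to let the counters of $\calM$ record, for each non-bounded set, how many data values it currently holds, and to keep the exact contents of the bounded sets in the finite control. Normality makes this sound: each data value lies in at most one set, so the sets partition the present data values. A data word is then summarised by a multiset of anonymous \emph{tokens}, and only membership, never identity, matters for applicability and acceptance. Let $Y$ be the sets of $\calA$ and $Z\subseteq Y$ the bounded ones, with order $Y\setminus Z\leq Z$ as in the definition of ordered normal set automata. I take one counter $c_y$ for each $y\in Y\setminus Z$ and number the counters along $\leq$. Then for every global update $\rho$ with ordered partition $(Y_0,Y_1)$, the set $Y_0$ is a prefix $c_1,\dots,c_j$ of the counters.

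For the bounded sets I use the observation before \Cref{def:boundedset}. A bounded set $y$ reachable from $k$ vertices holds at most $k+1$ data values, so at most $|Y|+1$. The state of $\calM$ is therefore a pair $(q,\beta)$, with $q$ a state of $\calA$ and $\beta\colon Z\to\{0,\dots,|Y|+1\}$ the number of tokens in each bounded set. A letter $(a,d)$ processed by a transition $(p,a,\barz,\rho,\baru,\barv,p')$ is simulated by a block of transitions reading $a$ once and otherwise using $\varepsilon$-moves.
\begin{enumerate}
\item \emph{Membership test.} If $\barz=e_y$ with $y\notin Z$, apply $D_y$, which removes one token of $d$ from $y$ and blocks if $y$ is empty. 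If $\barz=e_y$ with $y\in Z$, require $\beta(y)>0$ and decrement it. If $\barz=\bar 0$, do nothing; the value is a fresh datum, which always exists since $\calD$ is infinite.
\item \emph{Global update.} For the non-bounded part, loop on $\varepsilon$-transitions that apply $D_y$ and then $I_{\rho(y)}$ for $y\in Y_0$. Finish with $Z_{\leq j}$, which certifies that every token has been moved. On bounded sets apply $\rho$ directly to $\beta$. Since $\rho$ is a transformation, it maps each set, bounded or not, to exactly one set. A token moving from a bounded set into a non-bounded set becomes an increment, and non-bounded sets never map into bounded ones by the ordering condition.
\item \emph{Local update.} By normality the data value ends in exactly one set or in none, as determined by $\baru,\barv$ and the image of $\barz$. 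Re-insert its token there by an increment or an update of $\beta$.
\end{enumerate}
At the end of the word, $\calM$ may enter a cleanup phase by $\varepsilon$-moves from a final $(q,\beta)$. Here it decrements any $c_y$ with $e_y\in C$, and it checks that every $y\in Z$ with $\beta(y)>0$ satisfies $e_y\in C$. It accepts with all counters zero.

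Correctness goes in two directions. First, map every run of $\calA$ on $w$ to a run of $\calM$ on $\str(w)$ by counting set sizes; transfers follow the actual movement of data values. Second, convert an accepting run of $\calM$ into a data word. Tokens are tracked individually: a fresh data value is allotted each time $\barz=\bar 0$, and a decrement of $c_y$ picks any current value in $y$. Because tokens in one set are interchangeable, the run of $\calA$ with these choices is valid and accepting.

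The step I expect to be the main obstacle is the global-update simulation, specifically matching the nondeterministic transfer loop to the hierarchical zero test. Everything hinges on $Y_0$ being exactly the counters $\leq j$. Any token not yet moved from $Y_0$ makes $Z_{\leq j}$ fail, so spurious partial transfers are excluded. Conversely, tokens in $Y_1$ are untouched, which is why the test is needed only on the prefix. I would state this as an invariant lemma: between simulated letters, $c_y$ equals $|X_y|$ for the corresponding configuration of $\calA$.
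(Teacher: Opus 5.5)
Your proposal is correct and follows essentially the paper's construction. Counters record the sizes of the non-bounded sets, and each global update is simulated by an $\varepsilon$-transfer loop from $Y_0$ to $Y_1$ certified by the hierarchical zero-test on the prefix $Y_0$; bounded sets are handled in the finite control, and a final cleanup drains the counters of sets $y$ with $e_y\in C$. Your deviations are harmless refinements: you use no counters at all for bounded sets, and you remove the current value's token before the global update and re-insert it afterwards. The latter even avoids the over-count that the paper's unconditional increment of $c_u$ causes when $\baru = M^T(\rho)\cdot\barz$, and your two-direction correctness argument supplies what the paper leaves implicit.
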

\begin{proof}



Let $\calA = (Q, \Sigma, Y, \Delta, I, F, \calC)$ be an ordered normal set automaton with the linear order $\leq$ on the family of sets $Y$. Let $Z \subseteq Y$ be the family of bounded sets of $\calA$. By definition $Y\setminus Z \leq Z$.

We construct an ordered multicounter automaton $\calM$ that simulates the ordered normal set automaton $\calA$. The automaton $\calM$ has a counter $c_y$ for each $y \in Y$ and the counters are ordered with respect to the order $\leq$.
The counter $c_y$ tracks the number of data values in the set $y \in Y$ in a configuration of $\calA$ during the simulation.    The automaton also remembers which of the bounded sets are nonempty and how many data values are in them using its state space. This memory is modified whenever the bounded sets are updated through global or local updates.
This is further elaborated in Item~\ref{item:update} below.

Consider a word $w\in L(\calA)$ with an accepting run $\rho = \delta_1 \delta_2\cdots \delta_n$ of the ordered normal set automaton $\calS$.
Let $\delta=(p, a, e_z, \rho, e_u, \barv, q)$ be a transition in the run. Note that since the automaton $\calA$ is normal, $e_z$ and $e_u$ are unit vectors and $\barv$ is either a unit or zero vector.
It is simulated by the ordered multicounter automaton $\calM$ as follows:

\begin{enumerate}
	\item The states $p, q$ and the current label $a$ are handled by the underlying finite state automaton. 
	\item The automaton $\calM$ tests that the counter $c_z$ is nonzero (by successively decrementing and  incrementing), signifying the presence of a data value with the characteristic vector $e_z$. 
	\item  Let $(Y_0,Y_1)$ be the ordered partition of $Y\setminus Z$  associated with $\rho$. 
Let $y \in Y_0$. The automaton $\calM$ successively decrements the counter $c_y$ and increments the counter $c_{z'}$ such that $e_{z'}=M^T(\rho) \cdot e_y$ (Note that $z' \in Y_1$). 
This process is repeated for each set $y\in Y_0$ (on $\varepsilon$-transitions) until the automaton $\calM$ guesses the counters $\{ c_y \mid y \in Y_0\}$ to be zero.
The guess is verified by zero-testing the set of counters $\{c_{y} \mid y \in Y_0\}$. Note that the sets in $Y_0$ form a prefix that can be zero-tested using the hierarchical zero-tests of $\calM$.
\item \label{item:update} The global update is simulated on the bounded sets by decrementing the counter $c_x$ for $x\in Z$ and incrementing the counter $c_{x'}$ such that $e_{x'}=M^T(\rho) \cdot e_x$ (Note that $x' \in Y$ and not necessarily in $Z$), until the former become zero. This step is accomplished without the use of zero-tests by remembering in the state space the size of each of the bounded sets that are nonempty. Further, the finite state automaton remembers this information after each transition.
	\item Finally the local updates $e_u$ and $\barv$ are simulated by incrementing the counter $c_u$ and decrementing the counter $c_v$ if $\barv\neq \overline{0}$. 
    
\end{enumerate}

The transitions $\delta_1$ and $\delta_n$ are ensured to be initial and final transitions, respectively, by the underlying finite state automaton. At the end of the run decrements all counters $c_x$ such that $e_x \in C$ to zero and performs a zero-test on all the counters verifying that all data values reached an accepting configuration.
\end{proof}

Thus, by \cref{prop:omca-dec,prop:osa-to-omca} we obtain the below result. 
\begin{theorem}
Emptiness of ordered normal set automata is decidable. 
\end{theorem}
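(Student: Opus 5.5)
The plan is to combine the two results just established: \cref{prop:osa-to-omca} and \cref{prop:omca-dec}. Given an ordered normal set automaton $\calA$, I first use \cref{prop:osa-to-omca} to construct an ordered multicounter automaton $\calM$ whose language is the string projection $\{\str(w)\mid w\in L(\calA)\}$. The key observation is that $L(\calA)$ is empty exactly when its string projection is empty. Every data word in $L(\calA)$ contributes its projection. Conversely, a string in $L(\calM)$ is, by construction, the projection of some accepted data word. Emptiness of $\calA$ therefore reduces to emptiness of $\calM$, which is decidable by \cref{prop:omca-dec}.

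The construction of $\calM$ is effective. All of its ingredients can be computed from the finite description of $\calA$:
\begin{itemize}
\item the counters, one per set;
\item the order on the counters, inherited from the given order on the sets;
\item the ordered partition attached to each global update;
\item the bounded sets, obtained from reachability in the finite graph $(Y,\rho^+)$;
\item the bounded contents of the bounded sets, tracked in the finite state space.
\end{itemize}
The correctness of the reduction is therefore a direct consequence of these two results.

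The step I expect to be the main obstacle is the converse direction of soundness: from an accepting run of $\calM$ I must recover a data word accepted by $\calA$. My approach is to reconstruct data values from counter tokens. Each increment caused by a local update (adding the current data value to a set) is assigned a fresh data value. Each transfer of one unit from $c_y$ to $c_{z'}$ moves that token between sets. A test of $c_z$ for non-emptiness is then realised by choosing, as the current data value, some token currently sitting in $c_z$.

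Several points then need checking. Because the invariant of normality holds, each token lies in exactly one set. Counter values are interchangeable, so any choice of token is consistent with $\calA$'s semantics. The hierarchical zero-tests ensure that each global update is applied completely to the prefix $Y_0$. Finally, the closing phase of $\calM$ decrements only counters $c_x$ with $e_x\in C$ and then zero-tests all counters; this guarantees that every remaining data value has an accepting characteristic vector.
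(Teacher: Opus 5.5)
Your proposal is correct and matches the paper's argument: the theorem follows by composing \cref{prop:osa-to-omca} with \cref{prop:omca-dec}, using that $L(\calA)$ is empty if and only if its string projection is. One correction to your extra soundness sketch: a fresh data value should be created only for transitions with $\barz=\bar{0}$, whereas when $\barz=e_z$ the local update merely relocates the token chosen from $c_z$ (and changes nothing if $u$ is already that token's image under $\rho$), so it must not create a new token.
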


We now extend the above decidability result to ordered quasi-normal set automata. 

\begin{propositionrep}
    For each ordered quasi-normal set automaton there is an equivalent ordered normal set automaton.
\end{propositionrep}
\begin{proofsketch}
    Given an ordered quasi-normal set automaton $\calA$, consider the set automaton $\calN(\calA)$ obtained on normalising it as given by \Cref{prop:sa-nsa}. 
    We show that the restriction of $\calN(\calA)$ to the family of sets given by the unit vectors corresponding to the non-stable sets of $\calA$, is an equivalent ordered set automaton. 
\end{proofsketch}
\begin{proof}
Assume that $\calA = (Q, Y, \Sigma, \Delta, I, F, C)$ is an ordered quasi-normal set automaton with the family of stable sets $S\subseteq Y$. By assumption the set automaton $\calA\!\restriction_{Y \setminus S}$ is ordered by a linear order $\leq$ on $Y\setminus S$. Let $U = \{\bar0 \} \cup \{e_y \mid y \in Y\setminus S\}$ be a set of characteristic vectors. Let $\calN(\calA)$ be the normal set automaton obtained on normalising $\calA$ as given in \Cref{prop:sa-nsa}. Consider the restriction $\calN(\calA)\!\restriction_{Y'}$ with the family of sets
$Y'=\{\baru \in  \bftwo^Y 
\mid \baru\!\restriction_{(Y\setminus S)} \in U\}$ where the restriction of the vectors to the non-stable sets of $Y\setminus S$ of $\calA$ are unit vectors, i.e., the non-stable sets occur at most once. Let $\leq'$  be a linear order on $Y'$ such that $\leq' \,\supseteq\, \{(\baru, \barv) \mid \baru\!\restriction_{Y\setminus S} = e_u, \baru\!\restriction_{Y\setminus S} = e_v, u \leq v\}$. It is easy to verify that 
$\calN(\calA)\!\restriction_{Y'}$ is an ordered set automaton with the linear order $\leq'$ that is equivalent to $\calA$.
\end{proof}
\begin{corollary}\label{cor:oqnsa-dec}
Emptiness of ordered quasi-normal set automata is decidable.
\end{corollary}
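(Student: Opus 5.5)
The plan is to chain the two results that come immediately before. Let $\calA$ be an ordered quasi-normal set automaton. The preceding proposition gives an equivalent ordered normal set automaton $\calA'$ with $L(\calA')=L(\calA)$. The theorem on ordered normal set automata then decides emptiness of $\calA'$, and this answers emptiness of $\calA$.

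Since the decidability claim is just this composition, the substance lies in confirming that both steps are effective and preserve the language.

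\textbf{Step 1: constructing $\calA'$ effectively.} The construction of the preceding proposition produces $\calA'$ in two moves.
\begin{enumerate}
\item It applies the normalisation $\calN(\cdot)$ of \Cref{prop:sa-nsa}. Each transition is mapped by explicit Boolean matrix computations, so this is computable.
\item It restricts to the family $Y'$ of characteristic vectors whose non-stable part is $\bar0$ or a unit vector.
\end{enumerate}
The restriction loses no words. On any run of $\calA$, each data value lies in at most one non-stable set, because the restriction of $\calA$ to its non-stable sets is normal. Hence every characteristic vector that actually occurs in a run already belongs to $Y'$.

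\textbf{Step 2: checking that $\calA'$ is ordered.} Stable sets are fixed by every global update. So on $Y'$ the transformation induced by a global update $\rho$ moves only the non-stable component, and it moves it along the ordered partition $(Y_0,Y_1)$ associated with $\rho$. The order $\leq'$ that lifts $\leq$ therefore gives ordered partitions for the non-bounded sets of $\calA'$.

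\textbf{Step 3: deciding emptiness.} Apply \Cref{prop:osa-to-omca} to $\calA'$. This yields an ordered multicounter automaton $\calM$ whose language is the string projection of $L(\calA')$. A data language is empty exactly when its string projection is empty, so $L(\calA)=\emptyset$ iff $L(\calM)=\emptyset$. The latter is decidable by Reinhardt's result, \Cref{prop:omca-dec}.

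\textbf{Main obstacle.} The delicate point is Step 2, specifically the interaction between bounded and stable sets. Some vectors in $Y'$ have $\bar0$ on the non-stable part and only stable coordinates set. These correspond to sets that every global update fixes, so they may be non-bounded, since they carry self-loops. They must therefore be placed in the order $\leq'$ and given a place in each ordered partition.

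I would handle this by placing these vectors at the top of $Y_1$ for every partition. There $\rho$ acts as the identity on them, which is consistent with the requirement that $\rho$ be the identity on $Y_1$. I would also check that the $\leq'$-prefix structure, with non-bounded sets below bounded ones, is preserved. Once this holds, the corollary follows immediately.
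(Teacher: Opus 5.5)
Your proposal is correct and follows the paper's route: the corollary is exactly the composition of the preceding proposition (normalise via $\calN(\cdot)$, then restrict to $Y'$) with the decidability of ordered normal set automata, which comes from \Cref{prop:osa-to-omca} and \Cref{prop:omca-dec}. Your treatment of the vectors with $\bar 0$ on the non-stable part, placing them at the top of the non-bounded sets and inside every $Y_1$, correctly fills in a detail that the paper's proof of that proposition leaves as ``easy to verify''; you could also simply discard these vectors, because the unit-vector condition on local updates ensures that no data value ever has such a characteristic vector.
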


\subsection{Closure Properties}

Closure properties of set automata follow from that of class automata (See \Cref{prop:sa=ca}). 
We prove closure properties for quasi-normal set automata required for the translation of logical formulas to automata. 

\begin{lemmarep}\label{lem:qnsa-closure}
Quasi-normal set automata are closed under letter-to-letter renaming, and union and intersection with data automata. 
\end{lemmarep}
\begin{proofsketch}
    It is easy to verify that the standard construction for closure under letter-to-letter renaming preserves quasi-normality. 
    For closure under union and intersection with data automata, we use our result that each data automata is equivalent to a set automaton where each set is stable (See \Cref{prop:isa=cma}). The result follows from the standard constructions using this equivalent set automaton.
\end{proofsketch}
\begin{proof}


Consider a quasi-normal set automaton $\calA$ with sets $Y_\calA$ and stable sets $S_\calA$ and a data automaton $\calB$ with sets $Y_\calB$ over the alphabet $\Sigma$. By \Cref{prop:isa=cma}, for the data automaton $\calB$, there exists an equivalent set automaton $\calB'$ with sets $S_{\calB'}$ where each set is stable. 

If $h:\Sigma\rightarrow \Sigma'$ is a renaming from alphabet $\Sigma$ to an alphabet $\Sigma'$, it is easily checked that $h(L(\calA))$ is accepted by the automaton $\calA'$ obtained by applying the renaming $h$ to each of the transitions in $\calA$. Further, this construction preserves quasi-normality of the set automaton.

Closure under union can be shown by taking the disjoint union of $\calA$ and $\calB'$. The resulting set automaton has the family of stable sets $S_\calA\cup S_{\calB'}$.
Since all the sets of $\calB'$ are stable, the restriction of the resulting set automaton to the non-stable sets $Y_\calA\setminus S_\calA$ results in a normal set automaton due to the quasi-normality of $\calA$. 

For closure under intersection, consider the synchronous product of $\calA$ and $\calB'$. By standard arguments, this automaton accepts the language $L(\auto_1) \cap L(\auto_2)$. Similarly, here the resulting automaton is quasi-normal. 
\end{proof}

%% file: sec4-logic.tex
\section{Equivalence of Set Automata and $\FO^2$ with Guarded Regular Predicates}\label{sec:logic}
\nosectionappendix

\begin{toappendix}
   \subsection*{Proofs for \Cref{sec:logic}}
\end{toappendix}

In this section, we present a logical characterisation of set automata. 

\subsection{Set Automata to $\FO^2$ with Guarded Regular Predicates}


\begin{propositionrep}\label{prop:aut-to-logic}
For each set automaton $\calA$ with update monoid $M$ there is a formula $\phi_\calA \in \EMSO^2[\Sigma, <,  +1,  \sim, \moo1, \widetilde{\calF}_{M}]$ such that $L(\calA) = L(\phi_\calA)$.  
\end{propositionrep}
\begin{proofsketch}
We write a formula that is satisfied by a data word $w=(a_1,d_1)\cdots(a_n,d_n)$ if and only if there is a successful run of $\calA$ on $w$ using standard ideas. 
In this sketch, we focus on how the global and local updates in the transitions of $\calA$ are captured by a formula with guarded regular predicates. 

We ensure that (1) at each position the sets in which the input data value is present is consistent with the global and local updates of the transition, and (2) in two class-successive positions, the sets in which the data value is present is consistent with the global updates in the positions strictly between them. 
Note that we do not need to consider the local updates in the positions between two successive positions of a given class as local updates only affect the input data value at that position.


Let $\Delta$ and $\calS$ denote the set of transitions and family of sets of $\calA$ respectively. 
We use the monadic predicates $\overline{X}=\{X_\delta \mid \delta \in \Delta\}$ to indicate the transition used at a position in the run. We use the monadic predicates $\overline{Y} = \{Y_s \mid s \in \calS\}$ to indicate the contents of the sets during the run: $Y_s$ is true, for $s\in \calS$, at a position $j$ if during the run of $\calA$ after reading $(a_j,d_j)$ we have that $d_j$ is present in the set $Y_s$. Let $\barZ=\barX\cup\barY$. 
Our formula is of the form $\exists\barZ  \,\varphi$ where $\varphi \in \FO^2[\Sigma,\barZ, <, +1,  \sim, \moo1, \widetilde{\calF}_M]$ is the conjunction of the formulas detailed in this sketch.
To capture that the characteristic vector of membership at position $x$ is $\rho\in \bftwo^{\calS}$, we write,
\[\rho(x) \coloneqq \bigwedge_{\substack{s \in \calS,\\  \rho(s) = 1}} Y_s(x) \wedge \bigwedge_{\substack{s \in \calS,\\ \rho(s) = 0}} \neg Y_s(x).\]

First, we ensure that the updates made at each position $x$ with label $(a_i,d_i)$ is consistent with the transition $\delta=(p,a_i,\chi,t,\alpha,\beta,q)$ at position $x$.
That is, the current data value $d_i$ with characteristic vector $\chi$ after the application of the transition $\delta$ with global update $t$ and local updates $\alpha$ and $\beta$ has the characteristic vector of membership $\rho_\delta=M^T(t)\cdot \chi +\alpha-\beta$, i.e., $\rho(x)$ is true. We write
\[
\forall x\ ( X_\delta(x) \rightarrow \rho_\delta(x)).
\]

Now, it suffices to ensure that the contents of the sets in two class-successive positions are consistent with the global updates in positions strictly between them. Toward this, we introduce the below notation.
Let $\bftwo^{\barZ}$ denote the characteristic vectors of the monadic predicates $\barZ$. 
Let $\Sigma'=\Sigma\times \bftwo^{\barZ}$.

Let $\Sigma' \subseteq \Sigma \times {\bftwo^{\overline{X}}}$ be the set of all vectors that denotes the interpretation of the monadic variables $\overline{Y}$. 
Let $\pi_{\Delta}:\Sigma'\rightarrow \Delta$ be the projection function that maps a letter $\ell \in \Sigma'$ to the (unique) $\delta \in \Delta$ such that $X_\delta$ is true in $\ell$.  We extend the map $\pi_{\Delta}$ to the map  $\pi_{\Delta}^*: \Sigma'^* \rightarrow \Delta^*$ in the natural manner.

Let $\pi:\Delta \rightarrow M$ be the projection map defined as $\delta = (p,\gamma, \chi, t, \alpha, \beta, q) \mapsto t$ that maps each transition to its transformation given as an element of the update monoid $M$. 
We extend the map $\pi$ to the map $\pi^*:\Delta^*\rightarrow M^*$ in the natural manner.

Let $\sigma:M^* \rightarrow M$ be the product morphism that maps each word $m_1\cdots m_n$ to the product of $m_1,\ldots, m_n$.
Clearly, the morphism $\sigma \circ \pi^* \circ \pi_{\Delta}^* : \Sigma'^* \rightarrow M$ maps a sequence of letters in $\Sigma'$ to the product of the corresponding transformations in $M$.


For $\chi \in \bftwo^{\overline{\calS}}$, let $\Delta_\chi \subseteq \Delta$ denote all transitions with the membership vector $\chi$. 
Assume that a position $y$ is labelled with a transition $\delta = (p, a_j, \chi, t, \alpha, \beta, p') \in \Delta$. 
Let position $y$ be the class-successor of position $x$. Assume that the position $x$ satisfies $\rho(x)$, i.e., the characteristic vector of membership of the data value in position $x$ is $\rho$, and $y$ is labelled by a transition in $\Delta_\chi$. Then we ensure that the product of transformations between the positions strictly in between $x$ and $y$ is an element $m\in M$ such that $\chi = \rho \cdot m$. We write

\[ \forall x \forall y \bigwedge_{\substack{\chi,\rho\in \bftwo^{\calS}\\ \delta\in \Delta_\chi}}(x\moo1 = y \,\wedge\, \rho(x) \,\wedge\, X_\delta(y) \rightarrow {\widetilde{L}}(x,y))
\] %
where $L \coloneqq \bigcup_{\substack{m \in M, \\ \chi = \rho \cdot m}} (\sigma \circ \pi^* \circ \pi_{\Delta}^*)^{-1}(m)$.
\end{proofsketch}
\begin{proof}

Let $\calA$ be a set automaton with update monoid $M$. 

We write a formula that is satisfied by a data word $w=(a_1,d_1)\cdots(a_n,d_n)$ if and only if there is a successful run of $\calA$ on $w$. 

Let $\Delta$ be the set of transitions of $\calA$. 
Let $\overline{X}$ be the set of monadic predicates $\{X_\delta \mid \delta \in \Delta\}$ which are used to indicate the transition used at a position in the run.
Let $\calS$ denote the family of sets of $\calA$. We use the monadic predicates $\overline{Y} = \{Y_s \mid s \in \calS\}$ to indicate the contents of the sets during the run: $Y_s$ is true, for $s\in \calS$, at a position $j$ if during the run of $\calA$ after reading $(a_j,d_j)$ we have that $d_j$ is present in the set $Y_s$. Let $\barZ=\barX\cup\barY$.

Our formula is of the form $\exists\barZ  \,\varphi$ where $\varphi \in \FO^2[\Sigma,\barZ, <, +1,  \sim, \moo1, \widetilde{\calF}_M]$ is the conjunction of the following formulas.

\begin{enumerate}

\item\label{item:uniquetransition} Each position is labelled by exactly one transition of the automaton, i.e.,
\[
\forall x \bigvee_{\delta\in \Delta} X_\delta(x)  \,\wedge\,  
\forall x   \bigwedge_{\substack{\delta,\delta'\in \Delta\\ \delta \neq \delta'}}\neg (X_\delta(x) \wedge X_{\delta'}(x)).
\]

\item Let $I \subseteq \Delta$ denote the set of transitions from an initial state of the automaton. 
We ensure that the first position is labelled by transitions from $I$, i.e., 
\[
\forall x  \,(\forall y\ y+1 \neq x  \rightarrow \bigvee_{\delta\in I} X_\delta(x)). 
\]

\item  Let $F \subseteq \Delta$ denote the set of transitions to a final state of the automaton. 
We ensure that the last position is labelled by a transition in $F$, i.e.,
\[
\forall x  \,(\forall y\ x+1 \neq y  \rightarrow \bigvee_{\delta\in F} X_\delta(x)). 
\]

\item We say a pair of transitions $(\delta,\delta')$ are \emph{compatible}, denoted as $\mathit{comp}(\delta,\delta')$, if the target state of $\delta$ is the source state of $\delta'$. We state that consecutive positions are labelled by compatible transitions, i.e., 
\[
\forall x \forall y \,(x +1 = y  \rightarrow \bigvee_{\substack{\delta, \delta'\in\Delta,\\ \mathit{comp}(\delta,\delta')}} X_\delta(x) \wedge X_{\delta'}(y)).
\]

\item Let $\Delta_{\bar{0}}$ be the set of transitions where the characteristic vector is the zero-vector. We ensure that class-minimal positions are labelled by transitions from $\Delta_{\bar{0}}$, i.e., 
\[ 
\forall x  \,( \neg \exists y\, y \moo 1 = x  \rightarrow \bigvee_{\delta\in\Delta_{\bar{0}}}  X_\delta(x)). 
\]

\item Assume that the set automaton $\calA$ has the acceptance condition (2) given in \Cref{lem:acceptance}. Let $F \subseteq \Delta$ denote the set of transitions to a local final state of $\calA$. We state that every class-maximal position is labelled by a transition to a local final state, i.e., 
\[
\forall x  \,( \forall y\,  x \moo 1 \neq y \rightarrow \bigvee_{\delta\in F} X_\delta(x)).
\]

\item It remains to ensure that the data values present in the sets $\calS$ are according to the updates made by the transitions in $\Delta$. That is, the labelling by the $\overline{Y}$-predicates are consistent with the run indicated by the $\overline{X}$-predicates. Toward this, we introduce the below formula. 
Let $\rho \in \bftwo^{\calS}$ be a characteristic vector. 
To capture that the characteristic vector of membership at position $x$ is $\rho$, we write,
\[\rho(x) \coloneqq \bigwedge_{\substack{s \in \calS,\\  \rho(s) = 1}} Y_s(x) \wedge \bigwedge_{\substack{s \in \calS,\\ \rho(s) = 0}} \neg Y_s(x).\]

Now, we ensure that the updates made at each position $x$ with label $(a_i,d_i)$ is consistent with the unique transition $\delta=(p,a_i,\chi,t,\alpha,\beta,q)$ at position $x$.
That is, the current data value $d_i$ with characteristic vector $\chi$ after the application of the transition $\delta$ with global update $t$ and local updates $\alpha$ and $\beta$ has the characteristic vector of membership $\rho_\delta=M^T(t)\cdot \chi +\alpha-\beta$. We write
\[
\forall x\ ( X_\delta(x) \rightarrow \rho_\delta(x)).
\]

\item Finally, we ensure that the data values present in the sets $\calS$ in two class-successive positions are consistent with the global updates in the positions strictly between them. 

Let $\bftwo^{\barZ}$ denote the characteristic vectors of the monadic predicates $\barZ$. 
Let $\Sigma'=\Sigma\times \bftwo^{\barZ}$.
Let $\Sigma' \subseteq \Sigma \times {\bftwo^{\overline{X}}}$ be the set of all vectors that denotes the interpretation of the monadic variables $\overline{Y}$. 

Let $\pi_{\Delta}:\Sigma'\rightarrow \Delta$ be the projection function that maps a letter $\ell \in \Sigma'$ to the unique $\delta \in \Delta$ determined by Item~\ref{item:uniquetransition} above such that $X_\delta$ is true in $\ell$.  We extend the map $\pi_{\Delta}$ to the map  $\pi_{\Delta}^*: \Sigma'^* \rightarrow \Delta^*$ in the natural manner.

Let $\pi:\Delta \rightarrow M$ be the projection map defined as $\delta = (p,\gamma, \chi, t, \alpha, \beta, q) \mapsto t$ that maps each transition to its transformation given as an element of the update monoid $M$. 
We extend the map $\pi$ to the map $\pi^*:\Delta^*\rightarrow M^*$ in the natural manner.

Let $\sigma:M^* \rightarrow M$ be the product morphism that maps each word $m_1\cdots m_n$ to the product of $m_1,\ldots, m_n$.
Clearly, the morphism $\sigma \circ \pi^* \circ \pi_{\Delta}^* : \Sigma'^* \rightarrow M$ maps a sequence of letters in $\Sigma'$ to the product of the corresponding transformations in $M$.

For $\chi \in \bftwo^{\overline{\calS}}$, let $\Delta_\chi \subseteq \Delta$ denote all transitions with the membership vector $\chi$. 
Assume that a position $x$ is labelled with a transition $\delta = (p, a_j, \chi, t, \alpha, \beta, p') \in \Delta$. 
Let position $y$ be the class-successor of position $x$. Assume that the position $x$ satisfies $\rho(x)$, i.e., the characteristic vector of membership of the data value in position $x$ is $\rho$, and $y$ is labelled by a transition in $\Delta_\chi$. Then we ensure that the product of transformations between the positions strictly in between $x$ and $y$ is an element $m\in M$ such that $\chi = \rho \cdot m$. We write

\[ \forall x \forall y \bigwedge_{\substack{\chi,\rho\in \bftwo^{\calS}\\ \delta\in \Delta_\chi}}(x\moo1 = y \,\wedge\, \rho(x) \,\wedge\, X_\delta(y) \rightarrow {\widetilde{L}}(x,y))
\]
where $L \coloneqq \bigcup_{\substack{m \in M, \\ \chi = \rho \cdot m}} (\sigma \circ \pi^* \circ \pi_{\Delta}^*)^{-1}(m)$.
\end{enumerate}
The proof is completed by examining that a data word $w$ satisfies the formula $\exists \barZ \varphi$ if and only if $\calA$ has a successful run on $w$.
%
\end{proof}


\subsection{$\FO^2$ with Guarded Regular Predicates to Quasi-Normal Set Automata}\label{subsec:guardedlogic-to-automata}

In this section we translate the guarded logic formulas to a subclass of quasi-normal set automata called \emph{suffix-storing set automata}. 

\begin{definition}[Suffix-storing set automaton]
Let $M$ be a monoid and $h:\Sigma^* \to M$ be a morphism. A set automaton over the alphabet $\Sigma$ is \emph{$h$-suffix-storing} if it is a quasi-normal set automaton whose non-stable family of sets is of the form $\{X_m \mid m \in M\}$ and that obeys the following property: after reading the input prefix $w \in (\Sigma\times \calD)^*$, the set $\{ m \mid X_m \neq \emptyset\}$ is a subset of images of suffixes of $\str(w)$ under $h$.
\end{definition}

The rest of the section is devoted to the proof of \Cref{prop:logic-to-aut}.

\begin{proposition}\label{prop:logic-to-aut}
For each formula $\psi \in \EMSO^2[\Sigma, <, +1,  \sim, \moo1, \widetilde{\calF}_h]$, there is a $h$-suffix-storing set automaton $\calA_\psi$ such that $L(\psi)=L(\calA_\psi)$.  
\end{proposition}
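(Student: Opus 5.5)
We follow the scheme of \cite{BMSSD11} for $\emsotwo[\Sigma,<,+1,\sim,\moo1]$: normalise, let a data automaton handle everything except the guarded predicates, and realise each guarded predicate by a suffix-storing gadget. We then combine these pieces with \Cref{lem:qnsa-closure}.

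\emph{Normal form.} The existential monadic quantifiers $\exists\barX$ are handled by renaming. We work over the extended alphabet $\Sigma\times\bftwo^{\barX}$ and project at the end, which is allowed by closure under letter-to-letter renaming in \Cref{lem:qnsa-closure}; projection keeps the suffix-storing property because $h$ only depends on the $\Sigma$-component. For the first-order kernel we apply the standard Scott normal form, introducing fresh monadic predicates for subformulas. This gives $\bigwedge_i\forall x\forall y\,\chi_i\wedge\bigwedge_j\forall x\exists y\,\psi_j$ with $\chi_i,\psi_j$ quantifier-free. Each binary atom $\tildeL(x,y)$ with $L=h^{-1}(P)$ splits by the value $m=h(\text{factor})$. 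By also guessing, at each position, a monadic \emph{type} that records the relevant information, we reduce to two kinds of conjuncts. The first kind are formulas of $\emsotwo[\Sigma,<,+1,\sim,\moo1]$ over an enlarged alphabet, which are accepted by a data automaton. The second kind are ``predicate constraints''. These say, for unary types $\alpha,\beta$ and $m\in M$: (a) $\forall x\forall y\,(\alpha(x)\wedge\beta(y)\wedge x\sim y\wedge x<y\rightarrow h(w_{(x,y)})\neq m)$, or (b) $\forall x\,(\alpha(x)\rightarrow\exists y\,(\beta(y)\wedge x\sim y\wedge x<y\wedge h(w_{(x,y)})=m))$, together with the mirror versions looking left. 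Because the guard forces $x\sim y$, the constraints concern only pairs in the same class, which is the point of guardedness.

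\emph{Gadget for predicate constraints.} For backward-looking constraints we build a quasi-normal set automaton with non-stable sets $\{X_m\mid m\in M\}$. When it reads $(a,d)$, the global update is the transformation $X_m\mapsto X_{m\cdot h(a)}$; these transformations form a representation of the right regular action of $M$. Before this update, the current data value $d$ is checked against the sets. Then $d$ is removed from its previous set and inserted into $X_1$ by local updates. This runs one position late: we store in $X_m$ exactly those data values whose last occurrence is followed by a factor with image $m$. This is a legitimate normal update. The invariant that every nonempty $X_m$ has $m$ equal to the $h$-image of a suffix of $\str(w)$ holds by construction. The set containing $d$ at a later occurrence of $d$ gives $h$ of the factor since the \emph{class predecessor} only. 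For a general (non-adjacent) earlier occurrence in the class, the factor composes through the intermediate occurrences, and this is where the class positions matter. We therefore additionally record, as stable sets (tracked as in a data automaton), a finite-state summary per class. This summary consists of the set of pairs (type $\alpha$, value $m'$) realised by earlier class members $x$, updated by multiplying by the class-successor factor value and the letter. This summary is finite, lying in $\calP(\text{types}\times M)$, and it is updated only at the class's own positions. Hence it is carried by stable sets, with the data value's membership in the stable sets encoding the summary. Constraint (a) is then checked locally at $y$. For (b) and the forward versions we guess obligations instead: a pending ``need some later $y$ of type $\beta$ reaching value $m$'' is stored in the same per-class summary, and final acceptance $C$ forbids pending obligations.

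\emph{Combination and main obstacle.} We intersect the gadgets and the data automaton using \Cref{lem:qnsa-closure} and the synchronous product. For gadgets, the product of suffix-storing automata over the same $h$ can share a single family $\{X_m\}$, because all of them use the identical global update. Everything that differs is put into stable sets. This keeps the automaton $h$-suffix-storing. The step we expect to be hardest is the correct encoding of the per-class summary. The per-class summary must be updated using the factor value since the class predecessor, and we must show that the pair (non-stable set membership, stable summary) stays consistent when normality forces a single non-stable set per value. We would try first to verify by induction on the prefix the invariant: ``$d\in X_m$ iff $h$ of the factor after $d$'s last occurrence is $m$, and the stable sets of $d$ encode $\{(\alpha,h(w_{(x,\text{last})})\cdot h(a_{\text{last}}))\}$''.
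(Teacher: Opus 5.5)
This is essentially the paper's construction. Both use a single shared family $\{X_m\mid m\in M\}$ of non-stable sets, updated by $X_m\mapsto X_{m\cdot h(a)}$, with the current data value reset into $X_1$. Both then add stable sets that change only at the class's own positions, and use the acceptance condition to reject pending obligations. The paper's $A_\gamma$, $H_i$, $F_i$ carry the same information as your per-class summary. The only difference is that the paper measures values relative to the latest $\alpha_i$ rather than the latest occurrence.

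The one step to tighten is your normal form. Collapsing every premise to $x\sim y\wedge x<y$ ``by guessing monadic types'' is not justified. A constraint restricted to class-successor pairs, or only to class-distant pairs, does not obviously reduce to one over all earlier class members. As the paper does, keep $o(x,y)$ and $\epsilon(x,y)$ in the conjuncts, and let your gadget treat the class predecessor separately. This is easy in your setup: the class predecessor's factor value is exactly the $X_m$-membership read before the update, and its type is the stored type of the last occurrence. Adjacency $x+1=y$ can be isolated by the definable unary predicate $\exists x\,(x+1=y\wedge x\sim y)$.
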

We proceed by translating the formulas to suffix-storing set automata.

Assume that we are given a formula $\psi$ of the form $\exists \barX \varphi$ where $\varphi\in\FO^2[\Sigma,\barX, <,  +1,  \sim, \moo1, \widetilde{\calF}_h]$ and $\barX=\{X_1,\ldots,X_m\}$. 
Further, assume that $\varphi$ uses guarded predicates defined by the morphism $h: (\Sigma \times  2^{\overline{X}})^* \rightarrow M$. Since set automata are closed under letter-to-letter renaming  it suffices to show that there is a set automaton $\calA_\varphi$ recognising $L(\varphi)$. In the following, to transform the formulas, we add additional monadic variables to the vocabulary. This requires $h$ to be replaced by the morphism $h \circ \pi$ where $\pi$ is the projection map to the alphabet $\Sigma \times  2^{\overline{X}}$. To keep the discussion simple, we omit referring to the step explicitly.

We convert the formula $\varphi$ to an equivalent formula of size linear in $|\varphi|$ with additional unary predicates $\overline{Y}=\{Y_{1}, \ldots, Y_n\}$ in Scott Normal Form (see, for instance \cite{scottnormalform}). 
Hence, we get an equivalent formula 
\begin{align}
\exists Y_{1} \cdots Y_n \left(\forall x \forall y\, \chi \land \bigwedge\limits_i \forall x \exists y\, \chi_i\right)
\end{align}
where $Y_{1}, \ldots, Y_n$ are new unary predicates and $\chi$ and $\chi_i$ are quantifier-free with free variables $\overline{X}\cup \overline{Y}$. Again, by appealing to the closure under letter-to-letter renaming of set automata, it suffices to show that there is an equivalent set automaton for formulas of the form 
\begin{align}\label{eq:logictoaut-formula}
    \forall x \forall y\, \chi \land \bigwedge\limits_i \forall x \exists y\, \chi_i.
\end{align}

\paragraph*{Unary and Binary types}\label{pf:unarybinarytypes}

Let $x\ll y$ stand for the formula $x<y\land x+1\neq y$, i.e., $x$ occurs before $y$ but is not its predecessor.  Let $O$ denote the set of binary \emph{order-types} on two variables, i.e., 
\[ O=\{x \ll y, x+1=y, x = y, y+1=x, y \ll x\}.\]
No two formulas in $O$ are satisfiable at the same time, and any quantifier-free formula that uses only the predicates $\{<, +1,=\}$ is equivalent to a disjunction of formulas from $O$ (easily verified by converting the formula to DNF).

Let $x \farsim y$, read as `$x$ and $y$ are class-distant', stand for the formula $x \sim y \wedge x \moo1 \neq  y \wedge x \moo1 \neq y$, i.e., $x$ and $y$ are in the same class but neither is the class successor of the other.
Let $E$ denote the set of binary \emph{equivalence-types} on two variables, i.e., 
\[ E=\{ x \not \sim y, x \farsim y, x\moo1=y,  y\moo1=x, x=y\}.\]
It is easily verified that no two formulas in $E$ are satisfiable at the same time and that any quantifier-free two-variable formula using the predicates $\{ \sim, \moo1\}$ is equivalent to a disjunction of formulas in $E$.

A \emph{literal} is either an atomic formula or a negation of an atomic formula. Let $\Gamma$ be a set of literals. The set $\Gamma$ is \emph{consistent} if $\Gamma$ does not contain a literal and its negation. It is \emph{maximally-consistent} if $\Gamma$ is consistent and no strict superset of $\Gamma$ is consistent.

A \emph{unary-type} over $\Sigma \cup \barX$ is a conjunction of a positive literal from $\Sigma$  and a maximally-consistent set of literals over $\barX$ on the same variable (either $x$ or $y$). For example, let $\Sigma = \{a,b\}$. Then  $a(x) \wedge \neg X_1(x)$ is a unary-type over the unary predicates $\{a,b,X_1\}$, whereas $a(x)$  and $a(x) \wedge b(y) \wedge X_1(y)$ are not. Let $U$ denote the set of unary-types. 



\paragraph*{Construction of the Quasi-Normal Set Automaton}
We first transform the formulas $\forall x \forall y\, \chi$ and $\forall x \exists y\, \chi_i$ into suitable forms given below using standard techniques (see Lemma 12 and 13 of \cite{BMSSD11}).

It is straightforward to first transform $\chi$ to CNF and distribute the universal quantifiers over the conjunction. 
Furthermore, by using standard logical equivalences, we write each of the exponentially many conjuncts in $\chi$ obtained from the CNF transformation in the below form: 
\begin{align}\label{dis:AA}
\phi&=\forall x \forall y (\alpha(x)\land\beta(y)\land o(x,y)\land \epsilon(x,y) \to  \widetilde\lambda(x,y))
\end{align}

where $\alpha, \beta \in U$, $o(x,y) \in O$, $\epsilon(x,y) \in E$, and $\widetilde\lambda(x,y)$ is either $\mathit{false}$ or a regular predicate of the form $\tilL(x,y)$ or $\tilL(y,x)$ for some language $L$ defined by $h$. 
The set automaton constructed ensures that for each $\alpha$ and $\beta$ with the class and order conditions met, the factor bordered by them satisfies the guarded predicate $\widetilde\lambda$ by storing information about the factors in its sets. 

Similarly, for each $\chi_i$, by introducing additional unary predicates, it suffices to consider formulas of the form 
\begin{align}
\phi&=\forall x \exists y (\alpha(x) \to \beta(y)\land o(x,y)\land \epsilon(x,y) \land \widetilde\lambda(x,y)).
\end{align}

Here, the set automaton ensures that each $\alpha$ is witnessed by a $\beta$ satisfying the order, class, and guarded predicate constraints. 
Thus, it suffices to construct a set automaton for a formula with conjuncts in either of the above forms.  
Let $k$ be the number of such conjuncts. 

For the construction of the set automaton, we use the sets $X = \{X_m \mid m \in M\}, A = \{A_\gamma\mid \gamma\in U\}$ and a collection of sets $Y_i$ for each of the $k$ conjuncts.
We refer to the sets together as $S=X\cup A\cup \bigcup\limits_{i\in[k]} Y_i$.

The sets in $X$ are used to compute the image of factors (suffixes) under the morphism $h$. 
The sets in $A$ are used to track which unary-types correspond to the factors whose images are computed in $X$. 
The sets in $Y_i$ are used to ensure that the $i$'th conjunct is satisfied during the run. 
The sets in $A$ and $Y_i$ for each $i\in [k]$ are stable and only the identity transformation is applied to them. 

The automaton uses the set update transformations $T= \{t_\gamma \mid \gamma \in U\}$ where the transformation $t_\gamma$ is given by $X_p\mapsto \{X_{p\cdot h(\gamma)}\}$ for each $p\in M$ and $Z\mapsto\{Z\}$ for $Z\in S\setminus X$.  






Now, it suffices to prove \Cref{lem:guardedlogic-construction}.

\begin{lemma}
\label{lem:guardedlogic-construction}
For each formula $=\bigwedge\limits_{i\in[k]}\phi_i$ where $\phi_i$ is a formula of the form
\begin{align*}  
(\forall\forall):\ \ & \forall x \forall y\ (\alpha(x)\land\beta(y)\land o(x,y)\land \epsilon(x,y) \to  \widetilde\lambda(x,y)) \mathrm{\ or\ } \\
(\forall\exists):\ \ & \forall x \exists y\ (\alpha(x) \to \beta(y)\land o(x,y)\land \epsilon(x,y) \land \widetilde\lambda(x,y))
\end{align*}
there is an equivalent $h$-suffix-storing set automaton where, in addition to the identity transformation on the sets, the set updates in $T$ are used.
\end{lemma}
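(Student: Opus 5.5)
The plan is to build one automaton per conjunct and then take their synchronous product. Each conjunct gets an $h$-suffix-storing set automaton $\calA_i$ that shares the common non-stable family $X=\{X_m\mid m\in M\}$ and the common stable family $A$, and has its own stable sets $Y_i$. Because all components apply the same update $t_\gamma$ on reading a letter of unary-type $\gamma$, the product can also apply $t_\gamma$ on the shared family $X$. The product then stays quasi-normal and $h$-suffix-storing, by the argument of \Cref{lem:qnsa-closure}.

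The shared machinery works as follows. On reading a position $j$ of unary-type $\gamma$ with data value $d$, the automaton:
\begin{itemize}
\item first applies $t_\gamma$, which multiplies every stored prefix-image on the right by $h(\gamma)$;
\item then removes $d$ from whatever $X_m$ it currently lies in (normality on $X$ ensures it lies in at most one), after reading off $m$ from the guessed membership vector $\barz$;
\item finally inserts $d$ into $X_{1}$ (i.e., $X_{h(\varepsilon)}$) and records $\gamma$ by placing $d$ into $A_\gamma$, after removing it from the other $A$-sets.
\end{itemize}
The resulting invariant is that each data value $d$ sits in $X_m$, where $m=h(u)$ and $u$ is the factor strictly after the last occurrence of $d$. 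Hence every nonempty $X_m$ is the image of a suffix of $\str(w)$, which is exactly the suffix-storing property. Moreover, when a new occurrence $j$ of $d$ arrives, the value $m$ read off before the update (but after applying $t_\gamma$ we must be careful to read it before the multiplication by $h(\gamma)$) equals $h(a_{i+1}\cdots a_{j-1})$, where $i$ is the class predecessor of $j$. Likewise, the set $A_\gamma$ containing $d$ tells us the unary type of $i$.

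With this, consecutive same-class pairs $x\moo1=y$ are handled locally at $y$. For a $(\forall\forall)$ conjunct whose equivalence type is $x\moo1=y$ (or its symmetric version) with $o$ forced, the automaton rejects at $y$ if the type is $\beta$, the predecessor's recorded type is $\alpha$, and $m\notin h(L)$. The type $x\not\sim y$ involves no guarded predicate (if $\tilde\lambda$ is a guarded predicate the implication fails only if it is false), so such conjuncts are data-automaton properties. The types $x=y$ and $+1$ are handled in the finite control. For $(\forall\exists)$ conjuncts with $\epsilon$ being $\moo1$, a witness requirement is checked at the class successor or predecessor in the same way, using a stable set in $Y_i$ to carry the pending obligation "$d$ needs a successor satisfying $\ldots$" and using the acceptance condition $C$ to reject pending obligations.

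The hard case is $\epsilon=x\farsim y$, where $x$ and $y$ are in the same class but not consecutive, so the guarded predicate concerns a factor that spans intermediate occurrences of the class. The plan is to handle it by keeping, per data value, more than one stored image. For the $(\forall\forall)$ case, one could use extra copies $X_{m,\gamma}$ tracking the image since the last occurrence of each unary type $\alpha$ in the class. However, this breaks normality unless the index is a product, and it changes the non-stable family away from the stated form. So I would instead try to decompose $h$ of a long factor as $h(u_1)h(a_{i_1})h(u_2)\cdots$ and keep, in the stable sets $Y_i$, the finitely many relevant values. Specifically, for each $\alpha$ and each $m\in M$, the stable sets record whether the product from some earlier $\alpha$-occurrence to the current occurrence of the class (inclusive) lies in a given subset of $M$. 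On the next occurrence, these are updated by composing with the $m$ read from $X$ and the letter's image. This is a finite-state per-class computation driven by the $m$ revealed at each occurrence, and stable sets suffice because the data value's membership is rewritten only locally, at its own occurrences.

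I expect this per-class summary for class-distant pairs to be the main obstacle. In particular, the delicate step is checking that the finite summary (a set of monoid elements per unary type) is correctly updated using only the single value $m$ revealed at each occurrence, which needs right-multiplication by $h(\gamma)$ and consistency of $A$ as a record of types. The $(\forall\exists)$ variant works the same way, with the summary instead recording which obligations are still unmet and being checked at class end via $C$.
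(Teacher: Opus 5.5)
Your proposal is correct and follows essentially the paper's route. The shared parts are the same. Both use sets $X=\{X_m\}$ and $A=\{A_\gamma\}$ maintaining invariants (X) and (A), and a synchronous product over the conjuncts. Both delegate degenerate cases to data automata, check class-successor cases locally at the later occurrence from its $X$- and $A$-membership, and handle class-distant cases with stable sets indexed by $M$ that are rewritten only by local updates at occurrences of the data value. The only difference is bookkeeping. The paper splits the per-class summary into $H_i$ (images from non-latest $\alpha_i$'s to the latest $\alpha_i$) and $F_i$ (the single image from the latest $\alpha_i$ to the latest occurrence). You merge these into one set of images from every strictly earlier $\alpha$-occurrence to the latest occurrence, updated as $S\mapsto S\cdot m\cdot h(\gamma)$, plus the element $m\cdot h(\gamma)$ when $d\in A_\alpha$. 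This is an equivalent and slightly cleaner encoding, provided you exclude the latest occurrence so that the class predecessor is never checked as a class-distant partner.
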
    

We construct a set automaton $\calA$ with the sets $S$. 
First we introduce some notation used in the proof. 

\smallskip
\noindent\textbf{Notation:}
For a subset $I$ of positions, let $\gamma I$ denote the set of all positions in $I$ that are labelled by $\gamma$. 
For example $\gamma(0,\infty)=\gamma[1,\infty)$ is the set of all positions in the word labelled by $\gamma$.
Similarly, for a data value $d$, let $d I$ denote the set of all positions in $I$ that are in the class of $d$.
Further, let $\gamma^d I$ denote the set of all positions in $I$ that are in the class of $d$ and is labelled by $\gamma$. 
Let $\hat{d}(x)$ denote the largest position in $d(0,x)$ if it exists, i.e., the latest occurrence of data value $d$ that is smaller than $x$. We refer to this position as the \emph{latest} occurrence of $d$ (with respect to the current position).  
Let $\hat{\gamma}^d(x)$  denote the largest position in $\gamma^d(0,x)$ if it exists, i.e., the position of latest occurrence of $\gamma$ in the class of $d$ that is smaller than $x$. We refer to this position as the \emph{latest} $\gamma$ of $d$ (with respect to the current position).  All the other positions in 
 $\gamma^d(0,x)$ are referred to as \emph{non-latest} $\gamma$ of $d$. 
For a word $w$ and an interval $I$ of positions, let $w_I$  denote the factor of $w$ given by the positions in $I$. 
For example $w_{(i,i+1)}$ is always the empty word whereas $w_{(i-1,i]}$ is the $i$-th letter. 
For a word $w$ and a pair of positions $i,j$, the factor $w_{(i,j)}$ is called the factor \emph{bordered} by the positions $i$ and $j$.

In the required set automaton $\calA$, the sets in $X$ are used to store the image of factors (suffixes) of the input word in the monoid $M$ under the morphism $h$. This is accomplished by storing the data value of the first position of the factor in a set $X_m$ if the image of the factor is $m\in M$.
This can be thought of as tracking the run of the monoid on the factors. 
Thus, in the rest of the proof, we say that the \emph{runs of the monoid are stored in the sets} to refer to the images of the factors that are stored.
It is sufficient for $X$ to store the images of factors under the morphism $h$ as all the languages used in the guarded predicates are recognised by it. 

The required set automaton $\calA$ uses the sets in $Y_i$ to ensure that the formula $\phi_i$ is satisfied during the run. 
The state space of the automaton 
is given by the cartesian product of the set of states obtained by the construction for each of the formulas $\phi_i$ as described below, with the transitions and the state space ensuring that each of the formulas $\phi_i$ are satisfied. 
The acceptance conditions for each of the formulas $\phi_i$ as described in the construction below are taken in conjunction at the end of a run, with the automaton accepting only if the acceptance conditions according to the construction for each formula $\phi_i$ are satisfied. 

We perform a case analysis on the types $o(x,y)$, $\epsilon(x,y)$, and $\widetilde\lambda(x,y)$ for each of the forms $\phi_i$ occurs in. 
We denote the unary types occurring in $\phi_i$ as $\alpha_i$ and $\beta_i$, respectively. Further, let $L_i$ be the language used in the guarded predicate of the form $\tilL(x,y)$ recognised by morphism $h$ with accepting set $P_i\subseteq M$.

We give the construction by first describing how the sets in $X$ and $A$ track the runs of the monoid. This is followed by describing how the cases for each $\phi_i$ are handled using the sets in $Y_i$ along with the run information from $X$ and $A$. 

The sets in $X$ are used to compute the images of factors defined by the latest occurrence of a data value and the current position. 
The sets in $A$ are used to remember the label present in the latest occurrence of a data value. 
The set automaton $\calA$ reads the data word from left to right. Let $w$ be the string projection of the data word.
Assume that the automaton is on a position $y$. 
Let $\alpha_i$ denote the unary type $\alpha$ in the conjunct $\phi_i$.
Let $d$ be a data value and let $x_0$ be the latest occurrence of $d$ before the current position if it exists, i.e., $x_0=\hat{d}(y)$.
The following invariants are maintained during the run of the set automaton $\calA$: 
\begin{itemize}
    \item[(X)] The data value $d$ is in the set $X_m$, where $m\in M$, if and only if the factor bordered by latest occurrence of $d$ and the current position has image $m\in M$. That is, the image of $w_{(x_0,y)}$ is $m$. Note that a data value can be present in at most one of the sets in $X$. 
    \item[(A)] The data value $d$ is in the set $A_{\gamma}$ if and only if the latest occurrence of $d$ before $y$ was with the label $\gamma$. That is, $x_0=\hat{\gamma}^d(y)$.
\end{itemize}

On encountering a pair $(\ell, d')$, the automaton performs the following actions. 
The set update $t_\ell$ is performed, updating the \emph{runs} stored in the sets. Since all the guarded predicates' languages are recognised by the same morphism $h$, their runs can be tracked simultaneously. 
The data value $d'$ is removed from the unique set in $X$ in which it is present, if any. 
Further, to maintain the invariant (X) for the upcoming positions in the run, the data value $d'$ is added to the set $X_1$ (here, $1$ is the identity of the monoid $M$) by means of local updates since the run of the monoid $M$ starts at the identity element. 
To maintain the invariant (A), the data value $d'$ is removed from the sets in (A) in which it is present, if any, and is added to the set $A_\ell$. 
Note that the sets in $X$ obey the suffix-storing property.

We now obtain \Cref{prop:logic-to-aut} from \Cref{lem:AA-logic-to-aut,lem:AE-logic-to-aut}. We omit their proofs here due to space constraints. 
\begin{lemmarep}\label{lem:AA-logic-to-aut}
    
    For each formula $\phi_i$ of the form
    \[(\forall\forall):\ \  \forall x \forall y\ (\alpha(x)\land\beta(y)\land o(x,y)\land \epsilon(x,y) \to  \widetilde\lambda(x,y)) 
    \]
    the $h$-suffix-storing set automaton $\calA$ can be constructed such that, using the sets $X\cup A\cup Y_i$ it ensures that a data word is accepted by the automaton $\calA$ if and only if it satisfies the formula $\phi_i$. 
\end{lemmarep}
\begin{proof}
We first discuss the degenerate cases below. 

\smallskip
\noindent\textbf{When $\widetilde\lambda(x,y)$ is $\mathit{false}$}: In this case there is an equivalent data automaton for the formula $\phi_i$ since it uses no regular predicates (see \cite{BMSSD11}). 
Since data automata are captured by set automata with identity transformations (see \Cref{prop:isa=cma}), we simulate it using local updates in the sets $Y_i$. 
Henceforth we assume that $\widetilde\lambda(x,y)$ is a guarded predicate of the form $\widetilde{L}_i(x,y)$ or $\widetilde{L}_i(y,x)$.

\smallskip
\noindent\textbf{When $o(x,y)$ is $x=y$}: If $\epsilon(x,y) \in \{  x\moo1 =y,  y\moo1 = x, x\not \sim y, x \farsim y\}$, or if $\alpha_i \neq \beta_i$, then the premise of $\phi_i$ is false. Thus $\phi_i$ is always true and the required set automaton accepts all inputs. 
Next assume that $\epsilon(x,y)$ is $x = y$ and $\alpha_i = \beta_i$. Then the conclusion $\widetilde{L}_i(x,y)$ is false as per the semantics of the guarded predicates since $x=y$. Thus, for the formula to be true, the premise should not be met. 
This can be ensured by a data automaton as the premise does not contain any regular predicates (See \cite{BMSSD11}). 
Thus we take the corresponding set automaton with identity transformations on the sets $Y_i$ (See \Cref{prop:isa=cma}). 
Henceforth we assume that $o(x,y)$ is not $x=y$.

\smallskip
\noindent\textbf{When $o(x,y)$ is $x\ll y$ or $x+1=y$ and $\widetilde\lambda(x,y)$ is $\widetilde{L}_i(y,x)$}:
Here, the conclusion is always false due to the semantics of the guarded predicates. 
Thus, for $\phi_i$ to be true, we ensure that the premise is not met using a set automaton with identity transformations.
A symmetric case occurs when $o(x,y)$ is $y\ll x$ or $y+1=x$ and $\widetilde\lambda(x,y)$ is $\widetilde{L}_i(x,y)$.
Henceforth, whenever $o(x,y)$ is $x\ll y$ or $x+1=y$, we assume that $\widetilde\lambda(x,y)$ is $\widetilde{L}_i(x,y)$ and whenever $o(x,y)$ is $y\ll x$ or $y+1=x$, we assume $\widetilde\lambda(x,y)$ to be $\widetilde{L}_i(y,x)$.

\smallskip
\noindent\textbf{When $\epsilon(x,y)$ is $x\nsim y$ and $o(x,y)$ is not $x=y$}: Here, if the premise is met, the conclusion can never be met as $\widetilde{L}_i(x,y)$ requires that $x\sim y$. Thus, for $\phi_i$ to be true, we ensure that the premise is not met using a set automaton with identity transformations.

\smallskip 
\noindent Consider the following cases, (1) $o(x,y)$ is $x\ll y$ or $x+1=y$ and $\epsilon(x,y)$ is  $y \moo1 = x$, (2) $o(x,y)$ is $y\ll x$ or $y+1=x$ and $\epsilon(x,y)$ is  $x \moo1 = y$,(3) $\epsilon(x,y)$ is $x=y$ and $o(x,y)$ is not $x=y$, (4) $\epsilon(x,y)$ is $x\farsim y$ and $o(x,y)$ is $x+1=y$ or $y+1=x$.  Here, in each of the cases, the premise of $\phi_i$ is false. Thus $\phi_i$ is always true and we take a set automaton that accepts all inputs.

Now we proceed to describe the construction for the remaining cases.

\noindent\textbf{Case 1: $x \ll y$ and $x\farsim y$.}

The formula $\phi_i$ states that if an $\alpha_i$ is followed by a class-distant $\beta$ that is not its successor, then the word bordered by them is in $L_i$.  
Observe that the condition $x\farsim y$ captures that $x+1\neq y$. Thus in the construction we do not need to specifically ensure that $x\ll y$ and can instead simply work with $x<y$. 
Here, in addition to the sets in $X$ and $A$, we take the sets $H_i=\{H_m\mid m\in M\}$, and $F_i=\{F_m\mid m\in M\}$, i.e., $Y_i=H_i\cup F_i$. 
The sets in $H_i$ (short for Hitherto) are used to compute the images of factors defined by a non-latest $\alpha_i$ and the latest $\alpha_i$ of a data value $d$. 
The sets in $F_i$ (short for henceForth) are used to compute the images of factors defined by the latest $\alpha_i$ of $d$ and the latest occurrence of $d$.
During the run of the automaton, the images of factors defined by an $\alpha_i$ and the current position is required to check whether the conclusion of the formula $\phi_i$ is ensured whenever the premise is met. 
This is be obtained from the sets by multiplying the images of factors stored in $H_i$, $F_i$, and $X$. 

Assume that the automaton is on a position $y$. 
Let $d$ be a data value and let $x_0$ be the latest occurrence of $d$ with respect to the current position if it exists, i.e., $x_0=\hat{d}(y)$.
Similarly, let $x_{0}^i$ be the latest $\alpha_i$ of $d$ with respect to the current position if it exists, i.e., $x_{0}^i=\hat{\alpha_i}^d(y)$. 
During the run of the automaton, the following invariants are maintained in addition to (X) and (A).
\begin{itemize}
\item[(H)] The data value $d$ is in $H_m$ if and only if there is a factor defined by a non-latest $\alpha_i$ and the latest $\alpha_i$ in $d$'s class whose image is $m$. That is to say, there is a factor $w_{(x,x_{0}^i)}$  for some $x \in \alpha_i^d(0,x_{0}^i)$ with the image $m$. 
\item[(F)] The data value $d$ is in $F_m$ if and only if there is a factor defined by the latest $\alpha_i$ in $d$'s class and the latest occurrence of $d$ whose image is $m$. That is to say, the factor $w_{(x_{0}^i,x_0)}$ has image $m$. 
\end{itemize}
On encountering a pair $(\ell,d')$ the automaton performs the following actions in addition to those described for the sets in $X$ and $A$ earlier.


\noindent\emph{If $\ell=\alpha_i$}:
The present data value $d'$ is removed from all the sets in $H$ and is added to the set $H_{u\cdot h(\alpha_i)\cdot v\cdot w}$ if prior to the set update, the data value was present in $H_u, F_{v}$ and $X_{w}$. 
Further, the data value is added to the set $H_m$ if prior to the set update, the data value was present in $X_m$ and $A_{\alpha_i}$.  
This is to maintain the invariant (H) in the upcoming positions. 
To maintain the invariant (F) in the upcoming positions, the data value is removed from the sets in $F$ if present. 

\noindent\emph{If $\ell\neq\alpha_i$}:
If the present data value $d'$ is present in any of the sets in $F$, it is removed from it and added to the set $F_{m\cdot h(\gamma)\cdot n}$ if prior to the set update, the data value was present in the sets $F_m, A_\gamma$, and $X_{n}$. 
Further, if the data value $d'$ was present in $X_m$ and $A_{\alpha_i}$ prior to the set update then it is added to the set $F_m$. 
This is to maintain the invariant (F).
\emph{If $\ell=\beta_i$}, the following operations are also performed. 
To satisfy $\phi_i$, we need to ensure that (1) the unique set $F_m$ containing $d'$ corresponds to an accepting element (i.e., $m \in P_i$) if $d'$ is not in $A_{\alpha_i}$ (i.e., if the latest element in the class is not an $\alpha_i$) (2) and all the sets $H_n$ containing $d'$ must satisfy $n\cdot h(\alpha_i)\cdot m \in P_i$. 
Note that this captures only the condition $x\ll y$ as it only considers $x\farsim y$ where $x<y$, thus ensuring $x+1\neq y$. 
This can be ascertained from the characteristic vector of $d'$. 
If it is not, then $\calA$ halts erroneously. 

Finally, at the end of the run, every configuration of the set automaton is accepting.

\smallskip
\noindent\textbf{Case 2: $x\ll y$ and $x\moo1 y$.}

The formula $\phi_i$ states the property that if an $\alpha_i$ is followed by a $\beta_i$ that is also the class successor of $\alpha_i$ and not the successor of $\alpha_i$, then the word bordered by them is in $L_i$. 
Here, the automaton does not require any sets $Y_i$ as the sets in $X$ already track the run of the latest data value and $A_{\alpha_i}$ tracks whether this occurred with an $\alpha_i$. 
Further, the automaton tracks with its state space whether the preceeding position was labelled by $\alpha_i$.

Let the automaton be at a position with data value $d'$. 
Assume that $d'$ exists in the set $X_m$ and $A_{\alpha_i}$ before the update. 
If $d'$ is labelled by a $\beta$ and $d'$ is not in the set $V$, then $m$ has to be in the accepting set $P_i$ to satisfy the formula. Otherwise, the automaton halts erroneously. 
However, if the preceeding position is labelled by $\alpha_i$ as given by the state space then the current position is the successor of $\alpha_i$ which does not satisfy the premise of the formula and thus the above condition is not required to be met. 
Finally, at the end of the run, every configuration of the set automaton is accepting.

\smallskip
\noindent\textbf{Case 3: $x+1= y$ and $x\moo1 y$.}
The formula $\phi_i$ states that for each $\alpha_i$ whose successor is a $\beta_i$ in the same class, the word bordered by them is in $L_i$. 
Here, the automaton does not require any sets $Y_i$ as the set $A_{\alpha_i}$ tracks whether the latest occurrence of the data value was with an $\alpha_i$.
Further, the automaton tracks with its state space whether the preceeding position was labelled by $\alpha_i$.

When reading a $\beta_i$ with data value $d'$ that is present in the set $A_{\alpha_i}$ and the preceeding position is labelled by $\alpha_i$ according to the state space, the set automaton halts erroneously if the empty word is not in $L_i$. This ensures that the formula is satisfied. 
Finally, at the end of the run, every configuration of the set automaton is accepting.

\smallskip
\noindent\textbf{Case 4 (5, 6): $y\ll x$ and $y\farsim x$ ($y\ll x$ and $y\moo1 = x$, $y+1=x$ and $y\moo1=x$).}
These cases are exactly the cases 1, 2, and 3 respectively but with the variables $x$ and $y$ swapped. Thus, the constructions in the cases 1, 2, and 3 follow with the change that everywhere in the construction, $\alpha_i$ and $\beta_i$ are replaced by each other. 

\end{proof}


\begin{lemmarep}\label{lem:AE-logic-to-aut}
    
    For each formula $\phi_i$ of the form
    \[(\forall\exists):\ \  \forall x \exists y\ (\alpha(x) \to \beta(y)\land o(x,y)\land \epsilon(x,y) \land \widetilde\lambda(x,y))\]
    the $h$-suffix-storing set automaton $\calA$ can be constructed such that, using the sets $X\cup A\cup Y_i$ it ensures that a data word is accepted by the automaton $\calA$ if and only if it satisfies the formula $\phi_i$. 
\end{lemmarep}
\begin{proofsketch}
We illustrate the key ideas of the construction by considering the case where $o(x,y)$ is $x \ll y$, $\epsilon(x,y)$ is $x\farsim y$, and $\widetilde\lambda(x,y)$ is $\widetilde{L}_i(x,y)$.

The formula $\phi_i$ states that if an $\alpha_i$ is followed by a class-distant $\beta$ that is not its successor, then the word bordered by them is in $L_i$.  
Observe that the condition $x\farsim y$ captures that $x+1\neq y$. Thus in the construction we do not need to specifically ensure that $x\ll y$ and can instead simply work with $x<y$. 
Here, in addition to the sets in $X$ and $A$, we take the sets $H_i=\{H_m\mid m\in M\}$, and $F_i=\{F_m\mid m\in M\}$, i.e., $Y_i=H_i\cup F_i$. 
The sets in $H_i$ (short for Hitherto) are used to compute the images of factors defined by a non-latest $\alpha_i$ and the latest $\alpha_i$ of a data value $d$. 
The sets in $F_i$ (short for henceForth) are used to compute the images of factors defined by the latest $\alpha_i$ of $d$ and the latest occurrence of $d$.
During the run of the automaton, the images of factors defined by an $\alpha_i$ and the current position is required to check whether the conclusion of the formula $\phi_i$ is ensured whenever the premise is met. 
This is be obtained from the sets by multiplying the images of factors stored in $H_i$, $F_i$, and $X$. 

Assume that the automaton is on a position $y$. 
Let $d$ be a data value and let $x_0$ be the latest occurrence of $d$ with respect to the current position if it exists, i.e., $x_0=\hat{d}(y)$.
Similarly, let $x_{0}^i$ be the latest $\alpha_i$ of $d$ with respect to the current position if it exists, i.e., $x_{0}^i=\hat{\alpha_i}^d(y)$. 
During the run of the automaton, the following invariants are maintained in addition to (X) and (A).
\begin{itemize}
\item[(H)] The data value $d$ is in $H_m$ if and only if there is a factor defined by a non-latest $\alpha_i$ and the latest $\alpha_i$ in $d$'s class whose image is $m$. That is to say, there is a factor $w_{(x,x_{0}^i)}$  for some $x \in \alpha_i^d(0,x_{0}^i)$ with the image $m$. 
\item[(F)] The data value $d$ is in $F_m$ if and only if there is a factor defined by the latest $\alpha_i$ in $d$'s class and the latest occurrence of $d$ whose image is $m$. That is to say, the factor $w_{(x_{0}^i,x_0)}$ has image $m$. 
\end{itemize}

\begin{example}
The invariants can be understood from the below figure. For ease of presentation, in \Cref{fig:word-logicaut}, for a pair of positions $i<j$ we use the notation $X_{(i,j)}$ to denote the set $X_m$ where $m$ is the image of the factor $w_{(i,j)}$ in the monoid $M$. 
\end{example}

\begin{figure}[hbt!]
\centering
\resizebox{\columnwidth}{!}{
\input{word.tikz}
}
\caption{A data word $w$ and the contents of the sets of $\calA$ on encountering position $9$ of $w$.}
\label{fig:word-logicaut}
\end{figure}



On encountering a pair $(\ell,d')$ the automaton performs the following actions in addition to those described for the sets in $X$ and $A$ earlier.


\noindent\emph{If $\ell=\alpha_i$}:
The present data value $d'$ is removed from all the sets in $H$ and is added to the set $H_{u\cdot h(\alpha_i)\cdot v\cdot w}$ if prior to the set update, the data value was present in $H_u, F_{v}$ and $X_{w}$. 
Further, the data value is added to the set $H_m$ if prior to the set update, the data value was present in $X_m$ and $A_{\alpha_i}$.  
This is to maintain the invariant (H) in the upcoming positions. 
To maintain the invariant (F) in the upcoming positions, the data value is removed from the sets in $F$ if present. 

\noindent\emph{If $\ell\neq\alpha_i$}:
If the present data value $d'$ is present in any of the sets in $F$, it is removed from it and added to the set $F_{m\cdot h(\gamma)\cdot n}$ if prior to the set update, the data value was present in the sets $F_m, A_\gamma$, and $X_{n}$. 
Further, if the data value $d'$ was present in $X_m$ and $A_{\alpha_i}$ prior to the set update then it is added to the set $F_m$. 
This is to maintain the invariant (F).
\emph{If $\ell=\beta_i$}, the following operations are also performed. 
To satisfy $\phi_i$, we need to ensure that (1) the unique set $F_m$ containing $d'$ corresponds to an accepting element (i.e., $m \in P_i$) if $d'$ is not in $A_{\alpha_i}$ (i.e., if the latest element in the class is not an $\alpha_i$) (2) and all the sets $H_n$ containing $d'$ must satisfy $n\cdot h(\alpha_i)\cdot m \in P_i$. 
Note that this captures only the condition $x\ll y$ as it only considers $x\farsim y$ where $x<y$, thus ensuring $x+1\neq y$. 
This can be ascertained from the characteristic vector of $d'$. 
If it is not, then $\calA$ halts erroneously. 

Finally, at the end of the run, every configuration of the set automaton is accepting.
\end{proofsketch}
\begin{proof}
We first discuss the degenerate cases below. 

Consider the following cases (a) $\widetilde\lambda(x,y)$ is $\mathit{false}$, (b) $\epsilon(x,y)$ is $x\nsim y$, (c) $o(x,y)$ or $\epsilon(x,y)$ is $x=y$, (d) $o(x,y)$ is $x\ll y$ or $x+1=y$ and $\widetilde\lambda(x,y)$ is $\widetilde{L}_i(y,x)$, (e) $o(x,y)$ is $y\ll x$ or $y+1=x$ and $\widetilde\lambda(x,y)$ is $\widetilde{L}_i(x,y)$, (f) $o(x,y)$ is $x\ll y$ or $x+1=y$ and $\epsilon(x,y)$ is  $y \moo1 = x$, and (g) $o(x,y)$ is $y\ll x$ or $y+1=x$ and $\epsilon(x,y)$ is  $x \moo1 = y$, (h) $\epsilon(x,y)$ is $x\farsim y$ and $o(x,y)$ is $x+1=y$ or $y+1=x$.
Here, in each of these cases, the conclusion in the formula $\phi_i$ is false. 
Thus, no $\alpha_i$'s must occur in the data word for it to satisfy the formula. 
Therefore, we take a finite state automaton that ensures that no $\alpha_i$'s are present in the input. 
By the above cases (a), (d) and (e), henceforth whenever $o(x,y)$ is $x\ll y$ or $x+1=y$, we assume that $\widetilde\lambda(x,y)$ is $\widetilde{L}_i(x,y)$ and whenever $o(x,y)$ is $y\ll x$ or $y+1=x$, we assume $\widetilde\lambda(x,y)$ to be $\widetilde{L}_i(y,x)$.

Now we proceed to describe the construction for the remaining cases. 

\smallskip
\noindent\textbf{Case 1: $x \ll y$ and $x \farsim  y$.} 

The formula $\phi_i$ states for that each $\alpha_i$ there is a subsequent class-distant $\beta_i$ that is not its immediate successor and the word bordered by them is in $L_i$. 
We call such a $\beta_i$, a \emph{witness} of the $\alpha_i$ and at a position in the run, if such a $\beta_i$ has been observed, we say that the $\alpha_i$ has been \emph{witnessed}. 
Observe that the condition $x\farsim y$ captures that $x+1\neq y$. Thus in the construction we do not need to specifically ensure that $x\ll y$ and can instead simply work with $x<y$. 

Here, in addition to the sets in $X$ and $A$, we take the sets $H_i=\{H_m\mid m\in M\}$ and $F_i=\{F_m\mid m\in M\}$, i.e., $Y_i=H_i\cup F_i$.

Assume that the automaton is on a position $y$. 
Let $d$ be a data value and let $x_0$ be the latest occurrence of $d$ with respect to the current position if it exists, i.e., $x_0=\hat{d}(y)$.
Similarly, let $x_{0}^i$ be the latest $\alpha_i$ of $d$ with respect to the current position if it exists, i.e., $x_{0}^i=\hat{\alpha_i}^d(y)$. 
During the run of the automaton, the following invariants are maintained in addition to (X), and (A).
\begin{itemize}
\item[(H)] The data value $d$ is in $H_m$ if and only if there is a factor defined by a non-latest $\alpha_i$ that has not been witnessed and the latest $\alpha_i$ in $d$'s class whose image is $m$. That is to say, there is a factor $w_{(x,x_{0}^i)}$  for some $x \in \alpha_i^d(0,x_{0}^i)$ with the image $m$. 
\item[(F)] The data value $d$ is in $F_m$ if and only if there is a factor defined by the latest $\alpha_i$ of $d$ that has not been witnessed and the latest occurrence of $d$ whose image is $m$. That is to say, the factor $w_{(x_{0}^i,x_0)}$ has image $m$. 
\end{itemize}

On encountering a pair $(\ell,d')$ the automaton performs the following actions in addition to those described for the sets in $X$ and $A$ earlier.

\noindent\emph{If $\ell=\alpha_i$}:
The present data value $d'$ is removed from all the sets in $H_i$ and is added to the set $H_{u\cdot h(\alpha_i)\cdot v\cdot w}$ if prior to the set update, the data value was present in $H_u, F_{v}$ and $X_{w}$. 
Further, the data value is added to the set $H_m$ if prior to the set update, the data value was present in $X_m$ and $A_{\alpha_i}$.  
This is to maintain the invariant (H) in the upcoming positions. 
To maintain the invariant (F) in the upcoming positions, the data value is removed from the sets in $F$ if present. 

\noindent\emph{If $\ell\neq\alpha_i$}:
If the present data value $d'$ is present in any of the sets in $F$, it is removed from it and added to the set $F_{m\cdot h(\gamma)\cdot n}$ if prior to the set update, the data value was present in the sets $F_m, A_\gamma$, and $X_{n}$. 
Further, if the data value $d'$ was present in $X_m$ and $A_{\alpha_i}$ prior to the set update then it is added to the set $F_m$. 
This is to maintain the invariant (F) in the upcoming positions.

Additionally, \emph{If $\ell=\beta_i$}, the following operations are performed. 
If the latest element in the class is not an $\alpha_i$, i.e., if $d'$ is not in $A_{\alpha_i}$, then the $\beta_i$ serves as a witness to the latest $\alpha_i$, if present, corresponding to the set $F_m$ containing $d'$ such that $m\in P_i$. 
Thus, the data value $d'$ is removed from the set $F_m$ using local updates.
Further, the $\beta_i$ serves as a witness to the $\alpha_i$'s corresponding to the data value $d'$ present in a set $H_m$ with the data value also present in $F_n$ such that $m\cdot h(\alpha_i) \cdot n\in P_i$. 
Thus, the data value is removed from those sets in $H_i$ using local updates.
On reaching the end of the input, we check that all the $\alpha_i$'s are witnessed. 
This is done by the acceptance condition which requires that the sets $\{H_{m}, F_m \mid m  \in M\setminus P_i\}$ are empty. Further, the set $A_{\alpha_i}$ is required to be empty, signifying that there are no data values whose latest occurrence is with an $\alpha_i$ and is thus not witnessed.

\smallskip
\noindent\textbf{Case 2: $x \ll y$ and $x \moo 1 = y$.} 

The formula $\phi_i$ states that each $\alpha_i$ is witnessed by a $\beta_i$ in a later position that is not its successor and is also its class successor and the word bordered by them is in $L_i$. 
Here, the automaton does not require any sets $Y_i$ as the sets in $X$ already track the run of the latest data value and $A_{\alpha_i}$ tracks whether this occurred with an $\alpha_i$.
Further, the automaton tracks with its state space whether the preceeding position was labelled by $\alpha_i$.

When a data value appears that is present in $A_{\alpha_i}$ and one of the sets in $X$, it must be labelled with a $\beta_i$. Furthermore, prior to the update, it should be present in one of the sets in $\{X_m \mid m \in P_i\}$ and the preceeding position must not be labelled by $\alpha_i$ as given by the state space. This ensures that each $\alpha_i$ is witnessed by its class successor that is a $\beta_i$ and is not its immediate successor. 
Violation of any of these will result in the automaton halting erroneously.
Once the set automaton reaches the end of the input, we check that all the $\alpha_i$'s  are witnessed. 
This is done by the acceptance condition which requires that there is no data value present in the set $A_{\alpha_i}$. This is because if a data value $d$ is present in $A_{\alpha_i}$ then the latest occurrence of $d$ was with the label $\alpha_i$ and is thus not witnessed by a $\beta_i$.

\smallskip
\noindent\textbf{Case 3: $x+1= y$ and $x\moo1= y$.}
The formula $\phi_i$ states that each $\alpha_i$ is witnessed by its successor which is a $\beta_i$ in the same class and that the word bordered by them is in $L_i$. 
Here, the automaton does not require any sets $Y_i$ as the set $A_{\alpha_i}$ tracks whether the latest occurrence of the data value was with an $\alpha_i$.
Further, the automaton tracks with its state space whether the preceeding position was labelled by $\alpha_i$.

At each position immediately succeeding a position labelled by $\alpha_i$ as given by the state space, it must be the case that the data value is present in $A_{\alpha_i}$, the label is $\beta_i$, and the empty word belongs to the language $L_i$. This ensures that the $\alpha_i$ is witnessed. Violation of any of these will result in the automaton halting erroneously. Once the set automaton reaches the end of the input, we check that all the $\alpha_i$'s are witnessed. This is done by the acceptance condition which requires that there is no data value present in the set $A_{\alpha_i}$. This is because, if a data value $d$ is present in $A_{\alpha_i}$ with the run not yet halted, then the final occurrence was with the label $\alpha_i$ and is thus not witnessed by a $\beta_i$.

\smallskip
\noindent\textbf{Case 4: $y\ll x$ and $y\farsim x$.}
The formula $\phi_i$ states that each $\alpha_i$ is witnessed by an earlier class-distant $\beta_i$ that is not its immediate predecessor and the word bordered by them is in $L_i$.
Observe that the condition $y\farsim x$ captures that $y+1\neq x$. Thus in the construction we do not need to specifically ensure that $y\ll x$ and can instead simply work with $y<x$. 

Here, in addition to the sets in $X$ and $A$, we take the sets $H_i=\{H_m\mid m\in M\}$ and $F_i=\{F_m\mid m\in M\}$, i.e., $Y_i=H_i\cup F_i$.

Assume that the automaton is on a position $y$. 
Let $d$ be a data value and let $x_0$ be the latest occurrence of $d$ with respect to the current position if it exists, i.e., $x_0=\hat{d}(y)$.
Similarly, let $x_{0}^i$ be the latest $\beta_i$ of $d$ with respect to the current position if it exists, i.e., $x_{0}^i=\hat{\beta_i}^d(y)$. 
During the run of the automaton, the following invariants are maintained in addition to (X), and (A).
\begin{itemize}
\item[(H)] The data value $d$ is in $H_m$ if and only if there is a factor defined by a non-latest $\beta_i$ that has not been witnessed and the latest $\beta_i$ in $d$'s class whose image is $m$. That is to say, there is a factor $w_{(x,x_{0}^i)}$  for some $x \in \beta_i^d(0,x_{0}^i)$ with the image $m$. 
\item[(F)] The data value $d$ is in $F_m$ if and only if there is a factor defined by the latest $\beta_i$ of $d$ that has not been witnessed and the latest occurrence of $d$ whose image is $m$. That is to say, the factor $w_{(x_{0}^i,x_0)}$ has image $m$. 
\end{itemize}

On encountering a pair $(\ell,d')$ the automaton performs the following actions in addition to those described for the sets in $X$ and $A$ earlier.

\noindent\emph{If $\ell=\beta_i$}:
The present data value $d'$ is removed from all the sets in $H_i$ and is added to the set $H_{u\cdot h(\beta_i)\cdot v\cdot w}$ if prior to the set update, the data value was present in $H_u, F_{v}$ and $X_{w}$. 
Further, the data value is added to the set $H_m$ if prior to the set update, the data value was present in $X_m$ and $A_{\beta_i}$.  
This is to maintain the invariant (H) in the upcoming positions. 
To maintain the invariant (F) in the upcoming positions, the data value is removed from the sets in $F$ if present. 

\noindent\emph{If $\ell\neq\beta_i$}:
If the present data value $d'$ is present in any of the sets in $F$, it is removed from it and added to the set $F_{m\cdot h(\gamma)\cdot n}$ if prior to the set update, the data value was present in the sets $F_m, A_\gamma$, and $X_{n}$. 
Further, if the data value $d'$ was present in $X_m$ and $A_{\beta_i}$ prior to the set update then it is added to the set $F_m$. 
This is to maintain the invariant (F) in the upcoming positions.
Additionally, \emph{If $\ell=\alpha_i$}, the following operations are performed. 
The $\alpha_i$ must be witnessed by an earlier $\beta_i$ that is class-distant. Thus, the automaton checks if either the data value $d'$ is present in a set $F_m$ such that $m\in P_i$ or if it is present in sets $H_m$ and $F_n$ such that $m\cdot h(\beta_i) \cdot n\in P_i$. If either of the above are met, then the corresponding $\beta_i$ serves as a witness to the $\alpha_i$. Otherwise, the automaton halts erroneously.  
Finally, at the end of the input, every configuration of the automaton is accepting.

\smallskip
\noindent\textbf{Case 5: $y \ll x$ and $y \moo 1 = x$.} 

The formula $\phi_i$ states that each $\alpha_i$ is witnessed by its class predecessor which is a $\beta_i$ that is not its immediate predecessor and the word bordered by them is in $L_i$. 
Here, the automaton does not require any sets $Y_i$ as the sets in $X$ already track the run of the latest data value and $A_{\beta_i}$ tracks whether this occurred with an $\beta_i$.
Further, the automaton tracks with its state space whether the preceeding position was labelled by $\beta_i$.

At each position labelled by $\alpha_i$, the corresponding data value must be present in $A_{\beta_i}$. Furthermore, prior to the update, it should be present in one of the sets in $\{X_m \mid m \in P_i\}$ and the preceeding position must not be labelled by $\beta_i$ as given by the state space. This ensures that each $\alpha_i$ is witnessed by its class predecessor that is a $\beta_i$ and is not its immediate predecessor. 
Violation of any of these will result in the automaton halting erroneously.
Finally, at the end of the input, every configuration of the automaton is accepting.

\smallskip
\noindent\textbf{Case 6: $y+1= x$ and $y\moo1 =x$.}
The formula $\phi_i$ states that each $\alpha_i$ is witnessed by its predecessor which is a $\beta_i$ in the same class and that the word bordered by them is in $L_i$. 
Here, the automaton does not require any sets $Y_i$ as the set $A_{\beta_i}$ tracks whether the latest occurrence of the data value was with a $\beta_i$.
Further, the automaton tracks with its state space whether the preceeding position was labelled by $\beta_i$.

At each position labelled by $\alpha_i$, the corresponding data value must be present in $A_{\beta_i}$. Further, the preceeding position must be labelled by $\beta_i$ according to the state space. Finally, the empty word must be in the language $L_i$. Violation of any of these will result in the automaton halting erroneously. Finally, at the end of the input, every configuration of the automaton is accepting. 
\end{proof}

%% file: word.tikz
\begin{tikzpicture}
	\begin{pgfonlayer}{nodelayer}
		\node [style=none] (38) at (9, 0) {};
		\node [style=none] (65) at (1, 1.5) {\footnotesize $\beta_i$};
		\node [style=none] (66) at (1, 0.75) {\footnotesize $d$};
		\node [style=none] (67) at (2, 1.5) {\footnotesize $\alpha_i$};
		\node [style=none] (68) at (2, 0.75) {\footnotesize $d$};
		\node [style=none] (73) at (3, 1.5) {\footnotesize $\alpha_i$};
		\node [style=none] (74) at (3, 0.75) {\footnotesize $e$};
		\node [style=none] (75) at (4, 1.5) {\footnotesize $\beta_i$};
		\node [style=none] (76) at (4, 0.75) {\footnotesize $f$};
		\node [style=none] (77) at (5, 1.5) {\footnotesize $\alpha_i$};
		\node [style=none] (78) at (5, 0.75) {\footnotesize $d$};
		\node [style=none] (81) at (6, 1.5) {\footnotesize $\beta_i$};
		\node [style=none] (82) at (6, 0.75) {\footnotesize $e$};
		\node [style=none] (83) at (7, 1.5) {\footnotesize $\gamma$};
		\node [style=none] (84) at (7, 0.75) {\footnotesize $e$};
		\node [style=none] (85) at (8, 1.5) {\footnotesize $\beta_i$};
		\node [style=none] (86) at (8, 0.75) {\footnotesize $d$};
		\node [style=none] (89) at (9, 1.5) {\footnotesize $\gamma$};
		\node [style=none] (90) at (9, 0.75) {\footnotesize $f$};
		\node [style=none] (91) at (1, -0.75) {\footnotesize 1};
		\node [style=none] (92) at (2, -0.75) {\footnotesize 2};
		\node [style=none] (93) at (4, -0.75) {\footnotesize 4};
		\node [style=none] (95) at (3, -0.75) {\footnotesize 3};
		\node [style=none] (96) at (6, -0.75) {\footnotesize 6};
		\node [style=none] (97) at (5, -0.75) {\footnotesize 5};
		\node [style=none] (99) at (9, -0.75) {\footnotesize 9};
		\node [style=none] (100) at (7, -0.75) {\footnotesize 7};
		\node [style=none] (101) at (8, -0.75) {\footnotesize 8};
		\node [style=none] (102) at (8.5, 2) {};
		\node [style=none] (103) at (9.5, 2) {};
		\node [style=none] (104) at (9.5, -1.25) {};
		\node [style=none] (105) at (8.5, -1.25) {};
		\node [style=none] (106) at (10, 0) {};
		\node [style=none] (107) at (14.25, 1.5) {\footnotesize $X_{(4,9)} = \{f\}$};
		\node [style=none] (108) at (14.25, 0.5) {\footnotesize $X_{(7,9)} = \{e\}$};
		\node [style=none] (109) at (18.5, 1) {\footnotesize $A_{\beta_i} = \{d, f\}$};
		\node [style=none] (110) at (18.25, 0) {\footnotesize $A_{\gamma} = \{e\}$};
		\node [style=none] (113) at (26.5, 1) {\footnotesize $F_{(5,8)} = \{d\}$};
		\node [style=none] (114) at (26.5, 0) {\footnotesize $F_{(3,7)} = \{e\}$};
		\node [style=none] (115) at (22.5, 0.5) {\footnotesize $H_{(2,5)} = \{d\}$};
		\node [style=none] (116) at (13.5, -0.5) {\footnotesize $X_{(8,9)}=X_1=\{d\}$};
		\node [style=none] (119) at (1, 0) {};
		\node [style=none] (120) at (2, 0) {};
		\node [style=none] (121) at (3, 0) {};
		\node [style=none] (122) at (4, 0) {};
		\node [style=none] (123) at (5, 0) {};
		\node [style=none] (124) at (6, 0) {};
		\node [style=none] (125) at (7, 0) {};
		\node [style=none] (126) at (8, 0) {};
	\end{pgfonlayer}
	\begin{pgfonlayer}{edgelayer}
		\draw (105.center) to (102.center);
		\draw (102.center) to (103.center);
		\draw (103.center) to (104.center);
		\draw [in=360, out=180] (104.center) to (105.center);
		\draw [style=black pointer] (38.center) to (106.center);
		\draw [style=leftlength] (119.center) to (120.center);
		\draw [style=leftlength] (120.center) to (121.center);
		\draw [style=leftlength] (121.center) to (122.center);
		\draw [style=leftlength] (122.center) to (123.center);
		\draw [style=leftlength] (123.center) to (124.center);
		\draw [style=leftlength] (124.center) to (125.center);
		\draw [style=leftlength] (125.center) to (126.center);
		\draw [style=length] (126.center) to (38.center);
	\end{pgfonlayer}
\end{tikzpicture}

%% file: sec5-decidability.tex
\section{Decidability \& Undecidability Results on $\FO^2$ with Guarded Regular Predicates}
\nosectionappendix
\label{sec:decidability}

\begin{toappendix}
   \subsection*{Proofs for \Cref{sec:decidability}}
\end{toappendix}

\subsection{Linear Bands}
A \emph{band} is a semigroup in which every element is an idempotent. 

\begin{definition}[Linear Band]
A monoid $M$ is a \emph{linear band} if it is a band and the preorder relation $\leq_\calJ$ on $M$ is total. 
In other words, $M$
satisfies identities $x^2=x$ and 
$xyx=x \vee yxy = y$ for all $x,y\in M$.
\end{definition}

\begin{example}
Below are some examples and non-examples of linear bands. 

\begin{enumerate}
    \item The two-element monoid $U_1=\{1,0\}$ with a zero is a linear band.

\item Consider the band $U_1^2=\{1,r,s,0\}$ with the multiplication $rs=sr=0$ given in \Cref{fig:u1-u1sqr-n2}. 
This example shows that linear bands are not closed under direct products, and hence
they fail to form a pseudovariety.
\item Consider the null monoid $N_k=\{1,x_1, \ldots, x_k, 0\}$, with the multiplication $x_i \cdot x_j =0$, for all $i,j \in [k]$. 
The null monoid $N_2$ is in \Cref{fig:u1-u1sqr-n2}.
The monoid $N_k$ is not a linear band.
\end{enumerate}
\end{example}

\begin{figure}[hbt]
\centering
\resizebox{0.75\columnwidth}{!}{
\input{linbnd-u1-u1sqr-n2.tikz}
}
\caption{ (a) A linear band recognising the language $(a+b)^*b+a^*$ over the alphabet $\{a,b\}$, (b) A linear band $U_1$, (c) A  band $U_1^2$, and (d) a linear monoid  $N_2$. Here, (c) and (d) are not linear bands. Idempotents are denoted by a star in the above semigroups.}
\label{fig:u1-u1sqr-n2}
\end{figure}

A semigroup $S$ is in the class $\bfD\bfA$ if it satisfies the identity $ese=e$ for each element $s$ and idempotent $e$ such that $e\leq_\calJ s$. The name $\bfD\bfA$ comes from an equiavalent definition that all regular $\calD$-classes of $S$ are aperiodic subsemigroups. 
They correspond to positive regular languages (subsets of $\Sigma^+$) recognised by $\fotwo(\Sigma,<)$ formulas as given in, for instance, the survey \cite{TT02}. 

\begin{lemma}\label{lem:linearband-da}
    Every linear band is in $\bfD\bfA$.
\end{lemma}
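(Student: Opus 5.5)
We will verify the defining identity of $\bfD\bfA$ directly: for a linear band $M$, an element $s\in M$, and an idempotent $e$ with $e\leq_\calJ s$, we must show $ese=e$. Since $M$ is a band, every element is idempotent, so the condition is simply $e\leq_\calJ s$ for arbitrary $e,s\in M$. Note that totality of $\leq_\calJ$ is not really needed here; the argument should work for all bands. The linear-band hypothesis only guarantees that every element is idempotent.

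The first step is to unpack $e\leq_\calJ s$. It means $e=xsy$ for some $x,y\in M^1=M$, where $M$ is a monoid.

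The second step is to use idempotency and rewrite $e$ as $e=eee=(xsy)e(xsy)$. Then I would aim to show $ese=e$ via the standard band identity $xyx \cdot$ manipulations, namely the fact that bands satisfy $abca=aca\cdot\ldots$. The cleaner route is through Green's relations. In a band, every $\calJ$-class is a rectangular band, that is, it satisfies $aba=a$ whenever $a\calJ b$. Moreover, the $\calJ$-classes are the $\calD$-classes.

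Concretely, $e\leq_\calJ s$ implies $es\leq_\calJ e$, and $e=e\,e\,e$ gives $es\cdot e$. The key observation is that $ese=(es)(se)$ by idempotency of $s$. Since $es\leq_\calR e$ and $e\leq_\calJ es$, we can argue as follows. From $e=xsy$ we get $e=e\,xsy\,e$. Using $s=ss$, we can write $e=(ex s)(s y e)$, so $e\leq_\calR exs$. Also $exs\leq_\calR e$, so $exs\,\calR\, e$ by \Cref{Fact:fallToRLClass}. Hence $e\calJ es$: indeed $es\leq_\calJ e$, and $e=exsye\leq_\calJ s$ through $es$ once we note $exs=e\cdot(xs)$. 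I would more carefully write $e\leq_\calJ es$ as $e=ee=e(xsy)$, which does not directly contain $es$. For this reason I would instead use \Cref{Fact:location}.

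The third step applies \Cref{Fact:location}. Set $t=se$. We have $e=xsy=x\,s\,(sy)$, and so on. The plan is to show $es\,\calD\, e$ and $se\,\calD\, e$ by an argument of the above kind, using that $\calD=\calJ$ and that the product of two elements whose $\calJ$-class is $\geq$ that of $e$ stays in $\calJ(e)$ by \Cref{Fact:location}. The role of $\calL(es)\cap\calR(se)$ containing an idempotent is automatic in a band. Then $ese=(es)(se)\in\calR(es)\cap\calL(se)$, which equals $\calR(e)\cap\calL(e)=\calH(e)$. Since $ese$ is idempotent, \Cref{Fact:uniqueIdemH} gives $ese=e$.

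The main obstacle is establishing $es\,\calJ\, e$, specifically $e\leq_\calJ es$, from $e=xsy$. I would first try the rectangular-band fact directly. With $f=ese$ we have $f\leq_\calJ e$. Conversely, $e=e\,xsy\,e$, and inside the band one uses $e x s y e = (exse)\cdots$. If this does not work, I would fall back on the known result that bands satisfy $axya=axaya$ (regular bands are not all bands, so instead I would use the identity valid in all bands, $e s e = e$ whenever $e\leq_\calJ s$, which is Clifford's semilattice decomposition: $\calJ$ is a congruence on a band with semilattice quotient). With that decomposition, $e\leq_\calJ s$ means $[e]=[e][s]$ in the semilattice. Hence $[ese]=[e]$, so $ese\,\calJ\, e$, and the rectangular-band identity in $\calJ(e)$ gives $e(se)e=e$ after noting $ese\cdot e=ese$. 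This implies $e\,ese\,e=e$, that is, $ese=e$.
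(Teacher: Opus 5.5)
Your final argument is correct, but it proves a stronger statement than the paper does and by a different route. The paper uses linearity in its identity form $ese=e \vee ses=s$. In the second case $s\leq_{\calJ} e$, hence $e\,\calJ\, s$, and two applications of \Cref{Fact:location} give $es\in\calR(e)\cap\calL(s)$ and then $ese\in\calR(e)\cap\calL(e)=\calH(e)$, so $ese=e$ by \Cref{Fact:uniqueIdemH}.

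You drop linearity and invoke the Clifford--McLean theorem: on a band, $\calJ$ is a congruence with semilattice quotient, and each $\calJ$-class is a rectangular band. From this you get $[ese]=[e][s][e]=[e]$, and hence $ese=e\cdot ese\cdot e=e$. The rectangular-band half of that theorem is exactly what the paper's Location-Theorem computation proves when $e\,\calJ\, s$. The new ingredient is that $\calJ$ is a congruence. That yields $es\,\calJ\, e$, which is the obstacle you correctly isolated, and it does the job that linearity does in the paper.

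Your route buys generality: every band lies in $\bfD\bfA$. It also supplies the step hidden in the paper's ``in other words''. Passing from totality of $\leq_{\calJ}$ to the identity $xyx=x\vee yxy=y$ needs the implication $x\leq_{\calJ} y\Rightarrow xyx=x$ in a band, which is your statement. The paper's route buys self-containment, since it uses only the Green's-relation facts listed in the appendix. Yours rests on a nontrivial external structure theorem. Cite it as a structure theorem, not as ``the identity $ese=e$'' as your parenthetical does, since that identity is the claim itself.

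In a write-up, keep only the final argument. Your Steps 2--3 never establish $e\leq_{\calJ} es$ on their own. The assertion that a product of elements $\geq_{\calJ} e$ stays in $\calJ(e)$ is false: in $U_1$, $1\cdot 1=1\notin\calJ(0)$. It is also not what \Cref{Fact:location} says.
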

\begin{proof}
    Let $M$ be a linear band. Let $e,s\in M$ be such that $e\leq_\calJ s$. 
    Since $M$ is a linear band, we have $ese=e$ or $ses=s$. Let, if possible, $ese\neq e$. Then $ses=s$, that is, $s\leq_\calJ e$. By assumption we also have $e\leq_\calJ s$, therefore $e\calJ s$. 
    Since each element of $M$ is an idempotent, by location theorem (\Cref{Fact:location}) we have $es\in \calR(e)\cap \calL(s)$. 
    Further we have $ese\in \calR(es)\cap\calL(e)$. But $\calR(es)=\calR(e)$, therefore we have $ese\in\calH(e)$.
    Now by \Cref{Fact:uniqueIdemH} we have $ese=e$. 
\end{proof}

\subsection{Decidability with Guarded Regular Predicates recognised by Linear Bands}

In this section, we show that the logic $\EMSO^2[\Sigma,<,+1,\sim,\moo1, \widetilde\calF_M]$ has a decidable satisfiability problem when $M$ is a linear band. It suffices to show the below result. 


\begin{proposition}\label{prop:suffixstoring-oqnsa-linband}
Let $M$ be a linear band and $h:\Sigma^* \to M$ be a morphism. For each $h$-suffix-storing set automaton there is an equivalent ordered quasi-normal set automaton.
\end{proposition}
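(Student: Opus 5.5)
The approach is to re-index the non-stable sets $\{X_m \mid m\in M\}$ of the given $h$-suffix-storing automaton $\calA$ so that every global update $t_\gamma$ empties a prefix of sets and is the identity on the remaining ones. The re-indexing uses the ordering of the \emph{currently realised} suffix images, which is finite information. The stable sets, bounded sets and local updates are kept as they are. Only the transformations $X_p\mapsto X_{p\cdot h(\gamma)}$ need to be replaced by ordered ones, at the price of enlarging the state space.

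\textbf{Step 1 (finite bookkeeping).} Let $w$ be the string projection of the prefix read so far. The set $\Lambda(w)=\{h(s)\mid s \text{ a suffix of } w\}$ is computable by a finite automaton, since $\Lambda(wa)=\{1\}\cup \Lambda(w)\cdot h(a)$. List its elements by the length of the shortest suffix realising them, giving a chain $m_1=1, m_2,\ldots,m_r$ with $r\le |M|$. Since $h(us)=h(u)h(s)$, each element of the chain is $\leq_\calJ$ the previous one. Moreover, by \Cref{Fact:fallToRLClass}, two consecutive elements in the same $\calJ$-class are $\calL$-related. The new automaton stores this list in its state. By the suffix-storing property, only the sets $X_{m_i}$ can be nonempty. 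It therefore suffices to use sets $Z_1,\ldots,Z_{|M|}$, where $Z_j$ plays the role of $X_m$ for the element $m$ of rank $j$ \emph{counted from the oldest (longest) end} of the list. The transition tests and acceptance vectors of $\calA$ on $X_m$ are translated through the stored list.

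\textbf{Step 2 (updates become ordered).} On reading $\gamma$, the new list is obtained from $\{1\}\cup\{m_i h(\gamma)\}$. The update sends the rank of $m_i$ to the rank of $m_i h(\gamma)$. The key claim is that this map is the identity on an \emph{oldest segment} of ranks, and sends all younger ranks into that segment, possibly merging several of them. The freshly inserted value $1$ takes a new top rank; this is a local update and is allowed.

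To prove the claim I would use the identities of a linear band together with \Cref{lem:linearband-da}. Let $g=h(\gamma)$. Entries $m_i$ that are $\leq_\calJ g$ satisfy $m_i g m_i=m_i$. Using the $\calL$-relations along the chain and idempotency, I would show that $m_i g$ coincides with an already present entry of the chain, or else that all entries in that $\calJ$-class collapse together. Entries strictly $\calJ$-above $g$ are pushed down into the $\calJ$-class of $m_i g\leq_\calJ g$. The linear order on the $Z_j$ is then descending rank, so the emptied sets form a prefix, as required by the ordered normal set automaton definition.

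\textbf{Main obstacle.} The hard part is Step 2, namely showing that the \emph{older} entries are genuinely fixed, so that the identity-on-$Y_1$ condition holds. Examples such as a right-zero band (where $mg=g$ for all $m$) show that fixing fails in general and total collapse can occur. I would first try to prove that the set of ranks whose element changes is always downward closed toward the young end. If that fails, I would fall back on absorbing the reshuffling into the state: within one $\calJ$-class the chain has bounded length, so a change of representative there can be recorded as a relabelling in the stored list rather than as a movement of data values. Finally I would check that quasi-normality and the stable sets are untouched, so that \Cref{cor:oqnsa-dec} applies.
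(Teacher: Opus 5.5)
Your Step 2 is not merely unproved: its key claim is false for the indexing you chose. Ranking elements by the length of the shortest suffix realising them is unstable under right multiplication. When $m_ig=m_jg$ with $i<j$, the merged value inherits the \emph{young} position of $m_i$, so data values sitting in an old set must move upwards while the entries in between shift down.

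Concretely, take $M=\{1,a,b\}$ with $xy=x$ for $x,y\in\{a,b\}$. This is a left-zero band with identity adjoined; its $\calJ$-classes $\{1\}$ and $\{a,b\}$ are linearly ordered, so it is a linear band. Let $h$ be the identity on $\Sigma=\{a,b\}$, and take the natural $h$-suffix-storing automaton: global update $X_p\mapsto X_{p\cdot h(\ell)}$, then the current value is moved to $X_1$.
\begin{itemize}
\item After $(a,d_1)(a,d_2)(b,d_3)$ one has $X_a=\{d_1\}$, $X_b=\{d_2\}$, $X_1=\{d_3\}$. Your list is $1,b,a$, so $d_1\in Z_1$, $d_2\in Z_2$, $d_3\in Z_3$.
\item Reading $(a,d_4)$ gives $X_a=\{d_1,d_3\}$ and $X_b=\{d_2\}$. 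The new list is $1,a,b$, since $a$ is now realised by a one-letter suffix.
\end{itemize}
The global update you need is therefore $Z_1\mapsto Z_2$, $Z_2\mapsto Z_1$, $Z_3\mapsto Z_2$. It swaps two sets, which also puts them on a cycle, so they are not bounded sets. But every ordered update (a prefix moved into the remaining sets, which are fixed) is idempotent, so no choice of linear order repairs this.

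Conversely, the obstacle you anticipated is harmless: in a right-zero band everything collapses into the oldest rank, which is an ordered update. Your side claim that $m_ig$ either already occurs in the chain or the whole $\calJ$-class collapses also fails. For example, in a rectangular band $\{1\}\cup I\times\Lambda$, take $g$ in a new column: every $m_ig$ is new and they are pairwise distinct.

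The missing idea, which the paper's proof uses, is to attach each physical set to an invariant of the element it stores, not to that element's current position in the list, and to keep only the labelling in the state. The paper uses heights $k_m$ in a linear refinement $\sqsubseteq_\calL$ of $\leq_\calL$, which are constant on $\calR$-classes. Indexing directly by $\calR$-classes, with higher $\calJ$-classes first, works for the same reasons. Three facts make this ordered.
\begin{itemize}
\item The stored elements form a $\leq_\calL$-chain. So two of them in one $\calJ$-class are $\calL$-related and, since $\calH$ is trivial in a band, lie in distinct $\calR$-classes. Hence indices are distinct.
\item If $mg\mathrel{\calJ}m$ then $mg\mathrel{\calR}m$, so the data stays in place and only the label changes. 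This is the relabelling your fallback alludes to, but it has to replace the rank indexing rather than patch it.
\item Crucially, if $mg<_\calJ m$ then $g<_\calJ m$: otherwise $m\leq_\calJ g$, and $mgm=m$ by \Cref{lem:linearband-da}. Take $m$ to be a $\calJ$-lowest stored element that drops. Then every stored element in its $\calJ$-class or above drops strictly below that class, and all others stay. The changed indices thus form a prefix landing in the fixed part.
\end{itemize}
Your Step 2 invokes the $\bfD\bfA$ identity but never derives this monotonicity, which is the heart of the argument.

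A smaller point: the accepting set $C$ is a fixed set of vectors, independent of the final state. So acceptance cannot be ``translated through the stored list''. The paper instead copies, on the last input letter, each $Y_{k_m}$ into a dedicated set $Z_m$ placed at the end of the order.
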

    
\begin{proof}
Let $M$ be a linear band and $\calA$ be a $h$-suffix-storing set automaton for some morphism $h:\Sigma^* \to M$. Assume that $\calX = \{X_m \mid m\in M\}$ are the non-stable sets of $\calA$.  
Since stable sets can always be added to an ordered quasi-normal set automaton, to simplify the below construction we ignore the stable sets. 

First, we introduce some machinery for the proof. 

Let $\sqsubseteq_{\calL}$ be a partial order on $M$ extending $\leq_{\calL}$, obtained by fixing total orders within each $\calJ$-class on $\calL$-classes such that they are stable on $\calR$-classes, and ordering distinct $\calJ$-classes according to $\leq_{\calL}$. More precisely, $\sqsubseteq_{\calL}$ satisfies the following conditions. Let $x,y\in M$. 
\begin{enumerate}
    \item If $x\sqsubset_\calL y$ then $x \leq_\calL y$.
    \item The relation $\sqsubseteq_\calL$ is antisymmetric, i.e., $x \sqsubseteq_\calL y$ and $y \sqsubseteq_\calL x$ then $x=y$.
    \item If $x <_\calL y$ then $x \sqsubset_\calL y$, and, if $x \calL y$ and $x\neq y$ then either $x \sqsubset_\calL y$ or $y \sqsubset_\calL x$.
    \item If  $x \sqsubseteq_\calL y$, $x \calR x'$ and $y \calR y'$, then $x' \sqsubseteq_\calL y'$.
\end{enumerate}
An order $\sqsubseteq_{\calL}$ is easily found by fixing an ordering on the columns of each $\calJ$-class in the eggbox diagram of $M$.

Let $x\sqsupseteq_\calL y$ and $x \sqsupset_\calL y$ denote $y \sqsubseteq_\calL x$ and $y \sqsubset_\calL x$ respectively.
A strict $\sqsubseteq_\calL$-chain in $M$ is a sequence $m_1,\ldots, m_n$ of elements from $M$ such that $m_1 \sqsupset_\calL m_2 \sqsupset_\calL \cdots \sqsupset_\calL m_n$. Let $k\geq 1$ be the length of the longest $\sqsubseteq_\calL$-chain in $M$.
The $\sqsubseteq_\calL$-height of an element $m\in M$, denoted by $k_m$, is the length of the longest $\sqsubseteq_\calL$-chain starting with $m$. 

We observe that a subset $M' \subseteq M$ is an $\leq_\calL$-total set, if, and only if, it is $\sqsubseteq_\calL$-total, and equivalently, the elements of $M'$ forms a strict $\sqsubseteq_\calL$-chain.
Moreover, if $m_1 \sqsupset_\calL m_2 \sqsupset_\calL \cdots \sqsupset_\calL m_n$ is a strict  $\sqsubseteq_\calL$-chain
then $k_{m_1} > k_{m_2} > \cdots > k_{m_n}$. 

\begin{claim}
\label{claim:M'm-Ltotal}
     If $M' \subseteq M$ is an $\leq_\calL$-total set. Then for all $m\in M$, $M'm$ is also $\leq_\calL$-total.
\end{claim}
\begin{claimproof}
    Assume that $M' \subseteq M$ is $\leq_\calL$-total. It suffices to prove that for each $m_1,m_2 \in M'$, if $m_1 \leq_\calL m_2$, then $m_1m \leq_\calL m_2m$, which follows from the fact that $\leq_\calL$ is stable under right multiplication (\Cref{fact:stable}).  
\end{claimproof}
\begin{claim}
\label{claim:R-height}
If $m \calR m'$  for $m,m' \in M$, then $k_m = k_{m'}$. 
\end{claim}
\begin{claimproof}
Assume that $m\calR m'$.
Let $m=m_1 \sqsupset_\calL m_2 \sqsupset_\calL \cdots \sqsupset_\calL m_n$ be a strict $\sqsubseteq_\calL$-chain starting in $m$. 
Let $z_i \in M$, for $1\leq i\leq n-1$, be such that $z_{i}m_{i}=m_{i+1}$.
The we claim that the $\sqsubseteq_\calL$-chain $m' \sqsupseteq_\calL z_1m'\sqsupseteq_\calL z_2z_1m'\cdots \sqsupseteq_\calL z_{n-1}\cdots z_1m'$ is strict. Since $m \calR m'$ we have $z_1m \calR z_1m'$. By induction,  we have $z_i \cdots z_1m \calR z_i \cdots z_1m'$, for $1\leq i \leq n-1$. Since $\sqsubseteq_\calL$ is compatible with the $\calR$-relation, we conclude that the chain is strict. Thus we infer that $k_m \leq k_{m'}$. The other direction follows symmetrically.
\end{claimproof}
\begin{claim}
\label{claim:ordered}
    Let $m_1,m_2 \in M$ be two elements such that $m_1 \sqsubset_\calL m_2$. Then for each $m\in M$, if $k_{m_1m} < k_{m_1}$ then $k_{m_2m} < k_{m_1}<k_{m_2}$.
\end{claim}
\begin{claimproof}
Assume that $k_{m_1m} < k_{m_1}$. We claim that $m_1m <_\calJ m_1$.
Otherwise $m_1m\calJ m_1$ and we also have $m_1m\leq_\calR m_1$, therefore by \Cref{Fact:fallToRLClass} we have $m_1m \calR m_1$. Now by \Cref{claim:R-height} we have $k_{m_1m} = k_{m_1}$, a contradiction. 
Hence we have $m_1m<_\calJ m_1$. Since $M$ is linear, we have either $m<_\calJ m_1$ or $m_1\leq_\calJ m$. If $m_1\leq_\calJ m$ we have $m_1mm_1=m_1$ since $M$ is in $\bfD\bfA$. Here we have $m_1\leq_\calJ m_1m$, a contradiction. 
Thus $m<_\calJ m_1$ and since $m_2m\leq_\calJ m$ we get $m_2m<_\calJ m_1$. 
We deduce that $k_{m_2m} < k_{m_1} < k_{m_2}$.
\end{claimproof}


We construct an ordered quasi-normal set automaton $\calB$ that has the family of sets $\calY \cup \calZ$, where $\calY=\{ Y_i \mid i \leq k\}$ and $\calZ =\{ Z_m \mid m \in M\}$. The sets in $\calY$ are used to simulate the non-stable sets $\calX$ of the $h$-suffix-storing set automaton $\calA$. 
At any point during a run of the automaton $\calB$, the set $Y_{k_m}$ has the content of a unique nonempty set $X_m$ if it exists.
During the run, except for the last step, the sets in $\calZ$ remain empty and the identity transformation is applied to them. At the last step we copy the nonempty sets in $\calY$ to the corresponding set in $\calZ$ (i.e., $Y_{k_m}$ to $Z_m$). The automaton $\calB$ simulates $\calA$ in the following way.

As already observed , at any point during the run, the subset $M'=\{ m\in M \mid X_m \neq \emptyset\}$ is an $\leq_\calL$-total set, and hence $\sqsubseteq_\calL$-total also. 
The contents of each $X_m$ is stored in the set $Y_{k_m}$ and the automaton $\calB$ also remembers in its state the information about the element $m$ that corresponds to the nonempty set $Y_{k_m}$ (this is a partial map $f:[k] \to I$ defined as $f:k_m \mapsto m$). To effect the global update $\rho=\{(X_m, X_{mm'})\mid m\in M\}$ of $\calA$, i.e., right multiplication by an element $m' \in M$, the automaton $\calB$ moves the data values in each nonempty set $Y_{k_m}$ to the set $Y_{k_{mm'}}$, i.e. by the global update $\rho'=\{(Y_{k_m}, Y_{k_{mm'}})\mid m\in M'\} \cup \{(Z_m,Z_m) \mid m\in M\}$, and replace the map $f$ by the map $f':k_{mm'}\mapsto mm'$. 
Note that since $M'm'$ is a $\sqsubseteq_\calL$-total set, elements of  $M'm'$ form a strict $\sqsubseteq_\calL$-chain, and therefore $k_{m_1m'} \neq k_{m_2m'}$ whenever $m_1m' \neq m_2m'$, $m_1,m_2\in M'$. Therefore each nonempty set $Y_i$ corresponds to precisely one nonempty set from $\calX$. On the last input pair the global update $\rho=\{(X_m, X_{mm'})\mid m\in M\}$ is simulated by the update 
$\rho'' =\{(Y_{k_m}, Z_{mm'})\mid m\in M\}$, i.e., we copy the sets in $\calY$ to sets in $\calZ$ using the map stored in the sets. Finally the automaton $\calB$ accepts if a final state is reached and the each data value is in an accepting configuration. 

Let $\lesssim$ be the order on the family  $\calY \cup \calZ$, given by $\calY \lnsim \calZ$, where $\calY$ is ordered by the relation $\{(Y_i, Y_j) \mid j \leq i\}$, and we fix an arbitrary order on $\calZ$. 
Using \Cref{claim:ordered}, it is straightforward to check that global updates of  $\calB$ are ordered with respect to $\lesssim$.
\end{proof}

Thus, by Propositions \ref{prop:logic-to-aut}, \ref{prop:suffixstoring-oqnsa-linband}, and \Cref{cor:oqnsa-dec}, we have the below result. 
\begin{theorem}
\label{thm:decidability}
    Satisfiability of $\emsotwo[\Sigma,<,+1,\sim, \moo1, \widetilde{\calF}_M]$ formulas over data words is decidable when $M$ is a linear band.
\end{theorem}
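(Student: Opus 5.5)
The plan is to chain the three results already established, in the order the paper suggests. Fix a linear band $M$ and a formula $\psi \in \emsotwo[\Sigma,<,+1,\sim,\moo1,\widetilde{\calF}_M]$. By definition of $\widetilde{\calF}_M$, there is a morphism $h:\Sigma^*\to M$ recognising every guarded predicate used in $\psi$. If several predicates were given with different morphisms, I would first observe that the definition of $\widetilde{\calF}_M$ already fixes a single $h$, so no product monoid is needed. This matters, because $M^2$ need not be a linear band (as $U_1^2$ shows).

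First I apply \Cref{prop:logic-to-aut} to obtain an $h$-suffix-storing set automaton $\calA_\psi$ with $L(\psi)=L(\calA_\psi)$. The letter-to-letter renaming that eliminates the existential monadic quantifiers is handled inside that proposition via \Cref{lem:qnsa-closure}. The morphism there is $h\circ\pi$ on the extended alphabet, but its image still lies in $M$, so it remains a morphism into a linear band.

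Next I apply \Cref{prop:suffixstoring-oqnsa-linband} to $\calA_\psi$, which yields an equivalent ordered quasi-normal set automaton $\calB$. Finally, by \Cref{cor:oqnsa-dec}, emptiness of $\calB$ is decidable; this rests on \Cref{prop:osa-to-omca}, the normalisation of ordered quasi-normal automata, and Reinhardt's \Cref{prop:omca-dec}. Since $\psi$ is satisfiable iff $L(\calB)\neq\emptyset$, satisfiability is decidable.

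All constructions are effective: Scott normal form, the case analysis of \Cref{lem:AA-logic-to-aut,lem:AE-logic-to-aut}, the $\sqsubseteq_\calL$-height map, and the multicounter simulation. Hence the procedure is uniform in $\psi$.

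The main point to double-check is the hypothesis of \Cref{prop:suffixstoring-oqnsa-linband}. Its proof uses that the set $\{m \mid X_m\neq\emptyset\}$ is $\leq_\calL$-total. This follows from the suffix-storing property: the images of suffixes $h(u)$, $h(vu)$ of $\str(w)$ satisfy $h(vu)\leq_\calL h(u)$, so they form an $\leq_\calL$-chain. I would make sure this holds for the extended morphism and for the automata built in \Cref{lem:guardedlogic-construction}, whose only non-identity updates are the $t_\gamma$, which act on the sets $X$ by right multiplication. Everything else is a direct composition of earlier results.
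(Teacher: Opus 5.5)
Your proposal is correct and is exactly the paper's argument, which obtains the theorem by composing \Cref{prop:logic-to-aut}, \Cref{prop:suffixstoring-oqnsa-linband} and \Cref{cor:oqnsa-dec}. Your justification that $\{m \mid X_m\neq\emptyset\}$ is $\leq_\calL$-total, via $h(vu)=h(v)h(u)\leq_\calL h(u)$, supplies the fact that the proof of \Cref{prop:suffixstoring-oqnsa-linband} uses as ``already observed'' but never proves. For the extended-alphabet point you flag, the simplest fix is to apply \Cref{prop:suffixstoring-oqnsa-linband} to the automaton over $\Sigma\times 2^{\barX}$ (with morphism $h\circ\pi$) before projecting to $\Sigma$: a letter-to-letter renaming leaves the sets and global updates unchanged, so it preserves being ordered and quasi-normal.
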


\subsection{Undecidability with Guarded Regular Predicates not recognised by Linear Bands}

\begin{proposition}
\label{prop:N-U2}
    If $M$ is not a linear band then either $U_1^2$ or $N_2$ divides $M$.  
\end{proposition}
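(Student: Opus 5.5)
The plan is a case split according to which defining condition of a linear band fails: either $M$ is not a band, or $M$ is a band but $\leq_\calJ$ is not total. In each case I would exhibit an explicit subsemigroup $T\le M$ together with a surjective morphism from $T$ onto the target monoid, namely $U_1^2$ or $N_2$. The surjective morphism will always be a Rees quotient $T/I$ by a suitable ideal $I$ of $T$. The monoid identity is taken to be the identity $1$ of $M$ in both cases.

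\textbf{Case 1: $M$ is a band but not $\calJ$-total.} Choose $x,y\in M$ with $xyx\neq x$ and $yxy\neq y$; in a band this means $x$ and $y$ are $\leq_\calJ$-incomparable. Let $T=\langle 1,x,y\rangle$. Every element of $T$ other than $1,x,y$ is a product of a word that uses both letters $x$ and $y$, because $x^2=x$ and $y^2=y$. Such an element $t$ satisfies $t\leq_\calJ x$ and $t\leq_\calJ y$. The inequality is strict: if $t\calJ x$, then $x\leq_\calJ t\leq_\calJ y$, contradicting incomparability, and symmetrically for $y$. Hence $I=T\setminus\{1,x,y\}$ is an ideal of $T$. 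By the same argument $xy,yx\in I$, and moreover $xy$ and $yx$ are different from $x$ and $y$. The Rees quotient $T/I=\{1,x,y,0\}$ then satisfies $x^2=x$, $y^2=y$ and $xy=yx=0$, which is exactly $U_1^2$.

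\textbf{Case 2: $M$ is not a band.} Pick $s$ with $s^2\neq s$. The first sub-step handles the case where $s$ is aperiodic. Let $e=s^\omega$ and $t=s^{\omega-1}$. Then $t\neq e$, while $t^2=e$ whenever $\omega\ge 2$. Taking the Rees quotient of $\{1,t,e\}$ by the ideal $\{e\}$ gives a copy of $N_1$. To reach $N_2$, I need a second nilpotent element $u$ with $u^2$, $tu$ and $ut$ all lying in a common ideal $I$ of $\langle 1,t,u\rangle$ that avoids $1,t,u$. The natural first attempt is to take $u=s^{\omega-2}$, or some other power of $s$ between $s$ and $t$. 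The second sub-step handles the case where $s$ is not aperiodic. Then $\langle s\rangle$ contains a nontrivial group, and I would look inside $M$ for two distinct non-idempotents whose pairwise products drop into a lower $\calJ$-class, and then take the Rees quotient as in Case 1.

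\textbf{Main obstacle.} The hard step is producing the \emph{second} nilpotent generator in Case 2, and I expect it may in fact fail as stated. If $M=N_1=\{1,t,0\}$, then $M$ is not a band. Yet $N_1$ has only three elements, so neither $U_1^2$ nor $N_2$, each of order four, can divide it. Similarly, a nontrivial cyclic group $M=\mathbb Z_2$ is not a band, and it is not divided by either monoid, since every divisor of a finite group is a group. So I would first check these small examples carefully against the paper's definition of division. If they really are counterexamples, Case 2 must be repaired. One option is to weaken the statement to ``$U_1^2$, $N_1$, or a nontrivial group divides $M$''. The other is to argue directly that the undecidability reduction only needs the single nilpotent $t$ found above, together with the aperiodic sub-step. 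Case 1 goes through as written in either event.
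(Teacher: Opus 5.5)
Your Case~1 is correct and is the paper's own argument for bands. The paper only asserts that two $\calJ$-incomparable idempotents make $U_1^2$ divide $M$; your Rees quotient of $\langle 1,x,y\rangle$ by $T\setminus\{1,x,y\}$ supplies the verification.

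In the non-band case your suspicion is justified, and the gap lies in the paper's proof, not in your plan. The paper claims that for a non-idempotent $x$, $N_2$ divides the submonoid generated by $x$. This is false whenever $x$ lies in a nontrivial subgroup. Your group example settles it. The monoid $M=\{1,g\}$ with $g^2=1$ is not a band, yet every divisor of it has at most two elements, so neither $U_1^2$ nor $N_2$ divides it. The statement as written therefore cannot be proved. The correct version is that $U_1^2$, the three-element monoid $\{1,a,0\}$ with $a^2=0$, or a nontrivial group divides $M$. Correspondingly, the paper's argument for \Cref{thm:undecidability} does not cover such $M$.

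Your other counterexample, and your hunt for a second nilpotent generator, are artefacts of inconsistent notation in the paper. The text defines $N_k=\{1,x_1,\dots,x_k,0\}$. However, \Cref{fig:u1-u1sqr-n2}(d), the nilpotent-predicate example with $n=2$, and \Cref{prop:undec}(2) (the language of words with exactly one $a$) all use $N_2=\{1,a,0\}$ with $a^2=0$. That monoid is your $N_1$.

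Under that reading, the right split is not aperiodic versus non-aperiodic, but whether the non-idempotent $s$ has index at least $2$.
\begin{itemize}
\item If it does, then $\{s^2,s^3,\dots\}$ is an ideal of $\{1\}\cup\langle s\rangle$ containing neither $s$ nor $1$. The Rees quotient is $\{1,s,0\}$ with $s^2=0$. This also covers non-aperiodic $s$ lying outside every subgroup.
\item If the index is $1$, then $s$ generates a nontrivial cyclic group, which divides $M$. Your vague second sub-step cannot do better than this, as the two-element group shows.
\end{itemize}
One small fix: your $t=s^{\omega-1}$ should use the index of $s$, not the exponent $\omega$ of $M$, since $s^{\omega-1}$ may already equal $s^{\omega}$.
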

\begin{proof}
Let $M$ be a monoid that is not a linear band. If $M$ contains an element $x$ that is not an idempotent then $M$ is not a band. Here $N_2$ divides the submonoid generated by $x$. Hence assume that all elements of $M$ are idempotent. Now since $M$ is not a linear band, there exist two idempotents $e$ and $f$ that are $\calJ$-incomparable. Here $U_1^2$ divides $M$. 
\end{proof}

\begin{propositionrep}\label{prop:undec}
Satisfiability of $\EMSO^2[\Sigma,<,+1, \sim, \moo1, \widetilde{\calF}]$ formulas over data words is undecidable for the following families of languages $\calF$: 
 \begin{enumerate}
     \item $\{\Sigma^*a\Sigma^*, \Sigma^*b\Sigma^*\}$, 
     where $a,b\in \Sigma$.
     \item  $\{(\Sigma\setminus \{a\})^*\,a\,(\Sigma\setminus \{a\})^*\}$,  
     where $a\in \Sigma$.
 \end{enumerate} 
\end{propositionrep}  
\begin{proof}
We use a reduction from the halting problem of two-counter machines that is known to be undecidable \cite{Minsky67}.

First, we recall two counter machines. Let $i\in\{1, 2\}$.

A two-counter machine~$\mach$ is a tuple $(Q, q_0, \Delta,q_f)$ where $Q$ is the set of states with an initial state $q_0$ and a final state $q_f$, $\Delta \subseteq Q \times \instr \times Q$ is the transition relation where $\instr =\{I_i, D_i, Z_i\}$.
The instruction $I_i$ increases the value of counter $i$ by one, $D_i$ decreases the value of counter $i$ by one, while $Z_i$ performs a zero test for counter $i$.
The final state $q_f$ is reached after performing the zero-tests $(q,Z_1,q_f)$, and $(q,Z_2,q_f)$.
While incrementing a counter can happen unconditionally, a counter can be decremented only when the counter value is non-zero.

A \emph{configuration} of a two-counter machine is of the form $(q,c_1,c_2)$ where $q \in Q$ is a state, $c_1$ and $c_2$ are the values of counter $1$ and counter $2$ respectively.
An \emph{accepting run} of a two-counter machine is a sequence of configurations from the initial configuration $(q_0,0,0)$ to the final configuration $(q_f,0,0)$.

\proofsubparagraph{Proof of (1):}

We encode the run of a given two-counter machine $\mach = (Q, q_0, \Delta,q_f)$ as a data word over the alphabet $\Delta$, the string projection of which corresponds to the sequence of transitions of the two-counter machine. We partition the alphabet $\Delta$ into six disjoint subsets, $\{\Delta_\alpha \subseteq \Delta \mid \alpha \in \{\iota_1, \delta_1, z_1, \iota_2, \delta_2, z_2\} \}$. Each transition incrementing counter $i$ is belongs to the class $\Delta_{\iota_i}$. Similarly, the transitions decrementing and zero-testing counter $i$ belong to $\Delta_{\delta_i}$ and $\Delta_{z_i}$ respectively.
The same data value is used when a counter's value is incremented from the value $c$ to $c+1$ and its subsequent decrement from $c+1$ to $c$. Apart from this, each transition is given a unique data value from the infinite domain. 

We write a formula $\varphi \in \EMSO^2[\Delta, <, +1, \sim, \moo1, \widetilde\calF]$ that is satisfied by the data word $w = (\gamma_1, d_1),\ldots,(\gamma_n, d_n)$, where $\gamma_i \in \Delta$ for $i\in[n]$, encoding the run of the two-counter machine $\mach$ if and only if the run is accepting.
The formula $\varphi$  is a conjunction of the following formulas describing sufficient properties for an accepting run of a two-counter machine $\mach$. 

\begin{enumerate}
	\item Each position is labelled by exactly one transition. We write 
\[\forall x \bigvee_{\gamma\in \Delta} \gamma(x)  \,\wedge\, \forall x   \bigwedge_{\substack{\gamma_i, \gamma_j \in \Delta,\\ \gamma_i \neq \gamma_j }} \neg (\gamma_i(x) \wedge \gamma_j(x)).\]
	
	\item The first position is labelled by a transition from an initial state. Let $\Delta_I \subseteq \Delta$ denote the set of transitions from an initial state. We write
\[\exists x \,\Biggl( (\forall y \, x <  y \vee x=y) \rightarrow \bigvee_{\gamma_j\in\Delta_I} \gamma_j(x) \Biggr).\]

	\item The last position is labelled by a transition to a final state. Let $\Delta_F \subseteq \Delta$ denote the set of transitions to a final state. We write
\[\exists x \,\Biggl( (\forall y \, y <  x \vee y=x) \rightarrow \bigvee_{\gamma_j\in\Delta_F} \gamma_j(x)\Biggr).\]

	\item We say two transitions $\gamma_i,\gamma_j\in \Delta$ are \emph{compatible}, denoted as $\mathit{comp}(\gamma_i,\gamma_j)$, if the target state of $\gamma_i$ is the source state of $\gamma_j$. Then, we state that consecutive positions are labelled by compatible transitions. We write
\[\forall x \forall y \, \Biggl( y = x+1 \wedge \gamma_i(x) \rightarrow \bigvee_{\substack{\gamma_i, \gamma_j\in\Delta,\\ \mathit{comp}(\gamma_i,\gamma_j)}} \gamma_j(y)\Biggr).\]

	\item Each class is a word in $\Delta_{\iota_i}\Delta_{\delta_i}$ or $\Delta_{z_i}$, for $i\in \{1,2\}$. We write 
\begin{align*}
\forall x \forall y \bigwedge\limits_{i\in\{1,2\}}& \, \Biggl( 
(x <y \wedge x \sim y  \rightarrow \bigvee_{\substack{\gamma_{\iota_i}\in \Delta_{\iota_i},\\ \gamma_{\delta_i}\in \Delta_{\delta_i}}} \gamma_{\iota_i}(x) \wedge  \gamma_{\delta_i}(y) ) \\
&\bigwedge \Biggl( (x \sim y  \rightarrow x = y) \rightarrow \bigvee_{\gamma_{z_i}\in \Delta_{z_i}} \gamma_{z_i}(x) \Biggr) \Biggr).
\end{align*}

	\item 
	There is no position with a label from $\Delta_{z_i}$ between two positions with labels from $\Delta_{\iota_i}$ and $\Delta_{\delta_i}$ of the same class, for $i\in\{1,2\}$. We write
\begin{align*}
\forall x \forall y \bigwedge\limits_{i\in\{1,2\}} \Biggl(  \bigvee_{\substack{\gamma_{\iota_i}\in \Delta_{\iota_i} \\\gamma_{\delta_i}\in\Delta_{\delta_i}}} \Biggl( \gamma_{\iota_i}(x) \wedge \gamma_{\delta_i}(y)  \vee \gamma_{\delta_i}(x) \wedge \gamma_{\iota_i}(y) \Biggr) \wedge x \sim y \rightarrow \neg \widetilde{\sem{\Delta^*\Delta_{z_i}\Delta^*}}(x,y)\Biggr).
\end{align*}

We observe that the languages $\Delta^*\Delta_{z_1}\Delta^*$ and $\Delta^*\Delta_{z_2}\Delta^*$ belong to the given family $\calF$ with $\Sigma=\Delta$, renaming $\Delta_{z_1}$ as $a$ and $\Delta_{z_2}$ as $b$.
\end{enumerate}

This concludes our proof.

\proofsubparagraph{Proof of (2):}

We encode the run of a given two-counter machine $\mach = (Q, q_0, \Delta,q_f)$ as a data word over the alphabet $\Sigma=\Delta\cup\{c_1,c_2\}$. The string projection of the data word is of the form $(\Delta c_1^*c_2^*)^*$ which corresponds to the sequence of transitions of the two-counter machine along with the number of $c_i$'s denoting the value of the counter $c_i$ after the transition. 

We partition the alphabet $\Delta$ into six disjoint subsets, $\{\Delta_\alpha \subseteq \Delta \mid \alpha \in \{\iota_1, \delta_1, z_1, \iota_2, \delta_2, z_2\} \}$. 
The same data value is used for a transition incrementing a counter, its corresponding $c_i$ present in each factor $\Delta c_1^*c_2^*$, and the transition eventually decrementing it. Thus, each class is of the form $\Delta_{\iota_i}c_i^+\Delta_{\delta_i}$ or $\Delta_{z_i}$ for the zero-test transitions, where $i\in\{1,2\}$. 

We write a formula $\varphi \in \EMSO^2[\Sigma, <, +1, \sim, \moo1,  \widetilde\calF]$ that is satisfied by the data word $w = (\gamma_1, d_1),\ldots,(\gamma_n, d_n)$, where $\gamma_i \in \Sigma$ for $i\in[n]$, encoding the run of the two-counter machine $\mach$ if and only if the run is accepting.
The formula $\varphi$ is a conjunction of the following formulas describing sufficient properties for an accepting run of a two-counter machine $\mach$.

\begin{enumerate}
	\item Each position is labelled by exactly one letter, i.e., 
\[\forall x \bigvee_{\gamma\in \Sigma} \gamma(x)  \,\wedge\, \forall x   \bigwedge_{\substack{\gamma_i, \gamma_j \in \Sigma,\\ \gamma_i \neq \gamma_j }} \neg (\gamma_i(x) \wedge \gamma_j(x)).\]

        \item The string projection is a word in $(\Delta c_1^*c_2^*)^*$. 
        We say two transitions $\gamma_i,\gamma_j\in \Delta$ are \emph{compatible}, denoted as $\mathit{comp}(\gamma_i,\gamma_j)$, if the target state of $\gamma_i$ is the source state of $\gamma_j$. Then, we state that two positions labelled by transitions in $\Delta$ with no such position in between, are labelled by compatible transitions. 
        This is written as an $\EMSO^2[\Sigma, <, +1]$ formula.

        \item Each class is a word in $\Delta_{\iota_i}c_i^+\Delta_{\delta_i}$ or $\Delta_{z_i}$ for $i\in\{1,2\}$. This is written as an $\EMSO^2[\Sigma, <, +1, \sim, \moo1]$ formula. 
    
	\item The first position is labelled by a transition from an initial state. Let $\Delta_I \subseteq \Delta$ denote the set of transitions from an initial state. Then, we write
\[\exists x \,\Biggl( (\forall y \, x <  y \vee x=y) \rightarrow \bigvee_{\gamma_j\in\Delta_I} \gamma_j(x) \Biggr).\]

	\item The last position is labelled by a transition to a final state. Let $\Delta_F \subseteq \Delta$ denote the set of transitions to a final state. Then, we write
\[\exists x \,\Biggl( (\forall y \, y <  x \vee y=x) \rightarrow \bigvee_{\gamma_j\in\Delta_F} \gamma_j(x)\Biggr).\]

        \item For each position with a label from $\Delta_{z_i}$, we ensure that the previous factor of the form $\Delta c_1^*c_2^*$  does not have any $c_i$'s. This is a regular property that can be written as an $\EMSO^2[\Sigma, <, +1]$ formula. 

	\item 
    Between two positions labelled by $c_i$ of the same class, there is exactly one position with label from $\Delta$, for $i\in\{1,2\}$.
    We write
\begin{align*}
\forall x \forall y \bigwedge\limits_{i\in\{1,2\}} &\Biggl( c_i(x) \land c_i(y) \land  x \moo1 =y \rightarrow \widetilde{\sem{(c_1+c_2)^*\Delta(c_1+c_2)^*}}(x,y)\Biggr).
\end{align*} 

We observe that the language $(c_1+c_2)^*\Delta(c_1+c_2)^*$ belongs to the given family $\calF$ with alphabet $\Sigma=\Delta\cup\{c_1,c_2\}$, renaming $\Delta$ as $a$.
\end{enumerate}

Hence, the result follows. 
\end{proof}

The below result follows from \Cref{prop:N-U2} and the fact that the language families in \Cref{prop:undec} can be recognised by the monoids $U_1^2$ and $N_2$ respectively.

\begin{theorem}
\label{thm:undecidability}
Satisfiability of $\emsotwo[\Sigma,<,+1,\sim, \moo1, \widetilde{\calF}_M]$ formulas over data words is undecidable when $M$ is not a linear band.
\end{theorem}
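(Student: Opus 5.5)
The plan is to combine \Cref{prop:N-U2} with \Cref{prop:undec}, after showing that divisibility transfers expressive power. Let $M$ be a monoid that is not a linear band. By \Cref{prop:N-U2}, either $U_1^2$ or $N_2$ divides $M$. The first step is to show that both undecidable families of \Cref{prop:undec} are recognised by the corresponding small monoid.

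For item (1), the family $\{\Sigma^*a\Sigma^*, \Sigma^*b\Sigma^*\}$ is recognised by $U_1^2$. We use the morphism $a\mapsto(0,1)$, $b\mapsto(1,0)$ and all other letters $\mapsto(1,1)$, as in \Cref{ex:monoidrecog}.

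For item (2), the family $\{(\Sigma\setminus\{a\})^*a(\Sigma\setminus\{a\})^*\}$ is recognised by $N_2$. We send $a\mapsto x_1$ and every other letter to $1$. A word then maps to $1$, $x_1$ or $0$ according to whether it contains zero, one, or at least two $a$'s, and the accepting set is $\{x_1\}$.

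The second step is to transfer recognisability along division. Suppose $T$ divides $M$, so that $T$ is the image of a surjective morphism $\pi:M'\to T$ with $M'\le M$. Let $g:\Sigma^*\to T$ recognise a family $\calF$. For each letter $c$, choose some $m_c\in M'$ with $\pi(m_c)=g(c)$, and let $h:\Sigma^*\to M$ be the morphism defined by $h(c)=m_c$. Then $\pi\circ h=g$ on $\Sigma^*$, because $h$ lands in $M'$, where $\pi$ is defined. Hence each $L=g^{-1}(P)$ equals $h^{-1}(\pi^{-1}(P))$, and $h$ recognises $\calF$.

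There is one caveat about identities. If $M'$ is a subsemigroup rather than a submonoid, its identity may differ from $1_M$. This matters only for $h(\varepsilon)$, which should lie in $\pi^{-1}(g(\varepsilon))$. We handle it by sending $\varepsilon$ to $1_M$ and adjoining $1_M$ to the accepting set exactly when $\varepsilon\in L$; this is harmless since $1_M$ is not a product of letter images only if $1_M\notin h(\Sigma^+)$. If $1_M\in h(\Sigma^+)$, we can instead enlarge the alphabet slightly, or observe that these particular languages are insensitive to $\varepsilon$ in the reductions. This bookkeeping is the only step I expect to need care.

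It follows that every formula of $\EMSO^2[\Sigma,<,+1,\sim,\moo1,\widetilde{\calF}]$, with $\calF$ one of the families in \Cref{prop:undec}, is a formula of $\EMSO^2[\Sigma,<,+1,\sim,\moo1,\widetilde{\calF}_M]$. Satisfiability of the latter logic is therefore at least as hard as that of the former, and \Cref{prop:undec} gives undecidability.
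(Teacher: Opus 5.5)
Your outline follows the paper's own route: \Cref{prop:N-U2}, recognition of the two families of \Cref{prop:undec} by $U_1^2$ and $N_2$, and transfer of recognition along division. The transfer step you make explicit is sound. Its identity caveat closes without alphabet tricks. If $1_M\in M'$, then $\pi(1_M)$ is an identity for $\pi(M')=T$, so $\pi(1_M)=1_T=g(\varepsilon)$ and $\pi\circ h=g$ on all of $\Sigma^*$. If $1_M\notin M'$, then $1_M\notin h(\Sigma^+)$ and your patch applies. Your other fallback, that these languages are insensitive to $\varepsilon$, is not right: at $y=x+1$ a guarded predicate tests the empty factor, e.g.\ for an adjacent increment--decrement pair of one class in the first reduction.

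The genuine gap is the appeal to \Cref{prop:N-U2}, which you inherit from the paper and which is false as stated. Its proof asserts that $N_2$ divides the monoid generated by a non-idempotent $x$. This fails in two ways.

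\emph{The small-index case, repairable.} For $M=N_1=\{1,x_1,0\}$, taking $x=x_1$, the claim fails by cardinality: neither four-element monoid divides a three-element one. This case can be fixed. Whenever $x\notin\{x^2,x^3,\dots\}$, the Rees quotient of the submonoid $\{1\}\cup\{x,x^2,\dots\}$ by its ideal $\{x^2,x^3,\dots\}$ is $N_1$. Your morphism for item (2) already takes values in $\{1,x_1,0\}\cong N_1$, so route item (2) through $N_1$ instead of $N_2$.

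\emph{The subgroup case, not repairable.} If the non-idempotent lies in a subgroup, the theorem itself fails. Take $M=\{1,g\}$ with $g^2=1$, which is not a band. For a morphism $h$ into a group, let $p_i$ and $q_j$ be the images of the prefixes ending at position $i$ and ending just before position $j$. Then the factor strictly between $i<j$ has image $p_i^{-1}q_j$. Guess these values as monadic predicates, checked locally with $+1$. Every atom $\tildeL(x,y)$ then becomes $x\sim y\wedge x<y$ conjoined with a Boolean combination of unary predicates at $x$ and $y$. So satisfiability reduces to $\emsotwo$ over data words without regular predicates, which the paper recalls is decidable.

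So no argument can prove the statement as written. With the $N_1$ repair, your argument (like the paper's) proves undecidability whenever $M$ has a non-idempotent outside every subgroup or two $\calJ$-incomparable idempotents. In particular it covers every aperiodic $M$ that is not a linear band.
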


Our main result restated below follows from \Cref{thm:decidability}, \Cref{prop:N-U2} and 
\Cref{thm:undecidability}.%
\repeatedthm*%
%

%% file: linbnd-u1-u1sqr-n2.tikz
\begin{tikzpicture}
	\begin{pgfonlayer}{nodelayer}
		\node [style=none] (0) at (-1.5, -0.5) {};
		\node [style=none] (1) at (-0.5, -0.5) {};
		\node [style=none] (2) at (-1.5, 0.5) {};
		\node [style=none] (3) at (-0.5, 0.5) {};
		\node [style=none] (6) at (-0.5, 1) {};
		\node [style=none] (7) at (0.5, 1) {};
		\node [style=none] (8) at (0.5, 2) {};
		\node [style=none] (9) at (-0.5, 2) {};
		\node [style=none] (10) at (-1, 0) {};
		\node [style=none] (11) at (-1, 0) {\scriptsize$r$};
		\node [style=none] (13) at (0, 1.5) {};
		\node [style=none] (15) at (0, 1.5) {\scriptsize$1$};
		\node [style=none] (20) at (0.5, -0.5) {};
		\node [style=none] (21) at (1.5, -0.5) {};
		\node [style=none] (22) at (0.5, 0.5) {};
		\node [style=none] (23) at (1.5, 0.5) {};
		\node [style=none] (24) at (1, 0) {};
		\node [style=none] (25) at (1, 0) {\scriptsize$s$};
		\node [style=none] (30) at (-0.5, -2) {};
		\node [style=none] (31) at (0.5, -2) {};
		\node [style=none] (32) at (-0.5, -1) {};
		\node [style=none] (33) at (0.5, -1) {};
		\node [style=none] (34) at (0, -1.5) {};
		\node [style=none] (35) at (0, -1.5) {\scriptsize$0$};
		\node [style=none] (36) at (-6, 0.5) {};
		\node [style=none] (37) at (-5, 0.5) {};
		\node [style=none] (38) at (-5, 1.5) {};
		\node [style=none] (39) at (-6, 1.5) {};
		\node [style=none] (40) at (-5.5, 1) {};
		\node [style=none] (41) at (-5.5, 1) {\scriptsize$1$};
		\node [style=none] (42) at (-6, -1.5) {};
		\node [style=none] (43) at (-5, -1.5) {};
		\node [style=none] (44) at (-6, -0.5) {};
		\node [style=none] (45) at (-5, -0.5) {};
		\node [style=none] (46) at (-5.5, -1) {};
		\node [style=none] (47) at (-5.5, -1) {\scriptsize$0$};
		\node [style=none] (48) at (-5.5, -3) {};
		\node [style=none] (49) at (-5.5, -3) {};
		\node [style=none] (50) at (-5.5, -3) {\scriptsize (b)};
		\node [style=none] (51) at (0, -3) {};
		\node [style=none] (52) at (0, -3) {};
		\node [style=none] (53) at (0, -3) {\scriptsize (c) };
		\node [style=none] (54) at (5, 1) {};
		\node [style=none] (55) at (6, 1) {};
		\node [style=none] (56) at (6, 2) {};
		\node [style=none] (57) at (5, 2) {};
		\node [style=none] (58) at (5.5, 1.5) {};
		\node [style=none] (59) at (5.5, 1.5) {\scriptsize$1$};
		\node [style=none] (60) at (5, -0.5) {};
		\node [style=none] (61) at (6, -0.5) {};
		\node [style=none] (62) at (5, 0.5) {};
		\node [style=none] (63) at (6, 0.5) {};
		\node [style=none] (64) at (5.5, 0) {};
		\node [style=none] (65) at (5.5, 0) {\scriptsize$a$};
		\node [style=none] (66) at (5.5, -3) {};
		\node [style=none] (67) at (5.5, -3) {};
		\node [style=none] (68) at (5.5, -3) {\scriptsize (d)};
		\node [style=none] (69) at (5, -2) {};
		\node [style=none] (70) at (6, -2) {};
		\node [style=none] (71) at (5, -1) {};
		\node [style=none] (72) at (6, -1) {};
		\node [style=none] (73) at (5.5, -1.5) {};
		\node [style=none] (74) at (5.5, -1.5) {\scriptsize$0$};
		\node [style=none] (75) at (-11.5, 1) {};
		\node [style=none] (76) at (-10.5, 1) {};
		\node [style=none] (77) at (-11.5, 2) {};
		\node [style=none] (78) at (-10.5, 2) {};
		\node [style=none] (79) at (-11, 1.5) {};
		\node [style=none] (80) at (-11, 1.5) {\scriptsize$1$};
		\node [style=none] (81) at (-11.5, -0.5) {};
		\node [style=none] (82) at (-10.5, -0.5) {};
		\node [style=none] (83) at (-11.5, 0.5) {};
		\node [style=none] (84) at (-10.5, 0.5) {};
		\node [style=none] (85) at (-11, 0) {};
		\node [style=none] (86) at (-11, 0) {\scriptsize$a$};
		\node [style=none] (87) at (-12, -2) {};
		\node [style=none] (88) at (-11, -2) {};
		\node [style=none] (89) at (-12, -1) {};
		\node [style=none] (90) at (-11, -1) {};
		\node [style=none] (91) at (-11.5, -1.5) {};
		\node [style=none] (92) at (-11.5, -1.5) {\scriptsize$ba$};
		\node [style=none] (93) at (-11.25, -1.25) {\scriptsize *};
		\node [style=none] (94) at (-10, -1) {};
		\node [style=none] (95) at (-10, -2) {};
		\node [style=none] (96) at (-10.5, -1.5) {\scriptsize$b$};
		\node [style=none] (97) at (-10.75, 0.25) {\scriptsize *};
		\node [style=none] (98) at (-10.75, 1.75) {\scriptsize *};
		\node [style=none] (99) at (-10.25, -1.25) {\scriptsize *};
		\node [style=none] (101) at (-11, -3) {};
		\node [style=none] (102) at (-11, -3) {};
		\node [style=none] (103) at (-11, -3) {\scriptsize (a)};
		\node [style=none] (104) at (-5.25, 1.25) {};
		\node [style=none] (105) at (-5.25, 1.25) {\scriptsize *};
		\node [style=none] (106) at (-5.25, -0.75) {\scriptsize *};
		\node [style=none] (107) at (0.25, 1.75) {\scriptsize *};
		\node [style=none] (108) at (-0.75, 0.25) {\scriptsize *};
		\node [style=none] (109) at (1.25, 0.25) {\scriptsize *};
		\node [style=none] (110) at (0.25, -1.25) {\scriptsize *};
		\node [style=none] (111) at (5.75, 1.75) {\scriptsize *};
		\node [style=none] (113) at (5.75, -1.25) {\scriptsize *};
	\end{pgfonlayer}
	\begin{pgfonlayer}{edgelayer}
		\draw (2.center) to (0.center);
		\draw (2.center) to (3.center);
		\draw (3.center) to (1.center);
		\draw (1.center) to (0.center);
		\draw (6.center) to (9.center);
		\draw (9.center) to (8.center);
		\draw (8.center) to (7.center);
		\draw (7.center) to (6.center);
		\draw (22.center) to (20.center);
		\draw (22.center) to (23.center);
		\draw (23.center) to (21.center);
		\draw (21.center) to (20.center);
		\draw (32.center) to (30.center);
		\draw (32.center) to (33.center);
		\draw (33.center) to (31.center);
		\draw (31.center) to (30.center);
		\draw (36.center) to (39.center);
		\draw (39.center) to (38.center);
		\draw (38.center) to (37.center);
		\draw (37.center) to (36.center);
		\draw (44.center) to (42.center);
		\draw (44.center) to (45.center);
		\draw (45.center) to (43.center);
		\draw (43.center) to (42.center);
		\draw (54.center) to (57.center);
		\draw (57.center) to (56.center);
		\draw (56.center) to (55.center);
		\draw (55.center) to (54.center);
		\draw (62.center) to (60.center);
		\draw (62.center) to (63.center);
		\draw (63.center) to (61.center);
		\draw (61.center) to (60.center);
		\draw (71.center) to (69.center);
		\draw (71.center) to (72.center);
		\draw (72.center) to (70.center);
		\draw (70.center) to (69.center);
		\draw (77.center) to (75.center);
		\draw (77.center) to (78.center);
		\draw (78.center) to (76.center);
		\draw (76.center) to (75.center);
		\draw (83.center) to (81.center);
		\draw (83.center) to (84.center);
		\draw (84.center) to (82.center);
		\draw (82.center) to (81.center);
		\draw (89.center) to (87.center);
		\draw (89.center) to (90.center);
		\draw (90.center) to (88.center);
		\draw (88.center) to (87.center);
		\draw (88.center) to (95.center);
		\draw (94.center) to (95.center);
		\draw (94.center) to (90.center);
	\end{pgfonlayer}
\end{tikzpicture}

%% file: sec6-supplementary.tex
\section{Expressiveness} \label{sec:discussion}
\nosectionappendix

\begin{toappendix}
   \subsection*{Proofs for \Cref{sec:discussion}}
\end{toappendix}

The set automaton formalism is robust upto various acceptance conditions. 
\begin{lemmarep}
\label{lem:acceptance}
In a set automaton, the following acceptance conditions are equivalent:
Let $(q,\bfX)$ be the configuration reached at the end of the run.
\begin{enumerate}
\item $(F \subseteq Q, C \subseteq \bftwo^Y)$: $q \in F$
and the characteristic vector of each data value present in $\bfX$ is in $C$. 
\item $(F \subseteq Q, F_\ell \subseteq Q)$: $q \in F$
and at every class maximal position the state reached is in $F_\ell$.
\item $(F \subseteq Q, C \subseteq \bftwo^Y)$: $q \in F$
and at every class maximal position with data value $d$, the characteristic vector of $d$ is in $C$.
\item $(F \subseteq Q, S \subseteq Y)$: $q \in F$	
and the subset $S$ of the sets are empty.
\end{enumerate}
\end{lemmarep}
\begin{proof}
(1 $\Rightarrow$ 2): Let $\auto_1 = (Q,Y,A,\Delta,I,F,C)$, with acceptance condition (1).
The idea is to verify that at each class maximal position $i$, the characteristic vector $c_i$ of the corresponding data value is such that on applying the sequence of update relations $\rho_{i+1}\cdots \rho_n$ in the transitions of the rest of the run (since there are no more local updates affecting $d$), results in a characteristic vector present in the accepting set of configurations $C$.

We construct a finite state automaton $\calB$ to verify this and using this automaton we construct a set automaton with the acceptance condition (2).

Let $Q' = Q \times  \bfR_Y \times \bftwo^Y \times \bftwo$. This serves as the alphabet for the automaton $\calB$, tracking the current state, update relation, and the characteristic vector along the run of the set automaton $\calA_1$ respectively. 
The final element of the tuples in $Q'$ is used as a flag to track whether the characteristic vector, on performing the rest of the update relations in the run of $\calA_1$, belongs to $C$.
Consider the language 
\begin{align*}
    L = \{(q_1,\rho_1,&c_1, b_1) \cdots (q_n, \rho_n, c_n, b_n) \in Q'^* \mid b_n \! = \!1 \Leftrightarrow c_n \!\in\! C \\&\text{ and } b_i\! =\! 1 \Leftrightarrow (\rho_{i+1} \cdots \rho_{n})(c_i) \in C \text{ for $1\leq i\!<\! n$}\}
\end{align*}
The above language consists of all runs from a state $q_1$ to $q_n$ of the set automaton such that the characteristic vector at the end of the run is in $C$ as well as at every position $i$ along the run the characteristic vector is maintained such that the remaining set updates take it to the accepting set of characteristic vectors $C$.
The above language is regular and is accepted by a deterministic automaton that reads the input from right-to-left.
Let $\calB$ be the finite state automaton recognising this language over the alphabet $Q'$ with a finite set of states $Q_B$ and final states $F_B$.

We construct a set automaton $\auto_2 = (Q'\times Q_B,Y,A,\Delta',I,F', F_\ell)$ with the acceptance condition (2). We define the transitions $\Delta'$ in $\auto_2$ in the following way: For each transition $(q, a,\chi,\rho,\alpha,\beta,q') \in \Delta$, in $\auto_1$ we have the transitions 
$(p,a,\chi,\rho',\alpha,\beta,p')$ where $p = ((q,\rho,c,b),r) \in Q'\times Q_B$ ($\rho$, $c$, and $b$ are not relevant) and $p' = ((q', \rho', c', b'),s)$ is such that $b'\in \{0,1\}$, $c' \in \bftwo^Y$ is obtained by applying the set update $\rho'$ to $\chi$ and then modifying the resulting vector by the local update operations (i.e., $c'=\rho'(\chi) + \alpha -\beta$) and $s$ is a state reached after reading the tuple $(q', \rho', c', b')$ from the state $r$ in $\calB$. 
We let $F' = F \times \bfR_Y \times C \times \{1\}  \times F_B$, and $F_\ell = \{ (q,\rho,c,b) \mid b = 1\}\times F_B$.

(2 $\Rightarrow$ 3): We construct a new set automaton that has $|Q|$ (say $n$) additional sets $D_X=\{X_{q_1},\ldots, X_{q_n}\}$, along with the sets of the given automaton. The automaton preserves the property that at any point during the run a data value is present in at most one of the sets in $D_X$. After each transition with a target state $q$, the current data is removed from all the other sets in $D_X$ and is added to the set $X_q$. To finish the construction, we take $C$ to be the set of all vectors that has a $1$ in at least one of the components corresponding to the sets $\{ X_{q_i} \mid i \in [n], q_i \in F_\ell\}$. This ensures that on each class maximal position the state reached is in $F_\ell$.

\newcommand{\sinkt}{\mathtt{sink}}
(3 $\Rightarrow$ 4): For this translation, we construct a new set automaton that has two additional sets along with the sets of the given automaton.
Let's call the new sets a $\sinkt$ set and a set $X$. The sets $X$ and $\sinkt$ are always updated with the identity transformation.
When a data value appears for the first time, it is added to the set $X$.

For each class, the new automaton guesses the last occurrence of the data value, that is, the class maximal position of the data value.
At the class maximal position in the new automaton, the characteristic vector of the data present should be in $C$ (otherwise the automaton halts erroneously). 
Also at the class maximal position, in the corresponding transition of the new automaton, the data value is removed from every set in which the data value exists by local updates and is added to the $\sinkt$ set.
Thus, following the class maximal position, the data value should only be in the $\sinkt$ set.
If the new automaton guesses the class maximal position wrongly, this can be easily checked: When a data value is read that is in the $\sinkt$ set, it will lead to a non-accepting state.
In case a data value appears only once, it is directly added to the $\sinkt$ set rather than adding to the set $X$. At the end of the run the new automaton verifies that the set $X$ is empty, i.e., for each class a class maximal position has been guessed and verified.

(4 $\Rightarrow$ 1): To convert a set automaton with acceptance condition (4) to a set automaton with acceptance condition (1), we take $C \subseteq \bftwo^Y$ to be the set of all vectors whose $i^\text{th}$  components are 0 for each $i\in S$. 
\end{proof}

\subsection{Comparison with Other Models}

A natural restriction of set automata captures the class of languages recognised by Data automata.  

\begin{propositionrep}\label{prop:isa=cma}
For each data automaton there is an equivalent set automaton where each set is stable and thus having the update monoid be trivial. 
\end{propositionrep}
\begin{proof}
Data automata and class memory automata are known to be expressively equivalent \cite{BS07} with an effective translation in both directions. 
Therefore it suffices to consider class memory automata in place of data automata. 

First we recall class memory automata introduced in \cite{BS07}. 
Formally, a class memory automaton (CMA) $\calA = (Q, \Sigma, \Delta, q_0, F_\ell, F_g)$ where $Q$ is the finite set of states, $q_0$ is the initial state, $F_\ell \subseteq F_g \subseteq Q$ are the local and global accepting states respectively. 

The CMA has a hash function $h : D \to Q\cup\{\bot\}$ such that $h(d)\neq\bot$ for only finitely many $d$, which are precisely those that appear in the run of an input data word.
The hash function $h$ holds the state the automaton moved to when the data value $d$ was read the last time. 
Each transition $\delta$ of the form $(p, \ell, s, q)$ on the input $(\ell, d)$, if the current hash value of $d$ is $s$, takes the state from $p$ to $q$ and updates the hash value of $d$ to $q$. 
Initially, the hash value of every data value $d$ is set to $\bot$.
In an accepting run, the hash value for every data value $d$ at the end of the run is in $F_\ell \cup \{\bot\}$ and the state is in $F_g$.

First, we show that for every CMA, there is an equivalent identity set automaton. 
Consider a CMA $\calA = (Q, \Sigma, \Delta, q_0, F_\ell, F_g)$. 
Let $|Q| = k$ be the number of states of the CMA $\calA$.
Let $\calB$ be the required identity set automaton consisting of $\log_2 (k)$ number of sets with the same state space $Q$ as the CMA $\calA$. The set automaton $\calB$ when simulating the CMA $\calA$ remembers the local state of $\calA$ using the $\log_2(k)$ sets.
Each local state of the CMA is denoted using a binary encoding of length $\log_2(k)$. For instance, if $k=4$, the local states $q_0, q_1, q_2, q_3$ shall be represented as $00, 01, 10$, and $11$, respectively. 
Let $d$ be a data value present in the data word $w$. 
The binary encoding of the local state of $d$ in $\calA$ is given by its characteristic vector of membership in the set automaton $\calB$.

A transition of the form $(p, a, s, q)$ on the input $(a, d)$ is simulated by the set automaton $\calB$ as follows. 
Checking whether the current data value $d$ is in the local state $s$ is done by enabling the transition only if the characteristic vector $\chi$ for  membership check denotes the binary encoding of the state $s$. 
The local state of $d$ is updated to $q$ by updating the sets so that the new characteristic vector $\chi'$ for membership of $d$ denotes the binary encoding of the state $q$. 
Note that these do not require any global updates as they can be accomplished using the local updates $\alpha$ and $\beta$ to add and delete the current data value at the necessary sets. Thus, each set in the automaton $\calB$ is stable. 
This set automaton, using the acceptance condition (2) given in \Cref{lem:acceptance}, simulates the local acceptance condition of the CMA $\calA$ through the set of accepting characteristic vectors $C$. The accepting set of states is $F=F_g\subseteq Q$ in the set automaton $\calB$.

Thus, we can simulate accepting runs of the class memory automaton $\calA$ using an equivalent set automaton $\calB$ where all the sets are stable.
%
\end{proof}



We have the following result. 
\begin{propositionrep}\label{prop:sa=ca}
Set automata are expressively equivalent to class automata. 
\end{propositionrep}
\begin{proof}

We recall the definition of a class automaton \cite{BL10}. 

A \emph{class automaton} over the alphabet $\Sigma $ is a tuple $\calA=(\calB, \calC)$ where $\calB$ is a nondeterministic letter-to-letter transducer with input alphabet $\Sigma$ and output alphabet $\Gamma$, and $\calC$ is a nondeterministic finite automaton over $\Gamma\times \{0,1\}$.
A run of a class automaton on a data word $w=(a_1,d_1)\cdots (a_n, d_n)$ is \emph{accepting} if there is an accepting run of $\calB$ on the string projection $\mathit{str}(w)$, yielding the output $w'= b_1\cdots b_n\in \Gamma^*$, such that for each class $X$ of $w$, the class string $w'_X\in (\Gamma\times\{0,1\})^*$  obtained by labelling $w'$ by $1$ at positions in $X$ and $0$ everywhere else, is accepted by the automaton $\calC$. 
Here, the word $w'_X$ is called the \emph{class string} of $X$.

First, we show that for every set automaton, there is an equivalent class automaton.
Consider a set automaton $\calS = (Q, Y, A, \Delta, I, F, F_l)$ with the acceptance condition (2) from \Cref{lem:acceptance} recognising the language $L$.
We sketch the construction of a class automaton $\calA=(\calB, \calC)$ recognising $L$. 
The set automaton $\calS$ has the following regular properties for a run:
(1) the first position is in an initial state, (2) at the final position, there is a transition to an accepting state, and (3) for each pair of consecutive positions $x$ and $y$, there is a transition from the state at position $x$ to the state at position $y$. 

Such regular properties in the run of a set automaton are verified by the transducer $\calB$.
Recall that $\calB$ is a nondeterministic letter-to-letter transducer with input alphabet $\Sigma$ and output alphabet $\Gamma$.
The transducer $\calB$ does not have access to the data values in the input word of the set automaton $\calS$ and reads only the string projection.
Thus $\calB$ nondeterministically guesses the characteristic vector of membership of the current data value at each position and simulates the run of $\calS$. 
More precisely, on simulating a transition $(p, \ell, \chi, \rho, \alpha, \beta, q)$ of $\calS$, the transducer outputs $(\chi, \rho, \alpha, \beta)$. 
Thus, the output alphabet of $\calB$ is $\Gamma = \bftwo^Y \times \bfR_Y \times \bftwo^Y \times \bftwo^Y$.
The state space of $\calB$ is the same as that of $\calS$ and hence is $Q$.

The idea is that for the transducer $\calB$ to ensure that the regular constraints are satisfied while the NFA $\calC$ ensures that the class constraints are met.

The NFA $\calC$, on each class string, verifies that (1) the class-minimal position has a transition on the zero vector, (2) at the class-maximal position, there is a transition to a local final state, 
and (3) the transitions on successive positions in a class are on characteristic vectors which are consistent as per the update relations in the interval between the two positions.
That is, for any two positions $x$ and $y$ such that $y=x\moo1$, the vector $\chi_y$ at position $y$ must satisfy $\chi_y = M(\rho_{x+1} \rho_{x+2}\cdots \rho_{y-1})^T(\chi'_x)$ where $\chi'_x=M(\rho_x)^T \chi+\alpha_x-\beta_x$. 
Here, $(\chi_x, \rho_x, \alpha_x, \beta_x)$ is the label on the position $x$. 
The above properties can be recognised by a NFA and hence we can see that the class automaton $\calC$ accepts $L$. 

Now, we show that for every class automaton, there is an equivalent set automaton.
Consider a class automaton $\calA=(\calB, \calC)$ recognising $L$, with $\Gamma$ as the output alphabet of the transducer $\calB$. 
The states of the constructed set automaton $\calS$ keep track of how the current state of the transducer $\calB$ evolves over a run.
Thus when the transducer reaches an accepting state after reading the string projection of the input word, the same is reflected in the run of the set automaton $\calS$.
Since the number of classes in an input data word can be unbounded, the runs of $\calC$ on the class strings cannot be simulated in the finite state space of $\calS$.
The sets of $\calS$ are used to track this information.
Let $k$ be the number of states of the NFA $\calC$.
The set automaton $\calS$ uses $k$ sets $X_1, \dots, X_k$ each corresponding to a state of $\calC$.

Suppose on a position with $(\ell, d)$ in the input word, the transducer reads $\ell\in \Sigma$ and outputs $a\in \Gamma$. 
In the class strings that the NFA $\calC$ reads, the corresponding position has can thus have either $(a,0)$ or $(a,1)$. 
Thus there are bounded number of runs of the NFA $\calC$ that $\calS$ simulates when reading a letter-data pair.
The set automaton, at each position, makes a global update followed by a local update to simulate the run of the class automaton corresponding to all possible class strings.
At any position in the run of the set automaton $\calS$, a data word $d$ being present in a set $X$ means that the run of the NFA $\calC$ on the class string of $d$ until that position is at the state corresponding to the set $X$. 
Consider the case where the NFA reads $(a,0)$ and makes a transition from state $p$ to $p'$.
This is simulated in $\calS$ by moving the content of set $X_p$ to set $X_{p'}$ by a global update.
On the other hand, $\calC$ reading $(a,1)$ and making a transition from $p$ to $p''$ is simulated by a local update where only the current data value is moved from the set $X_{p'}$ to set $X_{p''}$.
This is because the transition on $(a,1)$ occurs only in the class string corresponding to the current data value and hence it can be simulated using local updates for the data value. 

Further, the set automaton $\calS$ also needs to remember in its state space the state $q$ that the NFA $\calC$ reaches on reading the class-minimal position on the class string so that the first time a data value $d$ is encountered, it is added to the set $X_q$ using local updates.
Note that starting from the initial state of $\calC$, the class string that takes $\calC$ to state $q$ has all $0$'s when projected to the data values. 
Let $Q_B, Q_C$, and $Q_S$ be the state space of the transducer $\calB$, NFA $\calC$, and set automata $\calS$ respectively.
We have that $Q_S = Q_B \times Q_C$.

The set automaton $\calS$ given by the ideas sketched above is the required set automaton recognising $L$. 
\end{proof}




\subsection{Logical Characterisation of Set Automata}

First, we recall a logical characterisation of class automata given in \cite{BL10}. 

\begin{proposition}[Boja\'nczyk and Lasota, \cite{BL10}]\label{prop:ca=rmso}
    Class automata are expressively equivalent to a restricted fragment of $\mathrm{MSO}[\Sigma, <, \sim]$ consisting of formulas of the form 
    \begin{equation}\label{eq:rmso}
        \exists X_1\cdots\exists X_n\,\forall\, Y\, \mathsf{class}(Y) \rightarrow \phi(X_1,\ldots,X_n,Y)
    \end{equation}
    where $\mathsf{class}(Y):=\exists y\forall x\, (Y(x)\leftrightarrow y\sim x)$ and $\phi$ is in $\mathrm{MSO}[\Sigma, <]$.
\end{proposition}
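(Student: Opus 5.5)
The plan is to prove both inclusions directly, using the B\"uchi--Elgot--Trakhtenbrot theorem that $\mathrm{MSO}[\Sigma,<]$ on finite words defines exactly the regular languages. The guiding observation is that $\mathsf{class}(Y)$ is the only place where $\sim$ occurs in a formula of the form (\ref{eq:rmso}). Once $Y$ is fixed to be a class, $\phi$ is an ordinary MSO formula over the string projection enriched with the unary predicates $X_1,\ldots,X_n,Y$. So the statement reduces to matching two things: the existential monadic predicates with the output of the transducer $\calB$, and the universally quantified class $Y$ with the class strings read by $\calC$.

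\emph{From class automata to formulas.} Let $\calA=(\calB,\calC)$ with output alphabet $\Gamma$ and transducer state set $Q_B$. First I introduce existential monadic predicates $X_{b}$ for $b\in\Gamma$ and $X_{q}$ for $q\in Q_B$. They encode, position by position, an output letter of $\calB$ and the state of $\calB$ after reading that position. These predicates are exactly the $X_1,\ldots,X_n$ of (\ref{eq:rmso}). Then I write an $\mathrm{MSO}[\Sigma,<]$ (indeed $\FO^2[\Sigma,<,+1]$-style) formula $\psi(\barX)$ saying three things:
\begin{itemize}
\item each position carries exactly one output letter and one state;
\item consecutive positions are connected by transitions of $\calB$ reading the letter from $\Sigma$ and emitting the chosen output letter;
\item the first and last positions are compatible with initial and final states.
\end{itemize}
Next, the NFA $\calC$ over $\Gamma\times\{0,1\}$ is converted to an MSO sentence. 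I reinterpret the letter $(b,1)$ as ``$X_b(x)\wedge Y(x)$'' and $(b,0)$ as ``$X_b(x)\wedge\neg Y(x)$''. This gives a formula $\theta(\barX,Y)$ which holds exactly when the class string determined by $Y$ is accepted by $\calC$. Finally, I set $\phi:=\psi\wedge\theta$. Because $\psi$ does not mention $Y$, the conjunction can be moved under $\forall Y$. The resulting formula $\exists\barX\,\forall Y\,(\mathsf{class}(Y)\to\phi)$ holds iff there is an accepting run of $\calB$ whose output makes every class string accepted by $\calC$, which is the acceptance condition of $\calA$.

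\emph{From formulas to class automata.} Given a formula of the form (\ref{eq:rmso}), I regard $\phi(X_1,\ldots,X_n,Y)$ as an MSO sentence over the alphabet $\Sigma\times\bftwo^n\times\{0,1\}$. By B\"uchi's theorem there is an NFA $\calC$ over $\Gamma\times\{0,1\}$ recognising its models, where $\Gamma=\Sigma\times\bftwo^n$. I let $\calB$ be the one-state letter-to-letter transducer that, on input $a$, nondeterministically outputs some $(a,\bars)$ with $\bars\in\bftwo^n$; it accepts every run. Guessing an output of $\calB$ is then exactly guessing a valuation of $\barX$. For each class $K$ of $w$, the class string $w'_K$ is precisely the encoding of the structure $(\str(w),\barX,Y\!=\!K)$. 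Hence $\calC$ accepts all class strings iff $\phi(\barX,K)$ holds for every class $K$, that is, iff $\forall Y\,(\mathsf{class}(Y)\to\phi)$ holds. The sets $Y$ satisfying $\mathsf{class}(Y)$ are exactly the classes, and on a nonempty word these are nonempty, so the two quantifications range over the same objects.

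The main obstacle is the bookkeeping in the first direction rather than either inclusion itself. There I must check that the single global witness $\barX$ is shared by all classes, so that it correctly corresponds to one run of $\calB$ that is reused for every class string, and that $\calB$'s acceptance is not accidentally checked per class. I would handle this by keeping $\psi$ outside the scope of $Y$ conceptually. The edge cases of the empty data word and the empty class set are treated separately, by adjusting $\calC$ and $\calB$ on $\varepsilon$.
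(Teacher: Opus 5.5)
Your proof is correct. The paper gives no proof of this proposition: it imports it from \cite{BL10} and uses it only as a black box to close the loop between set automata and $\emsotwo$ with regular predicates. Your argument is the standard one. The existential monadic predicates correspond to an output (and run) of the transducer $\calB$, the universally quantified class $Y$ corresponds to a class string read by $\calC$, and the B\"uchi--Elgot--Trakhtenbrot theorem is applied on both sides. It follows the same pattern as the paper's proof of \Cref{prop:sa=ca}, where $\calB$ guesses per-position information and $\calC$ checks each class string.

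One small correction concerns your remark on the empty word. In the automaton-to-formula direction you cannot ``adjust $\calC$ and $\calB$'', because the automaton is given. Moreover, every formula of the shape (\ref{eq:rmso}) is vacuously true on the empty data word, since $\mathsf{class}(Y)$ is then unsatisfiable. So a class automaton whose transducer rejects $\varepsilon$ has no exact equivalent in the fragment. The statement therefore has to be read over nonempty data words, or modulo $\varepsilon$. This is the same nonemptiness assumption that justifies moving $\psi$ under $\forall Y$. In the formula-to-automaton direction nothing needs adjusting: your one-state $\calB$ already accepts $\varepsilon$, which matches the vacuous truth of the formula.
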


It is easy to see that the data value equality test, the class successor relation, and regular predicates are captured in the restricted fragment of MSO. 
\begin{proposition}\label{prop:emso-rmso}
    For each formula $\varphi\in\emsotwo[\Sigma, <, +1, \sim, \moo1, {\calF}]$, there is a formula $\psi$ belonging to the restricted fragment of $\mathrm{MSO}[\Sigma, <,\sim]$ consisting of formulas of the form given in \Cref{eq:rmso} such that $L(\varphi)=L(\psi)$.
\end{proposition}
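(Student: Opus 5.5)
The plan is to translate an arbitrary $\varphi\in\emsotwo[\Sigma, <, +1, \sim, \moo1, {\calF}]$ compositionally into the form of \Cref{eq:rmso}. Write $\varphi=\exists X_1\cdots\exists X_n\,\varphi_0$ with $\varphi_0\in\FO^2[\Sigma,\barX,<,+1,\sim,\moo1,\calF]$. The difficulty is that $\varphi_0$ may use $\sim$ and $\moo1$ freely, while in \Cref{eq:rmso} the kernel $\phi$ may not mention $\sim$ and may only refer to a single class $Y$ at a time. I therefore pass through an intermediate normal form instead of translating atom by atom.

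First I put $\varphi_0$ into Scott normal form, introducing fresh existentially quantified unary predicates as in \Cref{subsec:guardedlogic-to-automata}. This gives a conjunction of formulas $\forall x\forall y\,\chi$ and $\forall x\exists y\,\chi_i$ with $\chi,\chi_i$ quantifier-free. I then split each binary part by the equivalence types $E$. Every atom $L(x,y)$ with $L\in\calF$ is expressible in $\mathrm{MSO}[\Sigma,<]$, since the factor strictly between $x$ and $y$ lies in the regular language $L$. The predicates $<$, $+1$ and the labels are MSO-definable as well, so only $\sim$ and $\moo1$ need attention.

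Each resulting conjunct is of one of two kinds. The first kind is a formula of $\mathrm{MSO}[\Sigma,\barX,<]$, relevant when the equivalence type is $x\not\sim y$ or $x=y$. Such a conjunct is placed directly into $\phi$, since it is independent of $Y$. The one subtle case is $x\not\sim y$ in a $\forall\exists$ formula. There I rely on the standard trick of \cite{BMSSD11}: ``$\alpha(x)$ has a $\beta$-witness outside its class'' is equivalent, under finitely many guessed unary predicates, to the conjunction of a class-local condition and a regular condition. Concretely, the guessed predicates mark, for instance, the first and last two positions with label $\beta$ lying in distinct classes. This relies on the fact that at most two distinct classes are needed as candidate witnesses.

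The second kind is class-internal, i.e. relevant when $x\sim y$, $x\moo1=y$, or $x\farsim y$. Such a conjunct is relativised to $Y$: quantifiers range over $Y$, and $\moo1$ becomes ``$x,y\in Y$, $x<y$, with no element of $Y$ strictly in between''. Under the assumption $Y(x)\wedge Y(y)$, the guarded predicate $\tildeL(x,y)$ becomes just $L(x,y)$. Both are MSO$[\Sigma,<]$ formulas in $Y$. The resulting $\phi(\barX,Y)$ is the conjunction of all these pieces, with the global pieces independent of $Y$. Since every position lies in some class, the universal quantification over classes $Y$ covers exactly the class-internal requirements. Hence $\psi:=\exists\barX\exists\barX'\,\forall Y\,\mathsf{class}(Y)\rightarrow\phi$ is equivalent to $\varphi$, where $\barX'$ denotes the predicates added along the way.

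The main obstacle is the cross-class $\forall\exists$ case. It must be reduced soundly to class-local plus regular constraints using guessed monadic predicates, and I would check carefully that both directions of the equivalence hold. A shortcut would be to invoke \cite{BMSSD11} to obtain a data automaton for the regular-predicate-free part, then \Cref{prop:isa=cma} and \Cref{prop:sa=ca} to obtain a class automaton, and finally \Cref{prop:ca=rmso}. The regular conjuncts could then be conjoined, using the fact that the fragment of \Cref{eq:rmso} is closed under conjunction by merging the existential blocks.
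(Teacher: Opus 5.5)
\paragraph{Overall.} Your plan is the right one: pass to Scott normal form, then relativise the outer universal variable to the class variable $Y$. This is what the paper's one-line justification leaves implicit, and your treatment of the class-internal types is fine. The gap is in the equivalence type $x\not\sim y$.

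\paragraph{The $\forall\forall$ case.} You claim that $\forall x\forall y\,(x\not\sim y\rightarrow\chi_e(x,y))$ is a $Y$-independent formula of $\mathrm{MSO}[\Sigma,\barX,<]$. It is not. Take $\chi_e=\neg(a(x)\wedge a(y))$; the conjunct then says that all $a$-positions carry one data value. The words $(a,d)(a,d)$ and $(a,d)(a,e)$ have the same string projection, so no $Y$-independent conjunct can separate them, even with guessed monadic predicates.

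\paragraph{The $\forall\exists$ case.} You invoke the witness-guessing trick of \cite{BMSSD11}, but that trick is tailored to order constraints. Once $\chi_i$ contains an atom $L(x,y)$, which positions $y$ are admissible witnesses for $x$ depends on the factor between them. For instance, if $L$ is the words of even length, it depends on the relative parity of $x$ and $y$. So ``the first and last two $\beta$'s in distinct classes'' no longer carries enough information. You leave this step, which you yourself call the main obstacle, unverified.

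\paragraph{The missing observation.} The kernel $\phi$ of the fragment is an arbitrary $\mathrm{MSO}[\Sigma,<]$ formula in the free variable $Y$. It may therefore range over all positions and use $\neg Y$. For $x$ in the class $Y$, $x\not\sim y$ is just $\neg Y(y)$. So $\forall x\forall y\,(x\not\sim y\rightarrow\chi_e(x,y))$ becomes $\forall x\forall y\,(Y(x)\wedge\neg Y(y)\rightarrow\chi_e(x,y))$. Likewise $\forall x\exists y\,(\alpha(x)\rightarrow x\not\sim y\wedge\theta(x,y))$ becomes $\forall x\,(Y(x)\wedge\alpha(x)\rightarrow\exists y\,(\neg Y(y)\wedge\theta(x,y)))$. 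There is no cross-class obstacle at all.

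You do not even need to split by equivalence types. After Scott normal form:
\begin{enumerate}
\item relativise each outer $\forall x$ to $Y$;
\item replace $x\sim y$ by $Y(y)$;
\item replace each class-successor atom by ``$Y(y)$, $y$ after (resp.\ before) $x$, and no $Y$-position strictly between them'';
\item write each $L(x,y)$ in $\mathrm{MSO}[\Sigma,\barX,<]$.
\end{enumerate}
Every position lies in exactly one class, so quantifying over all classes $Y$ recovers the original formula. The existential monadic quantifiers then go in front.

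\paragraph{The shortcut.} Your shortcut does not work as stated. Conjuncts containing regular predicates generally contain $\sim$ as well; a guarded predicate $\tildeL(x,y)$ is $x\sim y\wedge L(x,y)$. So $\varphi$ does not split into a regular-predicate-free part handled by a data automaton plus purely regular conjuncts.
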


By Propositions \ref{prop:aut-to-logic}, \ref{prop:sa=ca}, \ref{prop:ca=rmso}, and \ref{prop:emso-rmso}, we obtain the below result. 
\begin{theorem}
    Set automata and $\emsotwo[\Sigma, <, +1, \sim, \moo1, {\calF}]$ are expressively equivalent. 
\end{theorem}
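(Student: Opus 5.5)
The statement has two directions. I would obtain each one by composing results already stated in the paper, and put the real work into checking \Cref{prop:emso-rmso}, which the paper states without proof.

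\emph{From set automata to logic.} Given a set automaton $\calA$ with update monoid $M$, \Cref{prop:aut-to-logic} yields $\phi_\calA \in \emsotwo[\Sigma, <, +1, \sim, \moo1, \widetilde{\calF}_M]$ with $L(\calA)=L(\phi_\calA)$. By definition, each guarded predicate $\tildeL(x,y)$ is the formula $x\sim y \wedge L(x,y)$. Replacing every guarded atom by this conjunction therefore gives an equivalent formula in $\emsotwo[\Sigma, <, +1, \sim, \moo1, \calF]$, where $\calF$ is the family of unguarded predicates recognised by the same morphism. Nothing further is needed here.

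\emph{From logic to set automata.} Given $\varphi \in \emsotwo[\Sigma, <, +1, \sim, \moo1, \calF]$, the chain is as follows.
\begin{enumerate}
\item \Cref{prop:emso-rmso} gives a formula $\psi$ of the restricted shape \eqref{eq:rmso}.
\item \Cref{prop:ca=rmso} turns $\psi$ into an equivalent class automaton.
\item \Cref{prop:sa=ca} turns that class automaton into an equivalent set automaton.
\end{enumerate}
Composing the three steps gives the second inclusion, which completes the equivalence.

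\emph{Main obstacle: justifying \Cref{prop:emso-rmso}.} The restricted fragment lets $\phi$ inspect only one class $Y$ at a time, whereas $\fotwo$ can relate two positions in different classes. My plan is:
\begin{enumerate}
\item Put the first-order kernel into Scott normal form and split it into conjuncts of the shapes $(\forall\forall)$ and $(\forall\exists)$, exactly as in \Cref{subsec:guardedlogic-to-automata}. Fresh unary predicates introduced this way are absorbed into the existential prefix $\exists X_1\cdots\exists X_n$.
\item Conjuncts whose equivalence-type is $x\not\sim y$ contain no guarded predicate that can hold, and they mention no regular predicate across classes except unguarded $L(x,y)$. Unguarded $L(x,y)$ is a pure $\mathrm{MSO}[\Sigma,<]$ property. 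I would handle these conjuncts as in \cite{BMSSD11}: guess further unary labels so that the cross-class constraints become a regular property of the labelled string together with a per-class property. Any regular property of the string can be placed inside $\phi$, since it does not depend on $Y$.
\item Conjuncts with $x\sim y$, in any of the types $x=y$, $x\moo1=y$, $y\moo1=x$, $x\farsim y$, only relate positions of the same class. They become $\mathrm{MSO}[\Sigma,<]$ conditions on the word marked by $Y$: $x\sim y$ is $Y(x)\wedge Y(y)$, the class successor is ``next $Y$-position'', and $L(x,y)$ is MSO-definable.
\end{enumerate}
Since the restricted fragment is closed under conjunction (merge the existential prefixes and take the conjunction of the $\phi$'s), the pieces combine into a single formula of shape \eqref{eq:rmso}.

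The delicate point is the $x\not\sim y$ case of $(\forall\exists)$ conjuncts. A witness in a different class cannot be seen from within a single class. I would first try to avoid this altogether: translate the purely data-automaton-expressible part with \Cref{prop:isa=cma}, and intersect via \Cref{lem:qnsa-closure} or the closure of class automata under intersection. Only if that does not go through would I encode the witnesses directly with guessed unary predicates.
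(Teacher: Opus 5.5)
Your top-level argument is exactly the paper's, and that part is fine. One direction is \Cref{prop:aut-to-logic} with each $\tildeL(x,y)$ unfolded to $x\sim y\wedge L(x,y)$. The other is the chain \Cref{prop:emso-rmso}, \Cref{prop:ca=rmso}, \Cref{prop:sa=ca}. The problem is confined to your sketch of \Cref{prop:emso-rmso}, which the paper asserts without proof. Your sketch rests on a misreading of the restricted fragment. In $\exists\overline{X}\,\forall Y\,(\mathsf{class}(Y)\rightarrow\phi)$, the formula $\phi$ is an MSO formula over the \emph{whole} string, with $Y$ a free set variable marking one class. It may quantify freely over positions outside $Y$. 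So your ``delicate point'' does not exist: a witness in a different class is simply a position $y$ with $\neg Y(y)$.

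After Scott normal form, every conjunct has the shape $\forall x\,Qy\,\chi(x,y)$ with $\chi$ quantifier-free, and the fresh unary predicates go into $\exists\overline{X}$. This step is genuinely needed, because it guarantees that every binary atom mentions the outer variable $x$; atoms relating $y$ to itself are constants. Send each conjunct to $\forall Y\,(\mathsf{class}(Y)\rightarrow\forall x\,(Y(x)\rightarrow Qy\,\chi^Y))$. Here $\chi^Y$ replaces $x\sim y$ by $Y(y)$, and $x\not\sim y$ by $\neg Y(y)$. The class-successor atom becomes ``$Y(y)$, $x<y$, and no $Y$-position strictly between''. Each $L(x,y)$ becomes its MSO definition on the interval. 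This is sound because every position lies in exactly one class. Conjuncts then merge under a single $\forall Y$.

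So the $x\not\sim y$ cases, for both $(\forall\forall)$ and $(\forall\exists)$ conjuncts, are no harder than the others; your step 3 extends to them unchanged. Your proposed detour is not just unnecessary but unjustified as stated. A conjunct such as $\forall x\forall y\,(\alpha(x)\wedge\beta(y)\wedge x<y\wedge x\not\sim y\rightarrow L(x,y))$ is not a priori a data-automaton property. The data normal form of \cite{BMSSD11} has no regular predicates, so neither it nor \Cref{prop:isa=cma} gives you anything for this conjunct. Likewise, \Cref{lem:qnsa-closure} only covers intersections of quasi-normal set automata with data automata.
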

Now by \Cref{prop:aut-to-logic}, we obtain the below corollary. 
\begin{corollary}\label{cor:unguarded-guarded}
    For each formula in $\EMSO^2[\Sigma,<,+1,\sim,\moo1,\calF_M]$, there is an equivalent formula in $\EMSO^2[\Sigma,<,+1,\sim,\moo1,\widetilde{\calF}_{M'}]$.
\end{corollary}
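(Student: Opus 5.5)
The plan is to chain two earlier results, using \Cref{prop:aut-to-logic} as the final step. Fix a formula $\varphi\in\EMSO^2[\Sigma,<,+1,\sim,\moo1,\calF_M]$ with unguarded regular predicates recognised by a morphism $h:\Sigma^*\to M$.

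\emph{Step 1: translate $\varphi$ to a set automaton.} Apply \Cref{prop:emso-rmso} to obtain an equivalent formula $\psi$ of the restricted MSO form of \Cref{eq:rmso}. By \Cref{prop:ca=rmso}, $\psi$ is equivalent to a class automaton. By \Cref{prop:sa=ca}, that class automaton is in turn equivalent to a set automaton $\calA$ with $L(\calA)=L(\varphi)$. All three translations are effective.

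\emph{Step 2: translate $\calA$ back to logic with guarded predicates.} Let $M'=M_\calA$ be the update monoid of $\calA$. This is a finite submonoid of $\bfR_Y$ and depends only on $\calA$. \Cref{prop:aut-to-logic} gives $\phi_\calA\in\EMSO^2[\Sigma,<,+1,\sim,\moo1,\widetilde{\calF}_{M'}]$ with $L(\phi_\calA)=L(\calA)=L(\varphi)$. Inspecting that proof, all predicates used are guarded predicates $\widetilde L(x,y)$. Each $L$ is a preimage of a subset of $M'$ under the single morphism $\sigma\circ\pi^*\circ\pi_\Delta^*$ from the extended alphabet into $M'$. So the whole family of predicates is recognised by one morphism, as the definition of $\widetilde{\calF}_{M'}$ requires.

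\emph{Main obstacle: the alphabet of the predicates.} The guarded predicates produced by \Cref{prop:aut-to-logic} are over $\Sigma\times\bftwo^{\barZ}$, which includes the existentially quantified monadic predicates, not over $\Sigma$ alone. I would argue this is the intended reading. The paper's own convention in \Cref{subsec:guardedlogic-to-automata} lets the recognising morphism act on $\Sigma\times 2^{\barX}$ for the quantified monadic variables, so $\phi_\calA$ is a legitimate formula of the target logic.

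A second subtlety is whether unit-reflection or the precise choice of $M'$ matters. It does not: $M'$ is only required to be some finite monoid, and the statement does not ask $M'$ to be related to $M$, for instance a linear band. Indeed, $M'$ will typically be much larger than $M$.
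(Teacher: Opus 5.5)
Your proposal is correct and follows the paper's own argument: the paper likewise chains \Cref{prop:emso-rmso}, \Cref{prop:ca=rmso} and \Cref{prop:sa=ca} to obtain an equivalent set automaton, and then applies \Cref{prop:aut-to-logic} with $M'$ taken to be that automaton's update monoid, with the guarded predicates read over the alphabet extended by the quantified monadic variables, just as you describe. The paper additionally remarks, without proof, that one can even take $M'=\calP(M)$, but the statement as given does not require this.
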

In fact, it can be shown that $M'=\calP(M)$ but we skip it here for the sake of brevity.

%% file: sec7-conclusion.tex
\section{Conclusion}
We have characterised the decidability frontier for the $\emsotwo$ logic over data words with guarded regular predicates recognised by a monoid. The characterisation in the case of guarded regular predicates recognised by semigroups is yet to be done. The class of regular languages recognised by linear bands has not been studied in the literature and needs to be explored further. 
\label{sec:conclusion}